\numberwithin{equation}{section}
\newtheorem{theorem}{Theorem}[section]
\newtheorem{proposition}[theorem]{Proposition}
\newtheorem{lemma}[theorem]{Lemma}
\newtheorem{corollary}[theorem]{Corollary}
\newtheorem{definition}[theorem]{Definition}
\newtheorem{remark}[theorem]{Remark}
\def\cb{{\mathcal B}}
\def\ce{{\mathcal E}}
\def\cw{{\mathcal W}}
\def\bc{{\mathbb C}}
\def\bh{{\mathbb H}}
\def\bn{{\mathbb N}}
\def\br{{\mathbb R}}
\def\frak{\mathfrak}
\def\ga{{\frak A}}
\def\a{\alpha}
\def\b{\beta}
\def\tr{{\rm Tr}}
\def\L{\Lambda}
\def\G{\Gamma}
\def\ce{\mathcal E}
\def\ffi{\varphi}
\def\Tr{\mathrm{Tr}}
\def\<{\langle}
\def\>{\rangle}
\def\1{\mathbf{1}}
\def\cw{\cal W}
\def\cal{\mathcal}
\def\s{\sigma}
\def\bh{\mathbf{h}}
\def\bs{\mathbf{s}}
\def\bg{\mathbf{g}}
\def\id{{\bf 1}\!\!{\rm I}}
\begin{document}

\begin{center}
{\Large {\bf On Quantum Markov Chains on Cayley tree III:\\
 Ising model}}\\[1cm]
\end{center}

\begin{center}
{\large {\sc Luigi Accardi}}\\[2mm]
\textit{Centro Interdisciplinare Vito Volterra\\
II Universit\`{a} di Roma ``Tor Vergata''\\
Via Columbia 2, 00133 Roma, Italy} \\
E-email: {\tt accardi@volterra.uniroma2.it}
\end{center}

\begin{center}
{\large {\sc Farrukh Mukhamedov}}\\[2mm]
\textit{ Department of Computational \& Theoretical Sciences,\\
Faculty of Science, International Islamic University Malaysia,\\
P.O. Box, 141, 25710, Kuantan, Pahang, Malaysia}\\
E-mail: {\tt far75m@yandex.ru, \ farrukh\_m@iiu.edu.my}
\end{center}

\begin{center}
{\large{\sc Mansoor Saburov}}\\[2mm]
\textit{Department of Computational \& Theoretical Sciences,\\
Faculty of Science, International Islamic University Malaysia,\\
P.O. Box, 141, 25710, Kuantan, Pahang, Malaysia}\\
E-mail: {\tt msaburov@gmail.com}
\end{center}

\begin{abstract}
In this paper, we consider the classical Ising model on the Cayley
tree of order $k$ ($k\geq 2$), and show the existence of the phase
transition in the following sense: there exists two quantum Markov
states which are not quasi-equivalent. It turns out that the found
critical temperature coincides with usual critical temperature.
\vskip 0.3cm \noindent {\it Mathematics Subject Classification}:
46L53, 60J99, 46L60, 60G50, 82B10, 81Q10, 94A17.\\
{\it Key words}: Quantum Markov chain; Cayley tree; Ising model;
phase transition.
\end{abstract}

\section{Introduction }\label{intr}

The present paper is a continuation of our previous works
\cite{AMSa,AMSaII}. In \cite{AMSa} we have introduced forward type
of quantum Markov chains (QMC) defined on the Cayley tree was
studied\footnote{We remark that backward quantum Markov chains on
lattices and trees have been investigated in \cite{[AcFi01a],AOM}.}.
It was provided a construction of such kind of chains, in which a
QMC is defined as a weak limit of finite volume states with boundary
conditions\footnote{Note that similar kind of constructions of QMC
on integer lattice were known in the literature (see for example
\cite{Ac87}-\cite{AcLi}.}. By means of the provided construction we
proved uniqueness of QMC associated with $XY$-model on a Cayley tree
of order two. Furthermore, in \cite{AMSaII} we have defined a notion
of phase transition in QMC scheme. Namely, such a notion is based on
the quasi-equivalence of QMC. Therefore, such a phase transition is
purely noncommutative. We point out that phase transitions in a
quantum setting play an important role to understand quantum spin
systems (see for example \cite{ArE,BCS,FS}). Furthermore, in
\cite{AMSaII} it was established the existence of the phase
transition for $XY$-model on the Cayley tree of order three. From
the provided definition of the phase transition, it naturally
appears a question: would this definition be compatible with
well-known definitions of phase transitions for lattice models (see
\cite{Bax,Geor,Roz2}).

In this paper, we consider the classical Ising model on the Cayley
tree of order $k$ ($k\geq 2$), and show the existence of the phase
transition in sense of \cite{AMSaII}. It turns out that the found
critical temperature coincides with usual one (see
\cite{Geor,Roz2}). This means that the our definition of the phase
transition is compatible with known ones. Note that very recently,
in \cite{Gond} other new kind of phase transitions have been
observed, for the classical Ising model on Caley tree. We stress
that noncommutative approach to the phase transition for the Ising
model was studied in \cite{ArE}. But our way to define the phase
transition is bit different from the mentioned paper. In general,
phase transition for quantum systems is defined as the existence of
two distinct KMS-state corresponding to the model (see \cite{BR2}
for review). In our approach we follow the same definition but we
require the existence of two non-quasi equivalent QMC associated
with a model. We point out that, in general, two distinct QMC may
generate in GNS-representation the same type of von Neumann algebras
\cite{MR2}.

The paper is organized as follows. In section 2 we give necessary
definitions and construction of QMC. In section 3 we consider Ising
model and formulate the main result of the paper. In section 4 we
derive a dynamical system related to our model. In section 5
asymptotic behavior of the dynamical system will be studied. In
section 6 we prove the diaganalizability of the QMC. In section 7
the first part of the main theorem will be proved. In the final
section 8, it will be established the existence of the phase
transition.

\section{Construction of QMC on the Cayley tree}\label{dfcayley}

In this section we recall needed definitions which will be used in
the paper (see \cite{AMSa,AMSaII} for more information).

Let $\Gamma^k = (L,E)$ be a semi-infinite Cayley tree of order
$k\geq 1$ with the root $x^0$ (i.e. each vertex of $\Gamma^k$ has
exactly $k+1$ edges, except for the root $x^0$, which has $k$
edges). Here $L$ is the set of vertices and $E$ is the set of edges.
If the vertices $x$ and $y$ are connected by an edge, then they are
called {\it nearest neighbors} and denoted by $<x,y>$. A collection
of the pairs $<x,x_1>,\dots,<x_{d-1},y>$ is called a {\it path} from
the point $x$ to the point $y$. The distance $d(x,y), x,y\in V$, on
the Cayley tree, is the length of the shortest path from $x$ to $y$.
One can define a coordinate structure in $\G^k$ as follows:  every
vertex $x$ (except for $x^0$) of $\G^k$ has coordinates
$(i_1,\dots,i_n)$, here $i_m\in\{1,\dots,k\}$, $1\leq m\leq n$ and
for the vertex $x^0$ we put $(0)$.  Namely, the symbol $(0)$
constitutes level 0, and the sites $(i_1,\dots,i_n)$ form level $n$
( i.e. $d(x^0,x)=n$) of the lattice (see Fig. 1).


Let us set
\[
W_n = \{ x\in L \, : \, d(x,x_0) = n\} , \qquad \Lambda_n =
\bigcup_{k=0}^n W_k, \qquad  \L_{[n,m]}=\bigcup_{k=n}^mW_k, \ (n<m)
\]
\[
E_n = \big\{ <x,y> \in E \, : \, x,y \in \Lambda_n\big\}, \qquad
\Lambda_n^c = \bigcup_{k=n}^\infty W_k
\]

Now us rewrite the elements of $W_n$ in the following order, i.e.
\begin{eqnarray*}
\overrightarrow{W_n}:=\left(x^{(1)}_{W_n},x^{(2)}_{W_n},\cdots,x^{(|W_n|)}_{W_n}\right),\quad
\overleftarrow{W_n}:=\left(x^{(|W_n|)}_{W_n},x^{(|W_n|-1)}_{W_n},\cdots,
x^{(1)}_{W_n}\right).
\end{eqnarray*}
Note that $|W_n|=k^n$. Vertices
$x^{(1)}_{W_n},x^{(2)}_{W_n},\cdots,x^{(|W_n|)}_{W_n}$ of $W_n$ can
be represented in terms of the coordinate system as follows
\begin{eqnarray*}
&&x^{(1)}_{W_n}=(1,1,\cdots,1,1), \quad x^{(2)}_{W_n}=(1,1,\cdots,1,2), \ \ \cdots \quad x^{(k)}_{W_n}=(1,1,\cdots,1,k,),\\
&&x^{(k+1)}_{W_n}=(1,1,\cdots,2,1), \quad
x^{(2)}_{W_n}=(1,1,\cdots,2,2), \ \ \cdots \quad
x^{(2k)}_{W_n}=(1,1,\cdots,2,k),
\end{eqnarray*}
\[\vdots\]
\begin{eqnarray*}
&&x^{(|W_n|-k+1)}_{W_n}=(k,k,,\cdots,k,1), \
x^{(|W_n|-k+2)}_{W_n}=(k,k,\cdots,k,2),\ \ \cdots
x^{|W_n|}_{W_n}=(k,k,\cdots,k,k).
\end{eqnarray*}

For a given vertex $x,$ we shall use the following notation for the
set of \textit{direct successors} of $x$:
\begin{eqnarray*}
\overrightarrow{S(x)}:=\left((x,1),(x,2),\cdots (x,k)\right),\quad
\overleftarrow{S(x)}:=\left((x,k),(x,k-1),\cdots (x,1)\right).
\end{eqnarray*}

In what follows, for the sake of simplicity, we will use notation
$i\in \overrightarrow{S(x)}$ (resp. $i\in \overleftarrow{S(x)}$
instead of $(x,i)\in \overrightarrow{S(x)}$ (resp. $(x,i)\in
\overleftarrow{S(x)}$).

The algebra of observables $\cb_x$ for any single site $x\in L$ will
be taken as the algebra $M_d$ of the complex $d\times d$ matrices.
The algebra of observables localized in the finite volume $\L\subset
L$ is then given by $\cb_\L=\bigotimes\limits_{x\in\L}\cb_x$. As
usual if $\L^1\subset\L^2\subset L$, then $\cb_{\L^1}$ is identified
as a subalgebra of $\cb_{\L^2}$ by tensoring with unit matrices on
the sites $x\in\L^2\setminus\L^1$. Note that, in the sequel, by
$\cb_{\L,+}$ we denote the positive part of $\cb_\L$. The full
algebra $\cb_L$ of the tree is obtained in the usual manner by an
inductive limit
$$
\cb_L=\overline{\bigcup\limits_{\L_n}\cb_{\L_n}}.
$$


Assume that for each edge $<x,y>\in E$ of the tree an operator
$K_{<x,y>}\in {\cal B}_{\{x,y\}}$ is assigned. We would like to
define a state on $\cb_{\L_n}$ with boundary conditions $w_{0}\in
{\cal B}_{(0),+}$ and $\bh=\{h_x\in {\cal B}_{x,+}\}_{x\in L}$.

Let us denote
\begin{eqnarray}
K_{[m-1,m]}&:=&\prod_{x\in
\overrightarrow{W}_{m-1}}\prod_{y\in \overrightarrow{S(x)}}K_{<x,y>},\\
\bh^{1/2}_n&:=&\prod_{x\in \overrightarrow{W}_n}h_x^{1/2}, \quad \quad \bh_n:=\bh^{1/2}_n(\bh^{1/2}_n)^{*},\\
K_n&:=&w_0^{1/2}K_{[0,1]}K_{[1,2]}\cdots K_{[n-1,n]}\bh^{1/2}_n,\\
\label{w_n}{\cw}_{n]}&:=&K_nK_n^{*},
\end{eqnarray}
It is clear that ${\cw}_{n]}$ is positive.

In what follows, by $\tr_{\L}:\cb_L\to\cb_{\L}$ we mean normalized
partial trace (i.e. $\tr_{\L}(\id_{L})=\id_{\L}$, here
$\id_{\L}=\bigotimes\limits_{y\in \L}\id$), for any
$\Lambda\subseteq_{\text{fin}}L$. For the sake of shortness we put
$\tr_{n]} := \tr_{\Lambda_n}$.

Let us define a positive functional $\ffi^{(n,f)}_{w_0,\bh}$ on
$\cb_{\Lambda_n}$ by
\begin{eqnarray}\label{ffi-ff}
\ffi^{(n,f)}_{w_0,\bh}(a)=\tr(\cw_{n+1]}(a\otimes\id_{W_{n+1}})),
\end{eqnarray}
for every $a\in \cb_{\Lambda_n}$. Note that here, $\tr$ is a
normalized trace on ${\cal B}_L$ (i.e. $\tr(\id_L)=1$).

To get an infinite-volume state $\ffi^{(f)}$ on $\cb_L$  such
that $\ffi^{(f)}\lceil_{\cb_{\L_n}}=\ffi^{(n,f)}_{w_0,\bh}$, we
need to impose some constrains to the boundary conditions
$\big\{w_0,\bh\big\}$ so that the functionals
$\{\ffi^{(n,f)}_{w_0,\bh}\}$ satisfy the compatibility condition,
i.e.
\begin{eqnarray}\label{compatibility}
\ffi^{(n+1,f)}_{w_0,\bh}\lceil_{\cb_{\L_n}}=\ffi^{(n,f)}_{w_0,\bh}.
\end{eqnarray}

\begin{theorem}[\cite{AMSa}]\label{compa} Let the boundary conditions $w_{0}\in {\cal
B}_{(0),+}$ and ${\bh}=\{h_x\in {\cal B}_{x,+}\}_{x\in L}$ satisfy
the following conditions:
\begin{eqnarray}\label{eq1}
&& \tr ( w_0 h_0 ) =1 \\
\label{eq2}
&&\tr_{x]}\left[\prod_{y\in \overrightarrow{S(x)}}K_{<x,y>}\prod_{y\in \overrightarrow{S(x)}}h^{(y)}\prod_{y\in \overleftarrow{S(x)}}K^{*}_{<x,y>}\right]=h^{(x)}
\ \ \textrm{for every} \ \  x\in L.
\end{eqnarray}
Then the functionals $\{\ffi^{(n,f)}_{w_0,\bh}\}$ satisfy the
compatibility condition \eqref{compatibility}. Moreover, there is a
unique forward quantum $d$-Markov chain \footnote{For the definition
of quantum Markov chain we refer \cite{AMSa}.}
$\ffi^{(b)}_{w_0,{\bh}}$ on $\cb_L$ such that
$\ffi^{(f)}_{w_0,{\bh}}=w-\lim_{n\to\infty}\ffi^{(n,f)}_{w_0,\bh}$.
\end{theorem}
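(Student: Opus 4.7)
The plan is to treat the three claims of the theorem in turn: each $\ffi^{(n,f)}_{w_0,\bh}$ is a state on $\cb_{\L_n}$, the compatibility condition \eqref{compatibility} holds, and the coherent family extends uniquely to a state $\ffi^{(f)}_{w_0,\bh}$ on $\cb_L$. Positivity of the finite-volume functionals is immediate since $\cw_{n+1]} = K_{n+1}K_{n+1}^* \geq 0$; normalization will fall out of the compatibility computation together with \eqref{eq1}.

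The heart of the proof is compatibility. Fix $n$ and $a \in \cb_{\L_n}$, and view $a$ as an element of $\cb_{\L_{n+1}}$ by tensoring with $\id_{W_{n+1}}$. Then
\begin{eqnarray*}
\ffi^{(n+1,f)}_{w_0,\bh}(a) = \tr\bigl(\cw_{n+2]}\,(a \otimes \id_{W_{n+1}} \otimes \id_{W_{n+2}})\bigr).
\end{eqnarray*}
Neither $a$ nor the factors $w_0^{1/2} K_{[0,1]}\cdots K_{[n,n+1]}$ entering $K_{n+2}$ act on sites in $W_{n+2}$, so I would factor $\tr = \tr_{\L_{n+1}} \circ \tr_{W_{n+2}}$ and push the inner partial trace through to
\begin{eqnarray*}
\tr_{W_{n+2}}\bigl(K_{[n+1,n+2]}\,\bh_{n+2}\,K_{[n+1,n+2]}^*\bigr).
\end{eqnarray*}
By the definitions of $K_{[n+1,n+2]}$ and $\bh_{n+2}$, this operator is a product, indexed by $x \in \overrightarrow{W}_{n+1}$, of factors that act on disjoint algebras $\cb_{\{x\} \cup S(x)}$ and hence commute; each factor has exactly the form appearing on the left-hand side of \eqref{eq2}. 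Applying \eqref{eq2} at every $x \in W_{n+1}$ collapses the above partial trace to $\prod_{x \in \overrightarrow{W}_{n+1}} h_x = \bh_{n+1}$. Substituting this back into $\cw_{n+2]}$ yields $\tr(\cw_{n+1]}(a \otimes \id_{W_{n+1}})) = \ffi^{(n,f)}_{w_0,\bh}(a)$, which is \eqref{compatibility}. Specialising to $a = \id_{\L_n}$ and iterating the telescoping down to level zero, then invoking \eqref{eq1}, yields $\ffi^{(n,f)}_{w_0,\bh}(\id_{\L_n}) = \tr(w_0 h_{x^0}) = 1$, so each $\ffi^{(n,f)}_{w_0,\bh}$ is a state.

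With compatibility in hand, the family $\{\ffi^{(n,f)}_{w_0,\bh}\}$ defines a single positive normalised functional on the dense $*$-subalgebra $\bigcup_n \cb_{\L_n}$ of $\cb_L$; since each finite-volume state has norm one, this functional extends uniquely by continuity to a state $\ffi^{(f)}_{w_0,\bh}$ on $\cb_L$, which by construction is the weak limit of the $\ffi^{(n,f)}_{w_0,\bh}$. That $\ffi^{(f)}_{w_0,\bh}$ is a forward quantum Markov chain in the sense of \cite{AMSa} is built into the construction of the $K_n$ as an ordered product of edge transition operators, so the Markov property needs no separate verification beyond a reference to that construction. I expect the main technical obstacle to be the bookkeeping in the partial-trace step: one must track the prescribed forward versus backward orderings in the products of $K_{<x,y>}$ and $K_{<x,y>}^*$, and verify that the factors indexed by distinct $x \in W_{n+1}$ really commute — this is precisely what makes \eqref{eq2} a \emph{local} condition at each vertex rather than a global one.
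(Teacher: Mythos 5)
Your argument is correct and is essentially the standard one: the paper itself gives no proof of Theorem \ref{compa} (it is imported from \cite{AMSa}), and your telescoping computation --- pulling the partial trace over $W_{n+2}$ through the factors supported in $\cb_{\L_{n+1}}$, factorizing $K_{[n+1,n+2]}\bh_{n+2}K^*_{[n+1,n+2]}$ over the disjoint algebras $\cb_{\{x\}\cup S(x)}$, $x\in W_{n+1}$, and applying \eqref{eq2} at each vertex to recover $\bh_{n+1}$ and hence $\cw_{n+1]}$ --- is exactly the mechanism behind the cited construction. The normalization via \eqref{eq1} and the extension to $\cb_L$ by continuity on the inductive limit are handled correctly as well.
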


From direct calculation we can derive the following

\begin{proposition}\label{state^nwithW_n}
If \eqref{eq1} and \eqref{eq2} are satisfied then one has $\ffi^{(n,f)}_{w_0,\bh}(a)=\tr(\cw_{n]}(a))$ for any $a\in \cb_{\Lambda_n}$.
\end{proposition}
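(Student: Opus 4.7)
The plan is to reduce the claim to the single identity $\tr_{\Lambda_n}(\cw_{n+1]}) = \cw_{n]}$, where $\tr_{\Lambda_n}$ is the normalized partial trace $\cb_{\Lambda_{n+1}}\to\cb_{\Lambda_n}$. Given this identity, the conditional-expectation property of $\tr_{\Lambda_n}$ with respect to the normalized trace $\tr$ immediately yields
\[
\tr\big(\cw_{n+1]}(a\otimes\id_{W_{n+1}})\big)=\tr\big(\tr_{\Lambda_n}(\cw_{n+1]})\,a\big)=\tr(\cw_{n]}\,a)
\]
for every $a\in\cb_{\Lambda_n}$, which is precisely the conclusion by the definition \eqref{ffi-ff} of $\ffi^{(n,f)}_{w_0,\bh}$.

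To prove this identity, set $L_n:=w_0^{1/2}K_{[0,1]}\cdots K_{[n-1,n]}\in\cb_{\Lambda_n}$, so that $K_{n+1}=L_n\,K_{[n,n+1]}\bh_{n+1}^{1/2}$ and
\[
\cw_{n+1]}=L_n\,\big(K_{[n,n+1]}\bh_{n+1}K_{[n,n+1]}^*\big)\,L_n^*.
\]
Since $L_n,L_n^*\in\cb_{\Lambda_n}$, they pull out of the partial trace, reducing the problem to showing
\[
\tr_{\Lambda_n}\big[K_{[n,n+1]}\bh_{n+1}K_{[n,n+1]}^*\big]=\bh_n;
\]
once this holds, one obtains $\tr_{\Lambda_n}(\cw_{n+1]})=L_n\bh_n L_n^*=K_n K_n^*=\cw_{n]}$, as desired.

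The heart of the argument, and the only point that invokes hypothesis \eqref{eq2}, is the factorization
\[
K_{[n,n+1]}\bh_{n+1}K_{[n,n+1]}^*=\prod_{x\in W_n}A_x,\quad A_x:=\prod_{y\in\overrightarrow{S(x)}}K_{<x,y>}\prod_{y\in\overrightarrow{S(x)}}h_y\prod_{y\in\overleftarrow{S(x)}}K_{<x,y>}^*,
\]
where each $A_x$ is supported on $\{x\}\cup S(x)$ and these supports are pairwise disjoint as $x$ ranges over $W_n$. The disjointness makes the factors commute, so the outer product may be regrouped freely; it also forces the normalized partial trace to factorize as $\prod_{x\in W_n}\tr_{x]}[A_x]$, which by \eqref{eq2} equals $\prod_{x\in W_n}h_x=\bh_n$. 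The only mildly delicate obstacle is verifying that the forward/backward ordering of the $K_{<x,y>}$'s produced by expanding $K_{[n,n+1]}\bh_{n+1}K_{[n,n+1]}^*$ at a fixed $x$ agrees with the convention in \eqref{eq2}, but this is routine because the relevant rearrangements only involve operators anchored at disjoint vertex-sets and therefore commute.
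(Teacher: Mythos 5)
Your proof is correct, and it is essentially the intended argument: the paper offers no proof beyond the remark ``from direct calculation,'' and the calculation you supply --- reducing to $\tr_{\Lambda_n}(\cw_{n+1]})=\cw_{n]}$ via the bimodule property of the normalized partial trace and the factorization of $K_{[n,n+1]}\bh_{n+1}K_{[n,n+1]}^{*}$ into disjointly supported blocks, each handled by \eqref{eq2} --- is precisely the computation underlying the compatibility condition in Theorem \ref{compa}. (As you implicitly note, only \eqref{eq2} is actually needed; \eqref{eq1} enters only for normalization of the limiting state.)
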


In \cite{AMSaII} we have introduced a notion of the phase transition
quantum Markov chains associated with the given family
$\{K_{<x,y>}\}$ of operators. Heuristically, such aphase transition
means the existence of two distinct QMC for the given
$\{K_{<x,y>}\}$. Let us provide a more exact definition.

\begin{definition}
We say that there exists a phase transition for a family of
operators $\{K_{<x,y>}\}$ if \eqref{eq1}, \eqref{eq2} have at least
two $(u_0,\{h_x\}_{x\in L})$ and $(v_0,\{s_x\}_{x\in L})$ solutions
such that the corresponding quantum $d$-Markov chains
$\ffi_{u_0,\bh}$ and $\ffi_{v_0,\bs}$ are not quasi equivalent.
Otherwise, we say there is no phase transition.
\end{definition}

In \cite{AMSaII} we have established the existence of the phase
transition for $XY$-model on the Caylay tree of order three.


\section{QMC associated with Ising model and main results}\label{exam1}

In this section, we consider the Ising model and formulate the main
results of the paper. In what follows, we consider a semi-infinite
Cayley tree $\G^k=(L,E)$ of order $k$. Our starting $C^{*}$-algebra
is the same $\cb_L$ but with $\cb_{x}=M_{2}(\bc)$ for $x\in L$. By
$\s_x^{(u)},\s_y^{(u)},\s_z^{(u)}$ we denote the Pauli spin
operators at site $u\in L$. Here are they
\begin{equation}\label{pauli}
\id^{(u)}=\left(
          \begin{array}{cc}
            1 & 0 \\
            0 & 1 \\
          \end{array}
        \right), \quad
\s_x^{(u)}= \left(
          \begin{array}{cc}
            0 & 1 \\
            1 & 0 \\
          \end{array}
        \right), \quad
\s_y^{(u)}= \left(
          \begin{array}{cc}
            0 & -i \\
            i & 0 \\
          \end{array}
        \right), \quad
\s_z^{(u)}= \left(
          \begin{array}{cc}
            1 & 0 \\
            0 & -1 \\
          \end{array}
        \right).
\end{equation}

For every edge $<u,v>\in E$ put
\begin{equation}\label{1Kxy1}
K_{<u,v>}=\exp\{\b H_{<u,v>}\}, \ \ \b>0,
\end{equation}
where
\begin{equation}\label{1Hxy1}
H_{<u,v>}=\frac{1}{2}\big(\id^{(u)}\id^{(v)}+\s_{z}^{(u)}\s_{z}^{(v)}\big).
\end{equation}

Such kind of Hamiltonian is called {\it Ising model} per edge
$<x,y>$.

Now taking into account the following equalities
\begin{eqnarray*}\label{1Hxym}
&&H_{<u,v>}^{m}=H_{<u,v>}=\frac{1}{2}\big(\id^{(u)}\id^{(v)}+\s_{z}^{(u)}\s_{z}^{(v)}\big),\
\end{eqnarray*}
one finds
\begin{eqnarray}\label{K<u,v>}
K_{<u,v>}=\id^{(u)}\id^{(v)}+(\exp\beta-1)H_{<u,v>}.
\end{eqnarray}
It follows from \eqref{1Hxy1} that
\begin{eqnarray}\label{K<u,v>K_0K_3}
K_{<u,v>}=K_0\id^{(u)}\id^{(v)}+K_3\s_{z}^{(u)}\s_{z}^{(v)}.
\end{eqnarray}
where, $K_0=\frac{\exp\beta+1}{2},$ $K_3=\frac{\exp\beta-1}{2}.$

The main result of the present paper concerns the existence of the
phase transition for the model \eqref{1Kxy1}. Namely, we have the following result.

\begin{theorem}\label{main} Let $\{K_{<x,y>}\}$ be given by \eqref{1Kxy1}
on the Cayley tree of order $k\geq 2$ and $\theta=\exp\{2\beta\},$
$\beta>0$. Then the following assertions hold:
\begin{enumerate}
\item[(i)] If $\theta\leq\frac{k+1}{k-1}$ then  there is a unique forward
quantum $d$-Markov chain.
\item[(ii)] If $\theta>\frac{k+1}{k-1}$ then  there is a phase transition for a given model,
i.e. there are two distinct forward quantum $d$-Markov chains.
\end{enumerate}
\end{theorem}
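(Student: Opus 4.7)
The plan mirrors the section outline, reducing the operator compatibility equation \eqref{eq2} to a scalar dynamical system. First, I would exploit the diagonal structure of the interaction: by \eqref{K<u,v>K_0K_3}, each $K_{<u,v>}$ lies in the abelian algebra generated by $\id$ and $\s_z^{(u)}\s_z^{(v)}$, so the map on the right-hand side of \eqref{eq2} preserves the set of $\{h_x\}$ that are diagonal in the product $\s_z$-eigenbasis. Writing $h_x = \mathrm{diag}(a_x, b_x)$ and using $K_{<x,y>}^2 = (K_0^2+K_3^2)\id + 2K_0K_3\,\s_z^{(x)}\s_z^{(y)}$, a direct partial-trace computation over the $k$ children yields
$$a_x = \frac{1}{2^k}\prod_{y \in \overrightarrow{S(x)}}(\theta a_y + b_y),\qquad b_x = \frac{1}{2^k}\prod_{y \in \overrightarrow{S(x)}}(a_y + \theta b_y),\qquad \theta := e^{2\beta},$$
and setting $t_x := a_x/b_x$ collapses the system to
$$t_x = \prod_{y \in \overrightarrow{S(x)}} f_\theta(t_y),\qquad f_\theta(t) := \frac{\theta t + 1}{t + \theta},$$
which, under the translation-invariant ansatz, becomes $t = f_\theta(t)^k$ -- precisely the Bleher--Ganikhodjaev recursion for the classical Ising model on $\G^k$.

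Next, I would analyse $g := f_\theta^{\,k}$ on $(0,\infty)$. The point $t=1$ is always a fixed point, and a short calculation gives $g'(1) = k(\theta-1)/(\theta+1)$. Hence $g'(1) < 1$ iff $\theta < (k+1)/(k-1)$, and an elementary monotonicity/convexity argument shows that for $\theta \leq (k+1)/(k-1)$ the point $t=1$ is the only positive fixed point, while for $\theta > (k+1)/(k-1)$ a pitchfork-type bifurcation produces two further fixed points $t_- < 1 < t_+$ with $t_+ t_- = 1$, reflecting the $h \mapsto \s_x h \s_x$ symmetry of the problem.

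Each fixed point then provides, after normalisation by \eqref{eq1}, a pair $(w_0,\{h_x\})$ satisfying the hypotheses of Theorem \ref{compa} and hence a forward QMC. For part (i), when $\theta \leq (k+1)/(k-1)$, the unique positive fixed point $t=1$ yields a single candidate among diagonal solutions; I would then upgrade this to uniqueness among all positive solutions of \eqref{eq2} by showing that the off-diagonal components of $h_x$ are contracted by the recursion, using the strict contractivity $|g'(1)|<1$. For part (ii), I would take the QMCs $\ffi_+$ and $\ffi_-$ associated with $t_+$ and $t_-$. By the diagonalizability result of Section 6, each restricts to a classical Markov measure on the maximal abelian subalgebra generated by $\{\s_z^{(u)}:u\in L\}$, and these restrictions are the two extremal translation-invariant Gibbs measures $\mu_\pm$ of the classical Ising model below critical temperature. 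Since $\mu_+ \perp \mu_-$ (by a standard magnetisation/tail-event argument on the Cayley tree), the associated cyclic vectors generate disjoint subrepresentations, ruling out quasi-equivalence of $\ffi_+$ and $\ffi_-$.

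The main obstacle is the uniqueness step in part (i): the invariance of the diagonal subspace alone does not exclude non-diagonal solutions of \eqref{eq2}, so a contraction-type argument on the full positive cone is needed, and this is precisely where the hypothesis $\theta \leq (k+1)/(k-1)$ is consumed in its sharp form. Once diagonalizability (Section 6) is established, the passage from singularity of $\mu_\pm$ to non-quasi-equivalence of $\ffi_\pm$ reduces to a routine disjointness argument for the corresponding GNS representations.
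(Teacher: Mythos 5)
Your reduction to the scalar recursion $t=f_\theta(t)^k$ with $f_\theta(t)=\frac{\theta t+1}{t+\theta}$ is the same computation the paper performs (the paper simply inverts it into the upward dynamical system \eqref{dynamicalsysf} with ratio map \eqref{gthetaoft}), and your fixed-point count at the threshold $\theta=\frac{k+1}{k-1}$ matches Proposition \ref{numberoffixedpointofg}. For part (ii) your route is genuinely different from the paper's: you restrict the two QMCs to the diagonal subalgebra, identify the restrictions with classical Gibbs measures, and invoke mutual singularity, whereas the paper works directly with the criterion of Theorem \ref{br-q}, evaluating both states on the explicit observables $a_{\sigma_3}^{\Lambda_{N+1}}$ via powers of the $2\times 2$ transfer matrix $\mathbb{A}$ and showing the difference stays bounded away from zero. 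Your route is viable, but note that passing from singularity of the restricted measures to non-quasi-equivalence of the full states still goes through the same Bratteli--Robinson criterion (quasi-equivalence of the full states would force the norm estimate on diagonal observables, hence quasi-equivalence of the restrictions, contradicting $\mu_+\perp\mu_0$); the paper's quantitative computation is essentially a self-contained version of exactly this step.

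The genuine gap is in part (i). A solution of \eqref{eq2} is a whole sequence $(h^{(n)})_n$ satisfying the recursion level by level, not merely a fixed point of it. For $\theta\le\frac{k+1}{k-1}$ the equation admits a one-parameter family of \emph{non-constant} solutions $h^{(n)}_x(\alpha)$ (formula \eqref{solutionofmainstate}, obtained by running the inverted map $g_\theta$ along the invariant line from an arbitrary initial value $\alpha$), and uniqueness of the QMC is \emph{not} the statement that the fixed point is unique: one must show that every member of this family produces the same state, which the paper does by the explicit cancellation of normalization factors in \eqref{uniq}. Your proposal never addresses these non-constant solutions; convergence of orbits to the fixed point does not by itself identify the associated finite-volume functionals. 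Moreover, the two mechanisms you do invoke are off target: the off-diagonal components of $h_x$ are not ``contracted'' --- they are forced to vanish identically by the structure of the partial trace (equation \eqref{a_12a_21}) --- and the strict contractivity $|g'(1)|<1$ fails precisely at the endpoint $\theta=\frac{k+1}{k-1}$, where $g'(1)=1$ and yet uniqueness must still be proved (the paper gets it there from monotonicity, Proposition \ref{inequalityforgtheta}(ii), together with the classification of admissible orbits in Theorem \ref{incasethetalessk1k1}). As written, your argument for (i) would not close at the critical value and does not rule out distinct states arising from distinct admissible orbits.
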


\begin{remark} We point out that in the classical case, for the Ising
model on the Calley tree, the phase transition occurs if
$\theta>\frac{k+1}{k-1}$ (see for example \cite{Geor,Roz2}). This
means that our definition of the phase transition is compatible with
well-known definitions.
\end{remark}

The rest of the paper will be devote to the proof the this theorem.
To do it, we shall use a dynamical system approach, which is
associated with the equations \eqref{eq1},\eqref{eq2}.

\section{A dynamical system related to boundary conditions}

In this section we shall reduce equations \eqref{eq2} to
some dynamical system. Our goal is to describe all solutions
$\bh=\{h_x\}$ of that equation.

Furthermore, we shall assume that $h_u=h_v$ for every $u,v\in W_n$,
$n\in\bn$. Hence, we denote $h_x^{(n)}:=h^{(x)}$ and $h_y^{(n+1)}:=h^{(y)}$ if $x\in W_n$ and $y\in W_{n+1}$. Now
from \eqref{1Kxy1},\eqref{1Hxy1} one can see that
$K_{<u,u>}=K^{*}_{<u,v>}$, therefore, equation \eqref{eq2} can be
rewritten as follows
\begin{eqnarray}\label{Isingstate}
\tr_{x]}\left[\prod_{y\in \overrightarrow{S(x)}}K_{<x,y>}\prod_{y\in \overrightarrow{S(x)}}h_y^{(n+1)}\prod_{y\in \overleftarrow{S(x)}}K_{<x,y>}\right]=h_x^{(n)}
\ \ \textrm{for every} \ \  x\in L.
\end{eqnarray}

Assume that
\begin{equation*}
h_{x}^{(n)}=\left(
          \begin{array}{cc}
            a^{(n)}_{11} & a^{(n)}_{12} \\
            a^{(n)}_{21} & a^{(n)}_{22} \\
          \end{array}
        \right), \quad\quad
h_{y}^{(n+1)}=\left(
          \begin{array}{cc}
            a^{(n+1)}_{11} & a^{(n+1)}_{12} \\
            a^{(n+1)}_{21} & a^{(n+1)}_{22} \\
          \end{array}
        \right).
\end{equation*}

Then after simple calculations, one can reduce \eqref{Isingstate} to
\begin{eqnarray*}
   \left(\frac{a^{(n+1)}_{11}\exp2\beta+a^{(n+1)}_{22}}{2}\right)^k &=& a^{(n)}_{11} \\
   0 &=& a^{(n)}_{12} \\
   0 &=& a^{(n)}_{21} \\
   \left(\frac{a^{(n+1)}_{11}+a^{(n+1)}_{22}\exp2\beta}{2}\right)^k &=& a^{(n)}_{22}
\end{eqnarray*}

 From the last system of equations we obtain
\begin{eqnarray}\label{a_12a_21}
a^{(n)}_{12}=a^{(n)}_{21}=0, \ \ \ \ \forall n\in \bn,
\end{eqnarray}
and
\begin{eqnarray*}
   a^{(n+1)}_{11} &=& \frac{2\exp2\beta\sqrt[k]{a^{(n)}_{11}}-2\sqrt[k]{a^{(n)}_{22}}}{\exp4\beta-1}\\
   a^{(n+1)}_{22} &=& \frac{-2\sqrt[k]{a^{(n)}_{11}}+2\exp2\beta\sqrt[k]{a^{(n)}_{22}}}{\exp4\beta-1}
\end{eqnarray*}

The last equalities allow us to consider a dynamical system
$f:\br^2_{+}\to \br^2_{+}$ given by

\begin{eqnarray}\label{dynamicalsysf}
\left\{ \begin{array}{l}
   x{'} = \dfrac{2\theta\sqrt[k]{x}-2\sqrt[k]{y}}{\theta^2-1}\\
   y{'} = \dfrac{-2\sqrt[k]{x}+2\theta\sqrt[k]{y}}{\theta^2-1}
\end{array}
\right.
\end{eqnarray}
where $\beta>0,$ $\theta=\exp\{2\beta\},$  and $(x{'},y{'})=f(x,y).$
Then one has that $\theta>1.$

\begin{remark}
The dynamical system $f:\br_{+}^2\to\br_{+}^2$ given by
\eqref{dynamicalsysf} is well-defined if and only if $x,y\ge 0$ and
$\frac{1}{\theta^k}y\le x \le \theta^k y .$ In what follows, we will
only consider  the case in which $x>0$ and $y>0$.
\end{remark}

\section{Asymptotical behavior of the dynamical system.}

In this section we shall find fixed points of \eqref{dynamicalsysf}
and prove the absence of periodic points. Moreover, we investigate
an asymptotical behavior it.

It is clear from \eqref{dynamicalsysf} that
$$\frac{x{'}}{y{'}}=\frac{2\theta\sqrt[k]{x}-2\sqrt[k]{y}}{-2\sqrt[k]{x}+2\theta\sqrt[k]{y}}
=\frac{\theta\sqrt[k]{\frac{x}{y}}-1}{\theta-\sqrt[k]{\frac{x}{y}}}.$$

Therefor, let us define the function $g_\theta:\br\to\br$ as follows
\begin{eqnarray}\label{gthetaoft}
g_\theta(t)=\frac{\theta\sqrt[k]{t}-1}{\theta-\sqrt[k]{t}}.
\end{eqnarray}

One can see that the domain of $g_\theta$ is
$\Delta:=[0,\theta^k)\cup(\theta^k,\infty).$

Let us study an asymptotical behavior of the function $g_\theta:\Delta\to\br.$

\begin{proposition}\label{monotonicityofg}
The function $g_\theta:\Delta\to\br$ given by \eqref{gthetaoft} is
strictly increasing.
\end{proposition}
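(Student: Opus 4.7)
The proof is a one-variable calculus check, so my plan is to set up a convenient substitution and then differentiate. First I would write $u = \sqrt[k]{t} = t^{1/k}$, so that on $t \geq 0$ the map $t \mapsto u$ is itself strictly increasing. With this substitution $g_\theta$ becomes the Möbius-type function
\begin{equation*}
h(u) = \frac{\theta u - 1}{\theta - u},
\end{equation*}
whose domain, corresponding to $\Delta$, is $[0,\theta)\cup(\theta,\infty)$. Because $t\mapsto u$ is a strictly increasing bijection between the two components of the respective domains, it suffices to prove that $h$ is strictly increasing on each of $[0,\theta)$ and $(\theta,\infty)$.

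Next I would compute $h'(u)$ by the quotient rule:
\begin{equation*}
h'(u) = \frac{\theta(\theta-u) + (\theta u - 1)}{(\theta-u)^2} = \frac{\theta^2 - 1}{(\theta-u)^2}.
\end{equation*}
Since the hypothesis $\beta>0$ forces $\theta = e^{2\beta} > 1$, one has $\theta^2-1 > 0$, while the denominator is a positive square on each connected component of the domain. Hence $h'(u)>0$ throughout, which together with the chain rule applied to $u = t^{1/k}$ gives strict monotonicity of $g_\theta$ on each of $[0,\theta^k)$ and $(\theta^k,\infty)$.

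There is essentially no obstacle; the only subtlety worth flagging is that $\Delta$ is disconnected, so one should note that $g_\theta(t)\to +\infty$ as $t\uparrow\theta^k$ and $g_\theta(t)\to -\infty$ as $t\downarrow\theta^k$, and accordingly interpret the statement as strict monotonicity on each component of $\Delta$ (which is the form in which it will be used in the sequel to analyze \eqref{dynamicalsysf}).
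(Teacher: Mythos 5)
Your proof is correct and follows essentially the same route as the paper: both reduce the claim to checking that the derivative is positive because $\theta^2-1>0$ when $\theta=e^{2\beta}>1$. Your substitution $u=t^{1/k}$ is only a cosmetic reorganization, though it happens to sidestep a small slip in the paper's computation (which writes $\frac{1}{2\sqrt[k]{t}}$ for $\bigl(t^{1/k}\bigr)'$, valid only for $k=2$, without affecting the sign), and your remark that monotonicity must be read componentwise on the disconnected domain $\Delta$ is a worthwhile precision the paper omits.
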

\begin{proof}
Let us calculate the derivative of the function $g_\theta.$
\begin{eqnarray*}
g_\theta'(t)&=&\frac{\frac{\theta}{2\sqrt[k]{t}}(\theta-\sqrt[k]{t})+\frac{1}{2\sqrt[k]{t}}(\theta\sqrt[k]{t}-1)}{(\theta-\sqrt[k]{t})^2}
=\frac{\theta^2-1}{2\sqrt[k]{t}(\theta-\sqrt[k]{t})^2}.
\end{eqnarray*}
Since $\theta>1,$ hence $g_\theta'(t)>0$ for all $t\in \Delta.$ This
means that the function $g_\theta:\Delta\to\br$ given by
\eqref{gthetaoft} is increasing in its domain $\Delta$.
\end{proof}

\begin{remark} Let $g_\theta:\Delta\to\br$ be a given function by \eqref{gthetaoft}. Then one can easily check that:
\begin{itemize}
\item [$\rm(i)$]\label{periodicpointofg}
The function $g_\theta$ does not
have any $m$ periodic point in its domain $\Delta,$ where $m>1;$
\item [$\rm(ii)$] \label{positivedomainofg}
The function $g_\theta$ is positive if and only if
$t\in\left(\dfrac{1}{\theta^k},\theta^k\right).$
\end{itemize}
\end{remark}

\begin{proposition}\label{numberoffixedpointofg}
Let $g_\theta:D\to\br$ be a function given by \eqref{gthetaoft}.
Then the following assertions hold:
\begin{itemize}
  \item [$\rm(i)$] If $\theta>\frac{k+1}{k-1}$ then it has three fixed points which are $t_1,t_2, t_3$ such that $$\frac{1}{\theta^k}<t_2<t_1=1<t_3<\theta^k;$$
  \item [$\rm(ii)$] If $1<\theta\le \frac{k+1}{k-1}$ then it has a unique fixed point which is $t_1=1.$
\end{itemize}
\end{proposition}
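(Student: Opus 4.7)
The plan is to reduce the fixed-point equation $g_\theta(t) = t$ to a polynomial problem. Setting $s = \sqrt[k]{t}$ and clearing the denominator turns it into $F(s) := s^{k+1} - \theta s^k + \theta s - 1 = 0$ for $s > 0$. Since $F(1) = 0$, polynomial division produces $F(s) = (s-1) G(s)$ with
\[
G(s) = s^k + 1 - (\theta - 1)\bigl(s + s^2 + \cdots + s^{k-1}\bigr).
\]
This already yields the fixed point $t_1 = 1$ in both cases and reduces everything to counting and locating positive roots of $G$.

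Two structural facts about $G$ will drive the rest of the argument. First, its coefficient sign pattern is $+,-,\dots,-,+$, so Descartes' rule of signs gives at most two positive real roots. Second, a direct calculation shows the self-reciprocal identity $s^k G(1/s) = G(s)$, so positive roots come in reciprocal pairs $(s, 1/s)$.

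For case (i), $\theta > \frac{k+1}{k-1}$ forces $G(1) = 2 - (\theta-1)(k-1) < 0$. Combined with $G(0) = 1$ and $G(s) \to +\infty$ as $s \to \infty$, the intermediate value theorem produces at least one positive root in each of $(0,1)$ and $(1,\infty)$, and Descartes upgrades this to exactly two, labeled $s_2 < 1 < s_3$. The reciprocal symmetry then forces $s_2 s_3 = 1$, so $t_2 := s_2^k$ and $t_3 := s_3^k$ satisfy $t_2 t_3 = 1$. To pin down $s_3 < \theta$ (hence $t_3 < \theta^k$ and automatically $t_2 > \theta^{-k}$), I would compute $G(\theta) = \theta^k + 1 - \theta(\theta^{k-1} - 1) = 1 + \theta > 0$ and read off the sign change of $G$ on $(1, \theta)$.

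I expect case (ii) to be the main obstacle: showing $G(s) \ge 0$ for every $s > 0$ when $1 < \theta \le \frac{k+1}{k-1}$. My preferred tool is weighted AM-GM applied with weights $j/k$ and $(k-j)/k$ to $s^k$ and $1$, giving $s^j \le \tfrac{j}{k}s^k + \tfrac{k-j}{k}$ with equality iff $s = 1$. Summing over $j=1,\dots,k-1$ yields $\sum_{j=1}^{k-1} s^j \le \tfrac{k-1}{2}(s^k + 1)$, and multiplying by $\theta - 1 \le \tfrac{2}{k-1}$ closes the estimate to $(\theta-1)\sum_{j=1}^{k-1} s^j \le s^k + 1$, i.e.\ $G(s) \ge 0$. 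Equality in AM-GM forces $s = 1$ and equality in the last step forces $\theta = \frac{k+1}{k-1}$, so in either sub-case the only positive fixed point of $g_\theta$ is $t_1 = 1$. Everything else in the argument is bookkeeping on a self-reciprocal polynomial; the AM-GM estimate is the only non-routine input.
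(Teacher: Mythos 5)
Your proposal is correct, and it proves the proposition by a genuinely different route than the paper. The paper substitutes $x=\theta g_\theta(t)$ to transform the fixed-point equation into the classical form $\frac{x}{\theta^k}=\left(\frac{x+1}{x+\theta^2}\right)^k$ and then simply cites Preston's book for the count and location of its positive solutions; it gives no self-contained root analysis. You instead clear denominators to get $F(s)=s^{k+1}-\theta s^k+\theta s-1$ with $s=\sqrt[k]{t}$, factor out the obvious root $s=1$ to obtain the palindromic cofactor $G(s)=s^k+1-(\theta-1)(s+\cdots+s^{k-1})$, and combine Descartes' rule of signs, the reciprocal symmetry $s^kG(1/s)=G(s)$, the sign evaluations $G(0)=1$, $G(1)=2-(\theta-1)(k-1)$, $G(\theta)=\theta+1$, and a weighted AM--GM for the subcritical case. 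All of these steps check out (in particular $F(\theta)=\theta^2-1\neq 0$, so clearing the denominator introduces no spurious fixed point at $t=\theta^k$, and in case (ii) the degenerate situation $\theta=\frac{k+1}{k-1}$, where $s=1$ becomes a double root of $F$, still yields only the fixed point $t_1=1$). What your approach buys is a complete, elementary, self-contained argument with the explicit bonus identity $t_2t_3=1$; what the paper's approach buys is brevity and an explicit link to the classical tree-Ising recursion, which is precisely what makes the coincidence of the critical value $\frac{k+1}{k-1}$ with the classical one transparent. Either is acceptable; yours could replace the citation entirely.
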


\begin{proof} In order to find all fixed points of $g_\theta$
we should solve the following equation
\begin{eqnarray}\label{equationforfixedpoint}
\frac{\theta\sqrt[k]{t}-1}{\theta-\sqrt[k]{t}}= t.
\end{eqnarray}

Let us denote
$$x=\theta\frac{\theta\sqrt[k]{t}-1}{\theta-\sqrt[k]{t}}.$$ After
some algebraic manipulations the equation
\eqref{equationforfixedpoint} takes the following form
\begin{eqnarray}\label{equationfromPreston}
\frac{x}{\theta^k}=\left(\frac{x+1}{x+\theta^2}\right)^k.
\end{eqnarray}
In this case, we should find all positive solutions of
\eqref{equationfromPreston}. It was shown in \cite{[Pr]} that if
$\theta>\frac{k+1}{k-1}$ then \eqref{equationfromPreston} has three
positive solutions which are $x_1, x_2,x_3$ such that
$0<x_2<x_1=\theta<x_3$ and if $1<\theta\le\frac{k+1}{k-1}$ then the
equation \eqref{equationfromPreston} has a unique solution which is
$x_1=\theta.$ Then the corresponding solutions of
\eqref{equationforfixedpoint} are $t_1,t_2,t_3.$ This completes the
proof.
\end{proof}

\begin{proposition}\label{inequalityforgtheta}
Let $g_\theta:\Delta\to\br $ be a function given by
\eqref{gthetaoft}. Then the following assertions hold:
\begin{itemize}
  \item [$\rm(i)$] Let $\theta>\frac{k+1}{k-1}$. Then one has
\begin{enumerate}
\item[(a1)] $g_{\theta}(t)<t$ for any $t\in
(\frac{1}{\theta^k},t_2)\cup(t_1,t_3)$ and $g_{\theta}(t)>t$ for
any $t\in (t_2,t_1)\cup(t_3,\theta^k);$

\item [(b1)] If $t_0\in (t_2,t_3)$ then the trajectory
$\{g_\theta^{(n)}(t_0)\}_{n=1}^\infty,$ starting from the point
$t_0,$ converges to the fixed point $t_1$ which is equal to one;
  \item [(c1)] If $t_0\in (\frac{1}{\theta^k},t_2)\cup(t_3,\theta^k)$ then the
trajectory $\{g_\theta^{(n)}(t_0)\}_{n=1}^\infty,$ starting from the
point $t_0,$ is finite.
\end{enumerate}

\item [$\rm(ii)$] Let $1<\theta\le \frac{k+1}{k-1}$. Then one has
\begin{enumerate}
\item[(a2)]  $g_{\theta}(t)<t$ for any $t\in (\frac{1}{\theta^k},t_1)$ and
$g_{\theta}(t)>t$ for any $t\in (t_1,\theta^k).$

\item[(b2)] for any initial point $t_0\in
(\frac1{\theta^k},\theta^k),$ the trajectory
$\{g_\theta^{(n)}(t_0)\}_{n=1}^\infty$ starting from  the point
$t_0$ is finite.
\end{enumerate}
\end{itemize}
\end{proposition}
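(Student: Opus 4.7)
The plan is to reduce the inequalities $g_\theta(t)\gtrless t$ to the sign of one auxiliary polynomial in $u=\sqrt[k]{t}$, whose positive roots are exactly the fixed points catalogued in Proposition \ref{numberoffixedpointofg}, and then to extract the orbit statements (b1), (c1), (b2) from elementary monotone-sequence arguments combined with the monotonicity of $g_\theta$ (Proposition \ref{monotonicityofg}).

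First, I would write, for $u=\sqrt[k]{t}\in(0,\theta)$,
\begin{equation*}
g_\theta(t)-t=\frac{\theta u-1-u^{k}(\theta-u)}{\theta-u}=\frac{N(u)}{\theta-u},\qquad N(u):=u^{k+1}-\theta u^{k}+\theta u-1.
\end{equation*}
The denominator is strictly positive on $(0,\theta)$, so the sign of $g_\theta(t)-t$ equals that of $N(u)$. Since $N(1)=0$, I factor $N(u)=(u-1)P(u)$ with $P(u)=u^{k}+(1-\theta)(u^{k-1}+\cdots+u)+1$, noting that $P(0)=1>0$ and that the leading coefficient of $P$ is positive. By Proposition \ref{numberoffixedpointofg}, the positive roots of $N$ are precisely the $u_i=\sqrt[k]{t_i}$: three of them in case (i) and only $u_1=1$ in case (ii); hence $P$ has two positive roots $u_2<u_3$ in case (i) and none in case (ii). Combining this information with the signs of $P$ at $0$ and at $+\infty$ pins down the sign of $P$ on every subinterval of $(0,\theta)$, and multiplying by $\mathrm{sgn}(u-1)$ yields both (a1) and (a2).

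For the orbit claims, set $t_n:=g_\theta^{(n)}(t_0)$. Part (b1): if $t_0\in(t_2,1)$, then (a1) together with the monotonicity of $g_\theta$ show that $(t_n)$ is increasing and stays inside $(t_2,1)$, so it converges to a fixed point of $g_\theta$ in that subinterval, which must be $t_1=1$; the mirror argument on $(1,t_3)$ handles the other half. Part (c1): if $t_0\in(1/\theta^{k},t_2)$, then (a1) and monotonicity force $(t_n)$ to be strictly decreasing, and if it remained in $(1/\theta^{k},t_2)$ for all $n$ it would converge to a fixed point in $[1/\theta^{k},t_2)$, contradicting Proposition \ref{numberoffixedpointofg}; hence it must leave $\Delta$ through $1/\theta^{k}$ after finitely many steps. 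The symmetric argument on $(t_3,\theta^{k})$ finishes (c1), and the same exit argument applied to each of the intervals $(1/\theta^{k},1)$ and $(1,\theta^{k})$ gives (b2).

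The only step that really requires care is the sign analysis: one must be sure that $N$ has no positive roots beyond those listed in Proposition \ref{numberoffixedpointofg}, otherwise the claimed alternation of signs would fail. This is immediate from the statement of Proposition \ref{numberoffixedpointofg}, which exhausts the positive solutions of the fixed-point equation; alternatively, it can be read off directly from Descartes' rule of signs applied to the coefficient pattern $(+,-,0,\dots,0,+,-)$ of $N$, which admits at most three positive roots.
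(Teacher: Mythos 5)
Your proof is correct and follows the same overall strategy as the paper: reduce the pointwise inequalities to the sign of the numerator polynomial in $u=\sqrt[k]{t}$, then get (b1) from monotone bounded orbits converging to a fixed point, and (c1), (b2) from the observation that an infinite orbit trapped in an interval with no fixed point would have to converge to one. The one genuine difference is in how the sign of the numerator is controlled. The paper factors out all three roots at once, writing $g_\theta(t)-t$ as $(u-u_1)(u-u_2)(u-u_3)\phi(u)/(\theta-u)$ and simply \emph{asserting} that the cofactor $\phi$ is positive on $(\frac{1}{\theta^k},\theta^k)$; you instead factor out only $(u-1)$, locate the remaining roots of $P$ via Proposition \ref{numberoffixedpointofg}, and invoke Descartes' rule on the pattern $(+,-,0,\dots,0,+,-)$ of $N(u)=u^{k+1}-\theta u^k+\theta u-1$ to force simplicity of the three roots and hence genuine sign changes. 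This actually supplies the justification the paper omits, so your route is the more complete one. Two small points: in case (ii) at the boundary value $\theta=\frac{k+1}{k-1}$ one has $P(1)=0$ (the root $u=1$ of $N$ becomes triple), so the claim that $P$ has \emph{no} positive root there is not literally true; but since $P\ge 0$ with equality only at $u=1$, the sign of $N$ on $(0,1)\cup(1,\infty)$ and hence (a2) are unaffected. Second, you prove (b2) by the exit argument of (c1), which is the right mechanism; the paper's text refers to "the same argument as (b1)", which appears to be a slip, since (b2) asserts finiteness rather than convergence.
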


\begin{proof} (i) Assume that $\theta >\frac{k+1}{k-1}$. Let us prove (a1).  One
can see that
\begin{eqnarray}\label{comparegoftandt}
g_\theta(t)-t=\frac{(\sqrt[k]{t}-\sqrt[k]{t_1})(\sqrt[k]{t}-\sqrt[k]{t_2})(\sqrt[k]{t}-\sqrt[k]{t_3})}{\theta-\sqrt[k]{t}}\phi(\sqrt[k]{t})
\end{eqnarray}
where $\phi(\sqrt[k]{t})$ is a polynomial of the argument $\sqrt[k]{t}$ and $\phi(\sqrt[k]{t})>0$ for any $t\in(\frac{1}{\theta^k},\theta^k).$

It is clear from \eqref{comparegoftandt} that if $t\in
(\frac{1}{\theta^k},t_2)\cup(t_1,t_3)$ then $g_{\theta}(t)<t$ and
if $t\in (t_2,t_1)\cup(t_3,\theta^k)$ then $g_{\theta}(t)>t.$

(b1) Assume that $t_0\in (t_2,t_3).$ Since the function $g_\theta$
is strictly increasing and $t_1,t_2,t_3$ such that
$\frac{1}{\theta^k}<t_2<t_1<t_3<\theta^k$ are its fixed points, the
segments $(t_2,t_1),$ $(t_1,t_3),$ $(t_2,t_3)$ are invariant w.r.t.
the function $g_\theta$ and $g_{\theta}(t)>0$ for any $t\in
(t_2,t_3).$ Therefore, the trajectory
$\{g_\theta^{(n)}(t_0)\}_{n=1}^\infty,$ starting from the point
$t_0,$ is well-defined.

Without loss any generality, we assume that $t_0\in (t_2,t_1).$
According to (a1) we have $g_{\theta}(t_0)>t_0.$ Since the function
$g_\theta$ is strictly increasing, one finds
$$t_2<t_0<g_\theta(t_0)<g_\theta^{(2)}(t_0)<\cdots<g_\theta^{(n)}(t_0)<\cdots<t_1.$$
Then $\{g_\theta^{(n)}(t_0)\}_{n=1}^\infty$ is a convergent sequence
and its limiting point should be a fixed point which is equal to
$t_1=1$.

Analogously, one can show that if $t_0\in(t_1,t_3)$ then the
trajectory $\{g_\theta^{(n)}(t_0)\}_{n=1}^\infty,$ starting from the
point $t_0,$ is a monotone decreasing sequence on $(t_1,t_3)$ and
its limiting point is a fixed point $t_1=1$.

(c1) Now assume that $t_0\in
(\frac{1}{\theta^k},t_2)\cup(t_3,\theta^k).$

Without loss of any generality we suppose that $t_0\in
(\frac{1}{\theta^k},t_2).$ Let us assume that the trajectory
$\{g_\theta^{(n)}(t_0)\}_{n=1}^\infty$  is well-defined and it has
an infinite number of distinct terms. According to Remark
\ref{positivedomainofg} (ii) all terms of the trajectory should be
inside of the interval $(\frac{1}{\theta^k},\theta^k),$ i.e.,
\begin{eqnarray}\label{ggreatfrac1overtheta}
g_\theta^{(n)}(t_0)>\frac1{\theta^k},
\end{eqnarray} for any $n\in\bn.$

On the other hand, since $g_\theta:\Delta\to\br$ is increasing, due
to (a1) one gets
\begin{eqnarray}\label{decriassequenceofg}
t_2>t_0>g_\theta(t_0)>g_\theta^{(2)}(t_0)>\cdots>g_\theta^{(n)}(t_0)>\cdots>\frac{1}{\theta^k}.
\end{eqnarray}
It follows from \eqref{decriassequenceofg} that the sequence
$\{g_\theta^{(n)}(t_0)\}_{n=1}^\infty$ converges and its limiting
point should be a fixed point which is less than $t_2.$ However, the
function $g_\theta$ does not have any fixed points except
$t_1,t_2,t_3.$ This contradiction shows that the trajectory
$\{g_\theta^{(n)}(t_0)\}_{n=1}^\infty,$ starting from the point
$t_0,$ is finite.

(ii) Let $1<\theta\le \frac{k+1}{k-1}$. Then (a2) is evident. Using
the same argument as (b1) one can prove (b2).

This completes the proof.
\end{proof}

Let us study an asymptotical behavior of the dynamical system
$f:\br_{+}^2\to\br_{+}^2$ given by \eqref{dynamicalsysf}

\begin{proposition}\label{numberoffixedpointofdynsysoff}
Let  $f:\br_{+}^2\to\br_{+}^2$ be a dynamical system given by
\eqref{dynamicalsysf}. Then the following assertions hold:
\begin{itemize}
  \item [$\rm(i)$] If $\theta>\frac{k+1}{k-1}$ then it has three fixed points
which are equal  to $\left(A_i\sqrt[k-1]{B_i}; B_i\sqrt[k-1]{B_i}\right),$ where
$$A_i=\frac{2\theta\sqrt[k]{t_i}-2}{\theta^2-1}, \ \ \ B_i=\frac{2\theta-2\sqrt[k]{t_i}}{\theta^2-1}, \ \ i=1,2,3,$$
  \item [$\rm(ii)$] If $1<\theta\le \frac{k+1}{k-1}$ then it
has one fixed point which is equal to $(A_1\sqrt[k-1]{B_1},B_1\sqrt[k-1]{B_1})$.
\end{itemize}
\end{proposition}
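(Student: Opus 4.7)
The plan is to reduce the two-dimensional fixed-point problem to the one-dimensional analysis of $g_\theta$ already carried out in Proposition \ref{numberoffixedpointofg}. The key observation, which was in fact the motivation for introducing $g_\theta$ in the first place, is that the ratio transforms autonomously: if $(x',y')=f(x,y)$ with $y>0$, then $x'/y'=g_\theta(x/y)$. Hence any fixed point $(x_*,y_*)\in\br_+^2$ of $f$ must have its ratio $t_*:=x_*/y_*$ satisfying $t_*=g_\theta(t_*)$, so $t_*$ is one of the points $t_1,t_2,t_3$ (when $\theta>(k+1)/(k-1)$) or just $t_1=1$ (when $1<\theta\le(k+1)/(k-1)$).

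Next, for each such $t_i$ I would substitute $x_*=t_i y_*$ into the second component of the fixed-point equation,
\[
y_*=\frac{-2\sqrt[k]{x_*}+2\theta\sqrt[k]{y_*}}{\theta^2-1}=\frac{2(\theta-\sqrt[k]{t_i})}{\theta^2-1}\,y_*^{1/k}=B_i\,y_*^{1/k}.
\]
Solving for $y_*>0$ gives $y_*^{(k-1)/k}=B_i$, hence $y_*=B_i\sqrt[k-1]{B_i}$, and then $x_*=t_i y_*$ (equivalently, substituting into the first component) yields $x_*=A_i\sqrt[k-1]{B_i}$. This is exactly the form asserted in the statement. Conversely, any point of this form is a fixed point of $f$: one only has to check that the scalar identity $y_*=B_i y_*^{1/k}$ holds (which does by construction) and that the induced ratio $A_i/B_i$ equals $g_\theta(t_i)=t_i$, so that the first coordinate equation is automatic.

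To finish, I would verify that all candidate points lie in $\br_+^2$ and are pairwise distinct. Since each $t_i$ belongs to $(1/\theta^k,\theta^k)$ by Proposition \ref{numberoffixedpointofg}, both $A_i=\frac{2(\theta\sqrt[k]{t_i}-1)}{\theta^2-1}$ and $B_i=\frac{2(\theta-\sqrt[k]{t_i})}{\theta^2-1}$ are strictly positive, so $(A_i\sqrt[k-1]{B_i},B_i\sqrt[k-1]{B_i})\in\br_+^2$; moreover the ratios $A_i/B_i=t_i$ are distinct, so the fixed points themselves are distinct. Counting yields three fixed points in case (i) and a single one in case (ii).

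The computations here are essentially routine; the conceptual work has already been done in Propositions \ref{monotonicityofg} and \ref{numberoffixedpointofg}. The only minor subtlety is to note that the scalar reduction $y_*=B_i y_*^{1/k}$ has, within $(0,\infty)$, the unique positive solution $y_*=B_i\sqrt[k-1]{B_i}$ (the other solution $y_*=0$ is excluded because $f$ has been restricted to strictly positive coordinates). I therefore do not anticipate a genuine obstacle; the main care required is simply bookkeeping the correspondence $t_i\leftrightarrow (A_i\sqrt[k-1]{B_i},B_i\sqrt[k-1]{B_i})$ and checking positivity.
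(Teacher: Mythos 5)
Your proof is correct and follows essentially the same route as the paper: reduce the fixed-point equation to $g_\theta(t)=t$ for the ratio $t=x/y$, invoke Proposition \ref{numberoffixedpointofg}, and then recover the coordinates by the ``elementary calculation'' $y_*=B_i y_*^{1/k}$ (which you spell out more explicitly than the paper does). No issues.
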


\begin{proof} From \eqref{dynamicalsysf} we immediately find
\begin{eqnarray}\label{ratioofx'andy'}
\frac{x{'}}{y{'}}&=&\frac{\theta\sqrt[k]{x}-\sqrt[k]{y}}{-\sqrt[k]{x}+\theta\sqrt[k]{y}}=
\frac{\theta\sqrt[k]{\frac{x}{y}}-1}{-\sqrt[k]{\frac{x}{y}}+\theta}.
\end{eqnarray}
Hence, fixed points of $f:\br_{+}^2\to\br_{+}^2$ should satisfy the
following equation
\begin{eqnarray}\label{ratioofxandy}
\frac{x}{y}&=&\frac{\theta\sqrt[k]{\frac{x}{y}}-1}{\theta-\sqrt[k]{\frac{x}{y}}}
\end{eqnarray}

Denote $t=\frac{x}{y}$. Then \eqref{ratioofxandy} takes the form
$g_\theta(t)=t.$ Consequently, Proposition
\ref{numberoffixedpointofg} implies that if $\theta>\frac{k+1}{k-1}$
then the function $g_\theta$ has three fixed points which are equal
to $t_1,t_2,t_3$ and if $1<\theta\le \frac{k+1}{k-1}$ then the
function $g_\theta$ has one fixed point which is equal to $t_1$.
Therefore, we have that if $\theta>\frac{k+1}{k-1}$ then
$$\frac{x}{y}=t_i, \ \ \ i=1,2,3$$
and if $1<\theta\le \frac{k+1}{k-1}$
then $$\frac{x}{y}=t_1.$$
After elemental calculation one can see that if $\theta>\frac{k+1}{k-1}$ then the
dynamical system $f:\br_{+}^2\to\br_{+}^2$ has three fixed points
which are equal  to $(A_i\sqrt[k-1]{B_i}; B_i\sqrt[k-1]{B_i}),$ where
$$A_i=\frac{2\theta\sqrt[k]{t_i}-2}{\theta^2-1}, \ \ \ B_i=\frac{2\theta-2\sqrt[k]{t_i}}{\theta^2-1}, \ \ i=1,2,3,$$
and if $1<\theta\le \frac{k+1}{k-1}$ then it has one fixed point which is equal to $(A_1\sqrt[k-1]{B_1}; B_1\sqrt[k-1]{B_1})$.
\end{proof}

\begin{proposition}
Let $f:\br_{+}^2\to\br_{+}^2$ be  a dynamical system given by
\eqref{dynamicalsysf}. Then the following assertions hold true:
\begin{itemize}
  \item [$\rm(i)$] If $\theta>\frac{k+1}{k-1}$ then it has three invariant
semi-lines $l_i$ which are defined by $y=\frac{1}{t_{i}}x,$ $i=1,2,3$
  \item [$\rm(ii)$] If $1<\theta\le \frac{k+1}{k-1}$ then
it has one invariant semi-line $l_1$ which is defined by
$y=\frac{1}{t_{1}}x.$
\end{itemize}
\end{proposition}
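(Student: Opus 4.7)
The plan is to reduce the classification of invariant semi-lines through the origin to the fixed-point equation $g_\theta(t)=t$, which has already been completely resolved in Proposition~\ref{numberoffixedpointofg}. The essential observation is the projective identity \eqref{ratioofx'andy'}, which asserts that for every admissible $(x,y)$ in the open cone $\{1/\theta^k < x/y < \theta^k\}$ one has
\[
\frac{x'}{y'} \;=\; g_\theta\!\left(\frac{x}{y}\right).
\]
In particular $f$ preserves the ratio $x/y$ if and only if that ratio is a fixed point of $g_\theta$.

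First I would parametrize the candidate semi-lines by their inverse slope, setting $l_t := \{(x,y)\in\br_{+}^2 : x = ty,\ y>0\}$ for $t>0$. From the projective identity, $f(l_t)\subseteq l_{g_\theta(t)}$ whenever $l_t$ is contained in the admissible cone. Consequently $l_t$ is $f$-invariant if and only if $g_\theta(t)=t$, which establishes a bijection between invariant semi-lines and fixed points of $g_\theta$ lying in $(1/\theta^k,\theta^k)$.

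Next I would verify admissibility, which amounts to checking two things at any fixed point $t_i$: that the denominator $\theta-\sqrt[k]{t_i}$ in \eqref{gthetaoft} does not vanish, and that $x'>0$ and $y'>0$ along $l_{t_i}$. Both are automatic from Proposition~\ref{numberoffixedpointofg}, which localizes every fixed point strictly inside $(1/\theta^k,\theta^k)$; hence $\sqrt[k]{t_i}<\theta$ and, by the computation used to derive \eqref{dynamicalsysf}, the images of $l_{t_i}$ remain in $\br_{+}^2$. Thus $f$ genuinely maps each such $l_{t_i}$ into itself.

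The proposition now follows immediately from Proposition~\ref{numberoffixedpointofg}: for $\theta>(k+1)/(k-1)$ the three fixed points $t_1,t_2,t_3$ produce the three invariant semi-lines $y=x/t_i$, while for $1<\theta\le (k+1)/(k-1)$ only the fixed point $t_1=1$ survives, giving the single invariant line $y=x$. There is no genuine obstacle here; the only point requiring explicit verification is the admissibility of the fixed points as slopes, and this is built into the localization already established in the previous sections.
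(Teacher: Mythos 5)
Your argument is correct and follows the same route as the paper: both reduce the classification of invariant semi-lines to the fixed-point equation $g_\theta(t)=t$ via the ratio identity \eqref{ratioofx'andy'} and then invoke Proposition \ref{numberoffixedpointofg}. Your additional verification that each fixed point lies strictly inside $(1/\theta^k,\theta^k)$, so the semi-lines stay in the admissible cone, is a welcome refinement the paper leaves implicit.
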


\begin{proof} It follows from \eqref{ratioofx'andy'} that if $t^{*}$ is a fixed
point of the function $g_\theta,$ then $\dfrac{x}{y}=t^{*}$ yields
$\dfrac{x{'}}{y{'}}=t^{*}.$ Therefore,  if $\theta>\frac{k+1}{k-1}$
then the dynamical system $f:\br_{+}^2\to\br_{+}^2$ has three
invariant semi-lines given by $y=\frac{1}{t_{i}}x,$ $i=1,2,3$ and if
$1<\theta\le \frac{k+1}{k-1}$ then it has one invariant semi-line
defined by $y=\frac{1}{t_{1}}x.$
\end{proof}

\begin{theorem}\label{incasethetagratk1k1}
Let $f:\br_{+}^2\to\br_{+}^2$ a dynamical system  be given by
\eqref{dynamicalsysf} and $\theta>\frac{k+1}{k-1}.$ Then the following assertions hold true:
\begin{itemize}
  \item[\rm(i)] If an initial point $(x^0,y^0)$ belongs to an invariant semi-line $l_i$
(where $i=1,2,3$) of the dynamical system $f:\br_{+}^2\to\br_{+}^2$
then the trajectory $\{f^{(n)}(x^0,y^0)\}_{n=1}^\infty,$ starting
from the point $(x^0,y^0),$ converges to a fixed point
$(A_i\sqrt[k-1]{B_i}; B_i\sqrt[k-1]{B_i})$ which belongs to an invariant line $l_i;$
  \item[\rm(ii)] If an initial point $(x^0,y^0)$ satisfies the following condition
$$\frac{x^0}{y^0}\in(t_2,t_1)\cup(t_1,t_3)$$ then the trajectory
$\{f^{(n)}(x^0,y^0)\}_{n=1}^\infty,$ starting from the point
$(x^0,y^0),$ converges to a fixed point $(A_1\sqrt[k-1]{B_1}; B_1\sqrt[k-1]{B_1})$ which
belongs to an invariant semi-line $l_1.$
  \item[\rm(iii)]If an initial point $(x^0,y^0)$ satisfies the following condition
$$\frac{x^0}{y^0}\in\left(\frac{1}{\theta^k},t_2\right)\cup(t_3,\theta^k)$$ then the trajectory
$\{f^{(n)}(x^0,y^0)\}_{n=1}^\infty,$ starting from the point
$(x^0,y^0),$ is finite.
\end{itemize}
\end{theorem}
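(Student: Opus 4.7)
The plan is to decompose the planar dynamics into a scalar ``angular'' coordinate $r := x/y$ and a scalar ``radial'' coordinate $y$. A direct computation from \eqref{dynamicalsysf} shows that if $(x',y')=f(x,y)$, then $x'/y'=g_\theta(x/y)$; hence, writing $(x_n,y_n)=f^n(x^0,y^0)$ and $r_n := x_n/y_n$, one has $r_n=g_\theta^n(r_0)$ with $r_0 = x^0/y^0$. Factoring $\sqrt[k]{y}$ out of the equation for $y'$ in \eqref{dynamicalsysf} yields
\begin{equation*}
y_{n+1} \;=\; \psi(r_n)\,y_n^{1/k}, \qquad \psi(r) := \frac{2(\theta-\sqrt[k]{r})}{\theta^2-1},
\end{equation*}
and one checks that $\psi(t_i)=B_i$, together with $A_i = t_i B_i$ (which is just the relation $A_i/B_i = g_\theta(t_i) = t_i$).

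For part (i), $r_0 = t_i$ forces $r_n\equiv t_i$, so the radial recursion collapses to the autonomous 1-D iteration $y_{n+1}=B_i\,y_n^{1/k}$. Passing to $Y_n := \log y_n$ converts this into the affine contraction $Y_{n+1}=\log B_i + Y_n/k$, with contraction rate $1/k<1$ since $k\geq 2$, whose unique fixed point is $\tfrac{k}{k-1}\log B_i$. Therefore $y_n \to B_i^{k/(k-1)}=B_i\sqrt[k-1]{B_i}$, and consequently $x_n = t_i y_n \to t_i B_i \sqrt[k-1]{B_i} = A_i\sqrt[k-1]{B_i}$, which is the claimed fixed point on $l_i$.

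For part (ii), Proposition \ref{inequalityforgtheta}(b1) applied to $r_0\in(t_2,t_1)\cup(t_1,t_3)\subset(t_2,t_3)$ gives $r_n\to t_1=1$, whence $\psi(r_n)\to \psi(1)=B_1$. The radial recursion now becomes the \emph{non-autonomous} affine iteration $Y_{n+1}=\log\psi(r_n)+Y_n/k$. Setting $Y^* := \tfrac{k}{k-1}\log B_1$, subtraction yields
\begin{equation*}
|Y_{n+1}-Y^*| \;\leq\; |\log\psi(r_n)-\log B_1| + \tfrac{1}{k}\,|Y_n-Y^*|,
\end{equation*}
and since the first term on the right tends to $0$ while the contraction rate $1/k$ is strictly less than $1$, a standard perturbation-of-contraction argument (take $n$ large enough so that $|\log\psi(r_n)-\log B_1|<\varepsilon(1-1/k)$, then iterate) forces $Y_n\to Y^*$. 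Exponentiating, $y_n\to B_1\sqrt[k-1]{B_1}$; combined with $r_n\to 1$ and the equality $A_1=B_1$, this gives $x_n\to A_1\sqrt[k-1]{B_1}$.

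For part (iii), Proposition \ref{inequalityforgtheta}(c1) shows that when $r_0\in(1/\theta^k,t_2)\cup(t_3,\theta^k)$, the scalar orbit $\{g_\theta^n(r_0)\}$ is finite, i.e.\ at some index $N$ one has $r_N\notin(1/\theta^k,\theta^k)$. By the Remark following \eqref{dynamicalsysf}, $f$ then fails to map $(x_N,y_N)$ into $\br_+^2$, and the planar trajectory terminates at step $N$. The principal obstacle, I expect, will be the non-autonomous convergence in (ii); cases (i) and (iii) reduce, respectively, to a clean geometric contraction on an invariant line and to a direct appeal to the scalar results of Section 5.
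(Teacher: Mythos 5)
Your proposal is correct, and it rests on exactly the same structural decomposition as the paper: the ratio $r=x/y$ evolves autonomously under $g_\theta$ (so parts (i) and (iii) reduce to Propositions \ref{numberoffixedpointofg} and \ref{inequalityforgtheta}, just as in the paper), while the second coordinate satisfies $y_{n+1}=b_n\,y_n^{1/k}$ with $b_n=\frac{2(\theta-\sqrt[k]{r_n})}{\theta^2-1}$. Where you genuinely diverge is in how the radial recursion is resolved. The paper unwinds it into the nested-radical product $b_n\sqrt[k]{b_{n-1}\sqrt[k]{\cdots\sqrt[k]{b_0}}}$ and invokes Lemma \ref{sequenceofdn} (that this product tends to $\beta_0^{k/(k-1)}$ when $b_n\to\beta_0$) \emph{without proof}; you instead pass to $Y_n=\log y_n$, obtaining the perturbed affine contraction $Y_{n+1}=\log\psi(r_n)+Y_n/k$ with rate $1/k<1$, and close the argument with a standard perturbation-of-contraction estimate. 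Your route therefore supplies precisely the analytic content that the paper's Lemma \ref{sequenceofdn} leaves unjustified, and it handles (i) and (ii) uniformly (case (i) being the unperturbed instance), at the mild cost of needing $y_n>0$ and $\psi(r_n)>0$ to take logarithms — both of which hold on $(t_2,t_3)\subset(1/\theta^k,\theta^k)$, as you implicitly use. The identities $\psi(t_i)=B_i$, $A_i=t_iB_i$, and $A_1=B_1$ that you rely on all check out against \eqref{gthetaoft} and Proposition \ref{numberoffixedpointofdynsysoff}.
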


\begin{proof} Assume that $\theta>\frac{k+1}{k-1}.$

(i). Now we suppose that $(x^0,y^0)$ belongs to an invariant
semi-line $l_i$ ($i=1,2,3$) of the dynamical system
$f:\br_{+}^2\to\br_{+}^2.$ Then one has

$$\frac{x^{(n)}}{y^{(n)}}=t_i,$$
for any $n\in\bn,$ where $(x^{(n)},y^{(n)})\equiv f^{(n)}(x^0,y^0).$
Hence, one finds
$$
\left\{
\begin{array}{l}
x^{(n)}=A_i\sqrt[k]{y^{(n-1)}}\\
y^{(n)}=B_i\sqrt[k]{y^{(n-1)}}
\end{array}
\right.
$$
Therefore
$$
\left\{
\begin{array}{l}
x^{(n)}=\underbrace{A_i\sqrt[k]{B_i\sqrt[k]{B_i\cdots\sqrt[k]{B_i}}}}_{n}\sqrt[\leftroot{-5}\uproot{5}k^{n}]{y^0}
         =A_iB_i^{\dfrac{\frac1k-\frac{1}{k^{n}}}{1-\frac1k}}\sqrt[\leftroot{-5}\uproot{5}k^{n}]{y^0}\\
y^{(n)}=\underbrace{B_i\sqrt[k]{B_i\sqrt[k]{B_i\cdots\sqrt[k]{B_i}}}}_{n}\sqrt[\leftroot{-5}\uproot{5}k^{n}]{y^0}
         =B_i^{\dfrac{1-\frac{1}{k^{n}}}{1-\frac1k}}\sqrt[\leftroot{-5}\uproot{5}k^{n}]{y^0}.
\end{array}
\right.
$$
If we take into account $\dfrac{x_0}{A_i\sqrt[k-1]{B_i}}=\dfrac{y_0}{B_i\sqrt[k-1]{B_i}}$ then
\begin{eqnarray}\label{tracjectoryofxnyn}
\left\{
\begin{array}{l}
x^{(n)}=A_i\sqrt[k-1]{B_i}\sqrt[\leftroot{-5}\uproot{5}k^{n}]{\dfrac{x^0}{A_i\sqrt[k-1]{B_i}}}\\
y^{(n)}=B_i\sqrt[k-1]{B_i}\sqrt[\leftroot{-5}\uproot{5}k^{n}]{\dfrac{y^0}{B_i\sqrt[k-1]{B_i}}}
\end{array}
\right.
\end{eqnarray}
It is clear that the sequence $(x^{(n)},y^{(n)})$ converges to
$(A_i\sqrt[k-1]{B_i};B_i\sqrt[k-1]{B_i})$ which is a fixed point
belonging to $l_i.$

$(ii).$ Assume that an initial point $(x^0,y^0)$ satisfies
$$\frac{x^0}{y^0}\in(t_2,t_1)\cup(t_1,t_3).$$
It follows from \eqref{ratioofx'andy'} that
$$\frac{x^{(n)}}{y^{(n)}}=g\left(\frac{x^{(n-1)}}{y^{(n-1)}}\right).$$
According to Proposition \ref{inequalityforgtheta} the sequence
$\left\{\dfrac{x^{(n)}}{y^{(n)}}\right\}_{n=1}^\infty$ converges to
the fixed point $t_1$ of the function $g_\theta:\Delta\to\br.$

Taking $\dfrac{x^{(n)}}{y^{(n)}}=c_n,$ then one gets that
$g_\theta(c_n)=c_{n+1}$ and
$$
\left\{
\begin{array}{l}
x^{(n+1)}=a_{n}\sqrt[k]{y^{(n)}}\\
y^{(n+1)}=b_{n}\sqrt[k]{y^{(n)}}
\end{array}
\right.
$$
where $$a_n=\frac{2\theta\sqrt[k]{c_n}-2}{\theta^2-1},\ \ \
b_n=\frac{2\theta-2\sqrt[k]{c_n}}{\theta^2-1}.$$ So, we find that
$$
\left\{
\begin{array}{l}
x^{(n+1)}=\underbrace{a_n\sqrt[k]{b_{n-1}\sqrt[k]{b_{n-2}\cdots\sqrt[k]{b_{0}}}}}_{n+1}\sqrt[\leftroot{-5}\uproot{5}k^{n+1}]{y^0}\\
y^{(n+1)}=\underbrace{b_n\sqrt[k]{b_{n-1}\sqrt[k]{b_{n-2}\cdots\sqrt[k]{b_0}}}}_{n+1}\sqrt[\leftroot{-5}\uproot{5}k^{n+1}]{y^0}
\end{array}
\right.
$$
The following lemma is useful to calculate the limiting point of the
sequence $\{(x^{(n)},y^{(n)})\}_{n=0}^\infty.$
\begin{lemma}\label{sequenceofdn}
If a sequence $\{b_n\}_{n=0}^\infty,$ with positive terms,
converges to $\beta_0>0$ then the sequence
$$\beta_n=\underbrace{b_n\sqrt[k]{b_{n-1}\sqrt[k]{b_{n-2}\cdots\sqrt[k]{b_0}}}}_{n+1}$$
converges to $\beta_0\sqrt[k-1]{\beta_0}.$
\end{lemma}

We know that $$c_{n}\to t_1, \ \ \ b_n\to B_1,\ \ \ a_n\to A_1.$$
Then, according to Lemma \ref{sequenceofdn}, the sequence
$(x^{(n)},y^{(n)})$ converges to
$\left(A_1\sqrt[k-1]{B_1},B_1\sqrt[k-1]{B_1}\right)$ which belongs
to $l_1.$

$(iii).$ Now assume that an initial point $(x^0,y^0)$ satisfies
$$\frac{x^0}{y^0}\in\left(\frac{1}{\theta^k},t_2\right)\cup(t_3,\theta^k).$$

It follows from \eqref{ratioofx'andy'} that
$$\frac{x^{(n+1)}}{y^{(n+1)}}=g\left(\frac{x^{(n)}}{y^{(n)}}\right),$$ for any $n\in\bn.$
According to Proposition \ref{inequalityforgtheta} the sequence
$\left\{\dfrac{x^{(n)}}{y^{(n)}}\right\}_{n=1}^\infty$ has a finite
number of terms. Therefore the sequence
$\left\{(x^{(n+1)},y^{(n+1)})\right\}_{n=0}^\infty$ is finite.
\end{proof}

Analogously, one can prove the following

\begin{theorem}\label{incasethetalessk1k1}
Let $f:\br_{+}^2\to\br_{+}^2$ be a dynamical system given by
\eqref{dynamicalsysf} and $1<\theta\le \frac{k+1}{k-1}.$
\begin{itemize}
  \item[(i)] If an initial point $(x^0,y^0)$ belongs to an invariant semi-line $l_1$
of the dynamical system $f:\br_{+}^2\to\br_{+}^2$ then the
trajectory $\{f^{(n)}(x^0,y^0)\}_{n=1}^\infty,$ starting from the
point $(x^0,y^0),$ converges to a fixed point $(A_1\sqrt[k-1]{B_1}; B_1\sqrt[k-1]{B_1})$ which
belongs to an invariant semi-line $l_1.$
  \item[(ii)] If an initial point $(x^0,y^0)$ satisfies the following condition
$$\frac{x^0}{y^0}\in\left(\frac{1}{\theta^k},t_1\right)\cup(t_1,\theta^k)$$ then the trajectory
$\{f^{(n)}(x^0,y^0)\}_{n=1}^\infty,$ starting from the point
$(x^0,y^0),$ is finite.
\end{itemize}
\end{theorem}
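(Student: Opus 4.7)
The plan is to follow the proof of Theorem \ref{incasethetagratk1k1} verbatim, exploiting the fact that for $1 < \theta \leq \frac{k+1}{k-1}$ the function $g_\theta$ has a unique fixed point $t_1 = 1$ (Proposition \ref{numberoffixedpointofg}(ii)). Accordingly, the previous trichotomy of cases collapses to the two cases stated here: initial points lying on the single invariant semi-line $l_1$, and all other admissible initial points in the strip $\frac{x^0}{y^0}\in(\frac{1}{\theta^k},\theta^k)\setminus\{t_1\}$.

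For (i), if $(x^0, y^0) \in l_1$ then $x^{(n)}/y^{(n)} = t_1$ for every $n \in \bn$, by invariance of $l_1$. Plugging this ratio back into \eqref{dynamicalsysf} yields the one-step recursion
\[
x^{(n)} = A_1 \sqrt[k]{y^{(n-1)}}, \qquad y^{(n)} = B_1 \sqrt[k]{y^{(n-1)}},
\]
identical in form to the one obtained in the proof of Theorem \ref{incasethetagratk1k1}(i). Iterating, telescoping the geometric series $\frac{1}{k}+\frac{1}{k^2}+\cdots+\frac{1}{k^n}$ appearing in the exponent of $B_1$, and using the identity $\frac{x^0}{A_1 \sqrt[k-1]{B_1}} = \frac{y^0}{B_1 \sqrt[k-1]{B_1}}$ (which encodes exactly the condition $(x^0,y^0) \in l_1$), I would recover the closed form analogous to \eqref{tracjectoryofxnyn}, whose limit as $n\to\infty$ is the announced fixed point $(A_1\sqrt[k-1]{B_1},\ B_1\sqrt[k-1]{B_1})$.

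For (ii), I project onto the ratio dynamics. Relation \eqref{ratioofx'andy'} gives
\[
\frac{x^{(n+1)}}{y^{(n+1)}} = g_\theta\!\left(\frac{x^{(n)}}{y^{(n)}}\right).
\]
By Proposition \ref{inequalityforgtheta}(b2), any trajectory of $g_\theta$ whose initial point lies in $\bigl(\tfrac{1}{\theta^k}, t_1\bigr) \cup \bigl(t_1, \theta^k\bigr)$ is finite, i.e.\ eventually leaves the domain $\Delta$. Consequently the ratio sequence $\{x^{(n)}/y^{(n)}\}$ has only finitely many terms, and hence so does the trajectory $\{(x^{(n)}, y^{(n)})\}$ of $f$.

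Since both parts reduce to direct invocations of Proposition \ref{inequalityforgtheta} together with calculations already performed for Theorem \ref{incasethetagratk1k1}, I do not foresee any serious technical obstacle. The only point requiring a modest amount of care is the telescoping in (i): one must verify that $\sum_{j=1}^{n} k^{-j} \to \frac{1}{k-1}$ delivers precisely the exponents $\frac{k}{k-1}$ and $\frac{1}{k-1}$ of $B_1$ in the two coordinates of the limit, matching the description of the fixed point furnished by Proposition \ref{numberoffixedpointofdynsysoff}(ii).
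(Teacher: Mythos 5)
Your proposal is correct and is exactly the argument the paper intends: the paper itself offers no separate proof of Theorem \ref{incasethetalessk1k1}, stating only that it follows ``analogously'' from Theorem \ref{incasethetagratk1k1}, and your adaptation (part (i) via the closed form \eqref{tracjectoryofxnyn} with $i=1$, part (ii) via the ratio dynamics and Proposition \ref{inequalityforgtheta}(ii)(b2)) is precisely that analogy, with the limiting exponents of $B_1$ checked correctly.
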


\begin{remark}\label{aninvariantline}
One can easily check that if an initial point $(x^{0},y^{0})$ belongs to an invariant line $l_i$ of the dynamical system \eqref{dynamicalsysf} then the trajectory $(x^{(n)},y^{(n)})$ starting from $(x^{0},y^{0})$ has the following form
\begin{eqnarray}\label{trajectoryanytheta}
\left\{
\begin{array}{l}
x^{(n)}=A_i\sqrt[k-1]{B_i}\sqrt[\leftroot{-5}\uproot{5}k^{n}]{\dfrac{x^0}{A_i\sqrt[k-1]{B_i}}}\\
y^{(n)}=B_i\sqrt[k-1]{B_i}\sqrt[\leftroot{-5}\uproot{5}k^{n}]{\dfrac{y^0}{B_i\sqrt[k-1]{B_i}}}
\end{array}
\right.
\end{eqnarray}
In the case $\theta>\frac{k+1}{k-1}$ the formula
\eqref{trajectoryanytheta} was already shown in
\eqref{tracjectoryofxnyn} where $i=1,2,3$. In the case
$\theta\leq\frac{k+1}{k-1}$ the dynamical system
\eqref{dynamicalsysf} has unique an invariant line $l_1$ at $i=1$ in
the formula \eqref{trajectoryanytheta}.
\end{remark}

\section{Diagonalizability of forward QMC}

In previous section we have found fixed points of the dynamical
system \eqref{dynamicalsysf} and prove the absence of periodic
points for any $\theta>1$. Moreover, we investigated an asymptotical
behavior of \eqref{dynamicalsysf}. It is clear that every fixed
point of \eqref{dynamicalsysf} defines boundary conditions which are
solutions of \eqref{eq1}, \eqref{eq2}. Namely, we have that if
$\theta\leq\frac{k+1}{k-1}$ then  there are boundary conditions
$(w_0(\a_0),\{h_x(\a_0)\})$ of the model \eqref{K<u,v>}
\begin{eqnarray}\label{boundaryconditionsthetaalpha}
w_0(\a_0)=\frac{1}{\a_0}\sigma_0, \quad h_x(\a_0)=\a_0\sigma^{(x)}_0
\end{eqnarray}
and if $\theta>\frac{k+1}{k-1}$ then apart the previous one, there
are two extra boundary conditions $(w_0(\beta),\{h_x(\beta)\})$ and
$(w_0(\gamma),\{h_x(\gamma)\})$ of the model \eqref{K<u,v>}
\begin{eqnarray}
\label{boundaryconditionsthetabeta}&&w_0(\beta)=\frac{1}{\beta_0}\sigma_0, \quad h_x(\beta)=\beta_0\sigma^{(x)}_0+\beta_3\sigma_3^{(x)}\\
\label{boundaryconditionsthetagamma}&&w_0(\gamma)=\frac{1}{\gamma_0}\sigma_0, \quad h_x(\gamma)=\gamma_0\sigma^{(x)}_0+\gamma_3\sigma_3^{(x)}
\end{eqnarray}
here $\a_0=\sqrt[k-1]{\theta^k}$, and  $\beta=(\beta_0,\beta_3),$
$\gamma=(\gamma_0,\gamma_3)$ such that
\begin{eqnarray*}
&&\beta_0=\frac{A_2\sqrt[k-1]{B_2}+B_2\sqrt[k-1]{B_2}}{2}, \quad \beta_3=\frac{A_2\sqrt[k-1]{B_2}-B_2\sqrt[k-1]{B_2}}{2},\\
&&\gamma_0=\frac{A_3\sqrt[k-1]{B_3}+B_3\sqrt[k-1]{B_3}}{2}, \quad \gamma_3=\frac{A_3\sqrt[k-1]{B_3}-B_3\sqrt[k-1]{B_3}}{2}.
\end{eqnarray*}

Note that these boundary conditions
\eqref{boundaryconditionsthetaalpha},
\eqref{boundaryconditionsthetabeta},
\eqref{boundaryconditionsthetagamma} due to Theorem \ref{compa}
define the forward QMC. Hence, the existence of the boundary
conditions imply the existence of forward QMC for the model
\eqref{K<u,v>} for any $\theta>1$.

We are going to prove diagonalizability of the forward QMC corresponding
to any boundary conditions which are solutions of \eqref{eq1} and \eqref{eq2}.

Recall that the diagonal subalgebra $M_2^{d}(\bc)$ of the algebra
$M_2(\bc)$ is defined as follows
\begin{eqnarray*}
M_2^{d}(\bc)=\left\{a\in M_2(\bc): a=a_0\id+a_3\sigma_z\right\}.
\end{eqnarray*}
Since the elements $\id,\sigma_x,\sigma_y,\sigma_z$ are basis in
$M_2(\bc)$ then every element $a\in M_2(\bc)$ can be written in the
following form $a=a_0\id+a_1\sigma_x+a_2\sigma_y+a_3\sigma_z.$ Then
for any $a\in M_2(\bc)$ an element $a_d=a_0\id+a_3\sigma_z$ is
called \emph{its diagonal part} and an element
$a_{xy}=a_1\sigma_x+a_2\sigma_y$ is called \emph{its $xy-$part}. It
is clear that a linear span of the elements $\id,\sigma_z$ is a
commutative diagonal subalgebra $M_2^{d}(\bc).$ In these notions,
any element $a\in M_2(\bc)$ can be written as $a=a_d+a_{xy}.$

Let us consider a conditional expectation $E:M_2(\bc)\to
M_2^{d}(\bc)$ defined by
$$E(a)=e_{11}ae_{11}+e_{22}ae_{22},$$
where $e_{11},e_{22}$ are two minimal projectors of the algebra $M_2(\bc).$ It is clear that $E(a)=a^d.$

The diagonal subalgebra $\cb_L^d$ of the full
algebra $\cb_L$ is defined by an inductive limit
\begin{eqnarray*}
\cb_L^d=\overline{\bigcup\limits_{\L_n}\cb_{\L_n}^d}^{\|\cdot\|},\quad \cb_{\L_{n}}^d=\bigotimes\limits_{u\in\L_{n}}\cb_u^d, \quad
\cb_{u}^d=M_{2}^d(\bc),\ \forall \ u\in\L.
\end{eqnarray*}

We will define a conditional expectation $\ce:\cb_L\to \cb_L^d$ on
the algebra $\cb_L$ as follows: the value of this expectation at any
linear generator $a_{\L_n}=\bigotimes\limits_{x\in\L_n}a_x$ of the
algebra $\cb_{\L_n}$  is defined by
\begin{eqnarray}\label{expectgenerator}
\ce(a_{\L_n})=\bigotimes\limits_{x\in\L_n}E(a_x)
\end{eqnarray}
 and  the linear extension of \eqref{expectgenerator} defines
 it on the whole algebra $\cb_{\L_n}.$  By inductive limit we define the expectation $\ce$ on the full algebra $\cb_L.$ For any $a_{\L}\in\cb_{\L}$ the value $\ce(a_{\L})$ is called \emph{a diagonal part} of $a_{\L}$ and denote by $a^d_{\L}=\ce(a_{\L}).$

\begin{theorem}\label{diagonaliziablity}
Let $\ffi^{(f)}_{w_0,{\bh}}$ be a forward QMC of the model
\eqref{K<u,v>} with boundary conditions which are solutions of
\eqref{eq1} and \eqref{eq2}. Let $\ce:\cb_L\to \cb_L^d$ be a
conditional expectation given \eqref{expectgenerator} and
$\theta>1.$ Then for any $a\in\cb_{L}$ one has
\begin{eqnarray*}
\ffi^{(f)}_{w_0,{\bh}}(a)=\ffi^{(f)}_{w_0,{\bh}}(\ce(a)).
\end{eqnarray*}
\end{theorem}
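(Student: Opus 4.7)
The plan is to reduce the statement to a finite-volume identity and then to show that the density $\cw_{n]}$ representing $\ffi^{(n,f)}_{w_0,\bh}$ lies entirely in the diagonal subalgebra $\cb_{\L_n}^d$. By Proposition~\ref{state^nwithW_n}, for every $a\in\cb_{\L_n}$ one has $\ffi^{(n,f)}_{w_0,\bh}(a)=\tr(\cw_{n]}a)$, and since $\ce(\cb_{\L_n})\subseteq\cb_{\L_n}^d\subseteq\cb_{\L_n}$ it suffices to prove the finite-volume identity $\tr(\cw_{n]}a)=\tr(\cw_{n]}\ce(a))$ for every such $a$. Once this is secured for all $n$, the full claim extends first to the dense local $*$-algebra $\bigcup_n\cb_{\L_n}$ via the defining weak limit $\ffi^{(f)}_{w_0,\bh}=w{-}\lim\ffi^{(n,f)}_{w_0,\bh}$, and then to all of $\cb_L$ by continuity of $\ce$ together with the bound $|\ffi^{(f)}_{w_0,\bh}(b)|\leq\|b\|$.

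The key observation is that each factor entering $K_n$ is diagonal. Indeed, \eqref{K<u,v>K_0K_3} shows $K_{<x,y>}=K_0\id^{(x)}\id^{(y)}+K_3\sigma_z^{(x)}\sigma_z^{(y)}\in\cb_{\{x,y\}}^d$, while the analysis in Section~4 — in particular \eqref{a_12a_21} together with the explicit forms \eqref{boundaryconditionsthetaalpha}--\eqref{boundaryconditionsthetagamma} — implies that every boundary condition $\{w_0,\bh\}$ satisfying \eqref{eq1}--\eqref{eq2} has $w_0$ and each $h_x$ in the diagonal subalgebra $\cb_x^d$. Since $\cb_L^d$ is commutative, the positive square roots $w_0^{1/2}$ and $h_x^{1/2}$ remain diagonal, and all factors in the product defining $K_n$ commute with one another. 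Consequently $K_n\in\cb_{\L_n}^d$, and therefore $\cw_{n]}=K_nK_n^*\in\cb_{\L_n}^d$ as well.

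With diagonality of $\cw_{n]}$ in hand, the required identity $\tr(\cw_{n]}(a-\ce(a)))=0$ reduces to a site-wise trace computation. Expanding $a$ in the Pauli basis, $a-\ce(a)$ is a linear combination of simple tensor generators in which at least one site carries $\sigma_x$ or $\sigma_y$; at such a site, multiplication by the diagonal $\id$ or $\sigma_z$ factor of $\cw_{n]}$ still yields a term proportional to $\sigma_x$ or $\sigma_y$ (using $\sigma_z\sigma_x=i\sigma_y$, $\sigma_z\sigma_y=-i\sigma_x$), whose single-site normalized trace vanishes, killing the trace of the whole tensor product. The only real obstacle — trivial for the Ising interaction but essential for the argument — is the diagonality of $\cw_{n]}$ itself; this ultimately rests on the fact that $H_{<x,y>}$ commutes with $\sigma_z^{(x)}\sigma_z^{(y)}$, so that the recursion \eqref{eq2} never creates $\sigma_x$ or $\sigma_y$ components in the boundary fields $\bh$, and hence the forward QMC is supported on the classical (diagonal) subalgebra.
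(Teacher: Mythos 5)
Your proposal is correct and follows essentially the same route as the paper: reduce to the finite-volume identity via Proposition~\ref{state^nwithW_n} and the weak-limit/compatibility structure, observe that \eqref{a_12a_21} and \eqref{K<u,v>K_0K_3} force every factor of $K_n$ (hence $\cw_{n]}$) into the diagonal subalgebra, and conclude that off-diagonal Pauli components of $a-\ce(a)$ are killed by the trace. The only cosmetic difference is that the paper packages the last step as $\Tr(\cw_{n]}\ce(a))=\Tr(\ce(\cw_{n]}a))=\Tr(\cw_{n]}a)$ using the bimodule and trace-preserving properties of $\ce$, whereas you verify the same cancellation by a direct site-wise Pauli-basis computation.
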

\begin{proof}
Since $\cb_\L$ is a quasi-local algebra it is enough to show that
\begin{eqnarray}\label{phiE(a_n)}
\ffi^{(f)}_{w_0,{\bh}}(a_{\L_n})=\ffi^{(f)}_{w_0,{\bh}}(\ce(a_{\L_n})),
\end{eqnarray}
for any $a_{\L_n}\in\cb_{\L_n}.$  It follows from the definition of the QMC that
\begin{eqnarray*}
\ffi^{(f)}_{w_0,{\bh}}(a_{\L_n})=
w-\lim_{m\to\infty}\ffi^{(m,f)}_{w_0,\bh}(a_{\L_n})=\ffi^{(n,f)}_{w_0,\bh}(a_{\L_n}).
\end{eqnarray*}
Analogously, one has
\begin{eqnarray*}
\ffi^{(f)}_{w_0,{\bh}}(\ce(a_{\L_n}))=\ffi^{(n,f)}_{w_0,\bh}(\ce(a_{\L_n})).
\end{eqnarray*}
It follows from \eqref{a_12a_21} that any solutions $\{\bh_x\}_{x\in\L}$ of the equation \eqref{eq2} lie in the diagonal algebra $M^d_2(\bc)$
Let us choose $w_0=\frac{1}{Tr(h_0)}\id.$ Since an element $K_{<u,v>}$ given by \eqref{K<u,v>K_0K_3} lies in the diagonal algebra $M^d_2(\bc)\otimes M^d_2(\bc)$ then an element $\cw_n$ given by \eqref{w_n} lies in the diagonal algebra $\cb^d_{\L_n}.$ We then get
\begin{eqnarray*}
\ffi^{(n,f)}_{w_0,\bh}(\ce(a_{\L_n}))=\Tr(\cw_n(\ce(a_{\L_n}))=\Tr(\ce(\cw_na_{\L_n}))=\Tr(\cw_na_{\L_n})=\ffi^{(n,f)}_{w_0,\bh}(a_{\L_n}).
\end{eqnarray*}
This completes the proof.
\end{proof}

\section{Uniqueness of forward QMC: regime $\theta\le\frac{k+1}{k-1}$}

In this section we prove the first part of the main theorem (see
Theorem \ref{main}), i.e. we show the uniqueness of the forward
quantum $d$-Markov chain in the regime $1<\theta\le\frac{k+1}{k-1}$.

We assume that $\theta\le\frac{k+1}{k-1}.$ Since $t_1=1$ we have $\Theta:=A_1=B_1=\dfrac{2}{\theta+1}.$ Then it follows from Theorem \ref{incasethetalessk1k1} and Remark \ref{aninvariantline} that the equation \eqref{eq2} does not have any solution except the following parametrical solutions $\{h_x(\a)\}$ given by
\begin{equation}\label{solutionofmainstate}
h^{(n)}_x(\alpha)=\left(
                    \begin{array}{cc}
                    \sqrt[k-1]{\Theta^k}\sqrt[\leftroot{-5}\uproot{5}k^{n}]{\dfrac{\alpha}{\sqrt[k-1]{\Theta^k}}} & 0 \\
                      0 & \sqrt[k-1]{\Theta^k}\sqrt[\leftroot{-5}\uproot{5}k^{n}]{\dfrac{\alpha}{\sqrt[k-1]{\Theta^k}}} \\
                    \end{array}
                 \right),
                 \end{equation}
for every $x\in W_n$, here $\alpha$ is any positive real number and $n\in\bn\cup\{0\}$. One of the solutions of the equation \eqref{eq1} has the following form
\begin{equation}\label{foromegaalpha}
w_0(\alpha)=\left(
              \begin{array}{cc}
                \dfrac{1}{\alpha} & 0 \\
                0 & \dfrac{1}{\alpha} \\
              \end{array}
            \right)
\end{equation}

The boundary conditions corresponding to the fixed point $(\sqrt[k-1]{\Theta^k},\sqrt[k-1]{\Theta^k})$ of the dynamical system
\eqref{dynamicalsysf} are the following
\begin{equation}\label{solutionofmainstatewhenalphafixed}
w_0(\alpha_0)=\left(\begin{array}{cc}
                      \dfrac{1}{\sqrt[k-1]{\Theta^k}} & 0 \\
                      0 & \dfrac{1}{\sqrt[k-1]{\Theta^k}} \\
                    \end{array}
                  \right), \quad
                  h^{(n)}_x(\alpha_0)=\left(
                   \begin{array}{cc}
                \sqrt[k-1]{\Theta^k} & 0 \\
                0 & \sqrt[k-1]{\Theta^k} \\
              \end{array}
            \right), \ \ \forall x\in W_n,
            \end{equation}
which correspond to the value of $\alpha_0=\sqrt[k-1]{\Theta^k}$
in \eqref{solutionofmainstate}, \eqref{foromegaalpha}.

Let us consider the states $\ffi^{(n,f)}_{w_0(\a),\bh(\alpha)}$
corresponding to the solutions
$(w_0(\alpha),\{h_x^{(n)}(\alpha)\})$. By definition we have
\begin{eqnarray}\label{uniq}
\ffi^{(n,f)}_{w_0(\a),\bh(\alpha)}(x) &=&
\tr\left(w^{1/2}_{0}(\alpha)\prod_{i=0}^{n-1}K_{[i,i+1]}\prod_{x\in  \overrightarrow{W}_n}h^{(n)}_x(\alpha)
\prod_{i=0}^{n-1}K^{*}_{[n-1-i,n-i]}w^{1/2}_{0}(\alpha)x\right)\nonumber\\
&=&\frac{\left(\sqrt[k-1]{\Theta^k}\right)^{k^{n}}}{\alpha}\left(\sqrt[k^{n}]{\frac{\alpha}{\sqrt[k-1]{\Theta^k}}}\right)^{k^{n}}
\tr\left(\prod_{i=0}^{n-1}K_{[i,i+1]}\prod_{i=0}^{n-1}K^{*}_{[n-1-i,n-i]}x\right)\nonumber\\
&=&\frac{\alpha_0^{k^{n}}}{\alpha_0}\tr\left(\prod_{i=0}^{n-1}K_{[i,i+1]}\prod_{i=0}^{n-1}K^{*}_{[n-1-i,n-i]}x\right)\nonumber\\
&=& \tr\left(w^{1/2}_{0}(\alpha_0)\prod_{i=0}^{n-1}K_{[i,i+1]}\prod_{x\in  \overrightarrow{W}_n}h^{(n)}_x(\alpha_0)
\prod_{i=0}^{n-1}K^{*}_{[n-1-i,n-i]}w^{1/2}_{0}(\alpha_0)x\right)\nonumber\\
&=&\ffi^{(n,f)}_{w_0(\a_0),\bh(\alpha_0)}(x),
\end{eqnarray}
for any $\alpha$. Hence, from the definition of forward QMC one
finds that
$\ffi^{(f)}_{w_0(\a),\bh(\alpha)}=\ffi^{(f)}_{w_0(\a_0),\bh(\alpha_0)}$,
which yields that the uniqueness of QMC associated with the model
\eqref{1Kxy1}. Hence, Theorem \ref{main} (i) is proved.

\section{Existence of phase transition: regime $\theta>\frac{k+1}{k-1}$}

This section is devoted to the proof of part (ii) of Theorem
\ref{main}. In the sequel we suppose that $\theta>\frac{k+1}{k-1}.$

In this section, for the sake of simplicity of formulas, we will use
the following notations, for the Pauli matrices:
\begin{eqnarray*}
\sigma_0:=\id, \quad \sigma_1:=\sigma_x, \quad \sigma_2:=\sigma_y, \quad \sigma_3:=\sigma_z
\end{eqnarray*}

According to Proposition \ref{numberoffixedpointofdynsysoff} there
are three fixed points of the dynamical system \eqref{dynamicalsysf}
in the considering regime. Then the corresponding solutions of
equations \eqref{eq1},\eqref{eq2} can be written as follows:
$(w_0(\a_0),\{h_x(\a_0)\}),$ $(w_0(\beta),\{h_x(\beta)\})$ and
$(w_0(\gamma),\{h_x(\gamma)\})$, where
\begin{eqnarray*}
&&w_0(\a_0)=\frac{1}{\a_0}\sigma_0, \quad h_x(\a_0)=\a_0\sigma^{(x)}_0\\
&&w_0(\beta)=\frac{1}{\beta_0}\sigma_0, \quad h_x(\beta)=\beta_0\sigma^{(x)}_0+\beta_3\sigma_3^{(x)}\\
&&w_0(\gamma)=\frac{1}{\gamma_0}\sigma_0, \quad h_x(\gamma)=\gamma_0\sigma^{(x)}_0+\gamma_3\sigma_3^{(x)}
\end{eqnarray*}
here $\a_0=\sqrt[k-1]{\Theta^k}$, and  $\beta=(\beta_0,\beta_3),$ $\gamma=(\gamma_0,\gamma_3)$ are vectors with
\begin{eqnarray}
\label{beta03}&&\beta_0=\frac{A_2\sqrt[k-1]{B_2}+B_2\sqrt[k-1]{B_2}}{2}, \quad \beta_3=\frac{A_2\sqrt[k-1]{B_2}-B_2\sqrt[k-1]{B_2}}{2},\\
\label{gamma03}&&\gamma_0=\frac{A_3\sqrt[k-1]{B_3}+B_3\sqrt[k-1]{B_3}}{2}, \quad \gamma_3=\frac{A_3\sqrt[k-1]{B_3}-B_3\sqrt[k-1]{B_3}}{2}.
\end{eqnarray}

By $\ffi^{(f)}_{w_0(\a_0),{\bh(\a_0)}}$, $\ffi^{(f)}_{w_0(\beta),{\bh(\beta)}},$
and $\ffi^{(f)}_{w_0(\gamma),{\bh(\gamma)}}$ we denote the corresponding forward
quantum Markov chains. To prove the existence of the phase transition,
we need to show that there are two states which are not quasi-equivalent.
We will show that two states
$\ffi^{(f)}_{w_0(\a_0),{\bh(\a_0)}}$, $\ffi^{(f)}_{w_0(\gamma),{\bh(\gamma)}}$ are not
quasi-equivalent.  To do so, we will need some auxiliary facts and results.\\

First of all, we recall some properties of $2\times 2$  special matrices which are not required their proof.

Let $M,N$ be  matrices given as follows
\begin{eqnarray*}
M=\left(
  \begin{array}{cc}
    a & b \\
    b & a \\
  \end{array}
\right),\quad
N=\left(
  \begin{array}{cc}
    c & d \\
    d & c \\
  \end{array}
\right).
\end{eqnarray*}
Then, these matrices commute each other, i.e.,  $MN=NM$. For any
$n\in \bn$ one has
\begin{eqnarray}\label{M^n}
M^{n}=\dfrac{1}{2}\left(
                       \begin{array}{cc}
                         (a+b)^n+(a-b)^n & (a+b)^n-(a-b)^n \\
                         (a+b)^n-(a-b)^n & (a+b)^n+(a-b)^n \\
                       \end{array}
                     \right)
\end{eqnarray}

For the sake of simplicity we will use the following notations
\begin{eqnarray}\label{Theta+Theta-}
\Theta_{+}=\gamma_0\dfrac{\theta+1}{2}+\gamma_3\dfrac{\theta-1}{2}, \quad \Theta_{-}=\gamma_0\dfrac{\theta+1}{2}-\gamma_3\dfrac{\theta-1}{2},
\end{eqnarray}
where $\gamma_0,\gamma_3$ are given by \eqref{gamma03}. Let us denote by
\begin{equation}\label{mainformofmainMatrix}
\mathbb{A}=\dfrac{1}{2}\left(
    \begin{array}{cc}
       \frac{\theta+1}{2}\left(\Theta_{+}^{k-1}+\Theta_{-}^{k-1}\right) & \frac{\theta-1}{2}\left(\Theta_{+}^{k-1}-\Theta_{-}^{k-1}\right) \\
       \frac{\theta+1}{2}\left(\Theta_{+}^{k-1}-\Theta_{-}^{k-1}\right) & \frac{\theta-1}{2}\left(\Theta_{+}^{k-1}+\Theta_{-}^{k-1}\right)  \\
    \end{array}
  \right).
\end{equation}

Let us study some properties of this matrix $\mathbb{A}.$ The next
proposition deals with eigenvalues of the matrix $A$.

\begin{proposition}\label{A-N} Let $\mathbb{A}$ be the matrix given by \eqref{mainformofmainMatrix}. Then the following assertions hold true:
\begin{itemize}
  \item [(i)] The matrix $\mathbb{A}$ has the following form
\begin{eqnarray}\label{anotherformofA}
\mathbb{A}=\dfrac{1}{(\theta+t_3)(\theta t_3+1)}\left(
                                             \begin{array}{cc}
                                               \frac{\theta+1}{2}(t_3^2+2\theta t_3+1) & \frac{\theta-1}{2}(t_3^2-1) \\
                                               \frac{\theta+1}{2}(t_3^2-1) & \frac{\theta-1}{2}(t_3^2+2\theta t_3+1) \\
                                             \end{array}
                                           \right)
\end{eqnarray}
  where $t_3$ is a fixed point of the function given by \eqref{gthetaoft};
  \item [(ii)] The numbers $\lambda_1=1$, $\lambda_2=\det(\mathbb{A})\in (0,1)$ are eigenvalues of the matrix $A;$
  \item [(iii)] The vectors
  \begin{eqnarray}
  \label{eigenvectorlambda1}(x_1,y_1)&=&\left(t_3+1,t_3-1\right),\\
  \label{eigenvectorlambda2}(x_2,y_2)&=&\left(-(\theta-1)(t_3-1),(\theta+1)(t_3+1)\right)
  \end{eqnarray} are eigenvectors of the matrix $\mathbb{A}$ corresponding to the eigenvalues $\lambda_1=1$ and $\lambda_2=\det(A),$ respectively;
  \item [(iv)] If the matrix $P$ has the following form
             $$P=\left(
              \begin{array}{cc}
                t_3+1 & -(\theta-1)(t_3-1) \\
                t_3-1 & (\theta+1)(t_3+1) \\
              \end{array}
            \right),$$  then \begin{eqnarray}\label{diagonalformofmainMatrix}
            P^{-1}\mathbb{A}P=\left(\begin{array}{cc}
                                 \lambda_1 & 0 \\
                                  0 & \lambda_2 \\
                           \end{array}
                     \right);
            \end{eqnarray}
  \item [(v)] For any $n\in \bn$ one has
  \begin{eqnarray}
  \mathbb{A}^{n}=\left(
         \begin{array}{cc}
            \dfrac{(\theta+1)x_1^2+(\theta-1)y_1^2\lambda_2^n}{(\theta+1)x_1^2+(\theta-1)y_1^2} & \dfrac{x_1y_1(\theta-1)(1-\lambda_2^n)}{(\theta+1)x_1^2+(\theta-1)y_1^2} \\
            \dfrac{x_1y_1(\theta+1)(1-\lambda_2^n)}{(\theta+1)x_1^2+(\theta-1)y_1^2} & \dfrac{(\theta+1)x_1^2\lambda_2^n+(\theta-1)y_1^2}{(\theta+1)x_1^2+(\theta-1)y_1^2}\\
         \end{array}
    \right),
  \end{eqnarray}
  where $(x_1,y_1)$ is an eigenvector of the matrix $\mathbb{A}.$
\end{itemize}
\end{proposition}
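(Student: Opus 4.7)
The plan is to reduce everything to the single algebraic relation that $t_3$ is a fixed point of $g_\theta$, which in the form of \eqref{equationforfixedpoint} reads
\[
\sqrt[k]{t_3}=\frac{\theta t_3+1}{\theta+t_3}.
\]
Using the definitions of $A_3,B_3,\gamma_0,\gamma_3$ together with this relation, I would first compute the $(k-1)$-th powers of $\Theta_\pm$. Noting that $\gamma_0+\gamma_3=A_3\sqrt[k-1]{B_3}$ and $\gamma_0-\gamma_3=B_3\sqrt[k-1]{B_3}$, a direct rearrangement gives
\[
\Theta_+=\sqrt[k-1]{B_3}\cdot\sqrt[k]{t_3},\qquad \Theta_-=\sqrt[k-1]{B_3},
\]
after which the fixed-point identity forces
\[
\Theta_+^{k-1}=\frac{2t_3}{\theta t_3+1},\qquad \Theta_-^{k-1}=\frac{2}{\theta+t_3}.
\]
Plugging these into \eqref{mainformofmainMatrix} and collecting the sum and difference gives the expression \eqref{anotherformofA}. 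This establishes (i), and is where almost all of the work is concentrated.

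For (ii) and (iii) I would verify the eigenvectors directly from the form \eqref{anotherformofA}. The multiplication $\mathbb{A}(t_3+1,t_3-1)^{T}$ reduces, after expanding $(\theta+1)(t_3^2+2\theta t_3+1)+(\theta-1)(t_3-1)^2$ and the analogous lower-row combination, to $2(\theta+t_3)(\theta t_3+1)$ times the vector itself, which cancels the prefactor and shows that $\lambda_1=1$ is an eigenvalue with eigenvector $(x_1,y_1)$. The determinant is then computed at once from the original form,
\[
\det\mathbb{A}=\tfrac{\theta^2-1}{4}\,\Theta_+^{k-1}\Theta_-^{k-1}=\frac{(\theta^2-1)t_3}{(\theta+t_3)(\theta t_3+1)},
\]
which is positive and, using $0<(\theta^2-1)t_3<(\theta+t_3)(\theta t_3+1)$ (equivalent to the trivial $\theta t_3^2+\theta+2t_3>0$), lies in $(0,1)$. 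Since $\det\mathbb{A}=\lambda_1\lambda_2$ and $\lambda_1=1$, we get $\lambda_2=\det\mathbb{A}\in(0,1)$. The second eigenvector $(x_2,y_2)$ is then verified by the same direct multiplication; the resulting coefficient is precisely $\frac{(\theta^2-1)t_3}{(\theta+t_3)(\theta t_3+1)}=\lambda_2$.

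For (iv), since $\lambda_1\neq\lambda_2$ the eigenvectors are linearly independent, so $P$ is invertible and $P^{-1}\mathbb{A}P=\mathrm{diag}(\lambda_1,\lambda_2)$ by construction. For (v), I would simply use $\mathbb{A}^n=P\,\mathrm{diag}(1,\lambda_2^n)P^{-1}$. Observing that $x_2=-(\theta-1)y_1$ and $y_2=(\theta+1)x_1$, a short computation gives
\[
\det P=(\theta+1)x_1^2+(\theta-1)y_1^2,\qquad P^{-1}=\frac{1}{\det P}\begin{pmatrix}(\theta+1)x_1 & (\theta-1)y_1\\ -y_1 & x_1\end{pmatrix},
\]
and carrying out the product $P\,\mathrm{diag}(1,\lambda_2^n)\,P^{-1}$ yields exactly the matrix displayed in (v).

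The only nontrivial step is (i): one must carefully exploit the defining equation for $t_3$ to turn radical expressions in $B_3$ and $\sqrt[k]{t_3}$ into rational expressions in $t_3$. Once (i) is in hand the remaining items are routine linear algebra on an explicit $2\times 2$ matrix, and the diagonalizability is guaranteed by the separation $\lambda_2\in(0,1)$ from $\lambda_1=1$.
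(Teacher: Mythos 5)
Your proposal is correct and follows essentially the same route as the paper: part (i) via the identities $\Theta_+=\sqrt[k-1]{B_3}\,\sqrt[k]{t_3}$, $\Theta_-=\sqrt[k-1]{B_3}$ and the fixed-point relation $\sqrt[k]{t_3}=\frac{\theta t_3+1}{\theta+t_3}$ (which the paper leaves as "one can easily get"), and parts (ii)--(v) by explicit $2\times2$ linear algebra with $\mathbb{A}^n=P\,\mathrm{diag}(1,\lambda_2^n)P^{-1}$. The only (immaterial) difference is that you obtain $\lambda_1=1$ by verifying the eigenvector $(t_3+1,t_3-1)$ directly and then set $\lambda_2=\det\mathbb{A}$, whereas the paper first checks $\tr(\mathbb{A})-\det(\mathbb{A})=1$ on the characteristic polynomial.
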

\begin{proof} (i). We know that $t_3$ is a fixed point of \eqref{gthetaoft}, i.e.,
\begin{eqnarray*}
\frac{\theta\sqrt[k-1]{t_3}-1}{\theta-\sqrt[k-1]{t_3}}=t_3.
\end{eqnarray*}
It follows from the last identity that
\begin{eqnarray}\label{t_3theta}
t_3=\left(\dfrac{\theta t_3+1}{\theta+t_3}\right)^k.
\end{eqnarray}
By means \eqref{gamma03}, \eqref{Theta+Theta-}, \eqref{t_3theta} one
can easily get that the matrix $\mathbb{A}$ has the form
\eqref{anotherformofA}.

(ii). We know that the following equation
\begin{equation*}
\lambda^2-\tr(\mathbb{A})\lambda+\det(\mathbb{A})=0
\end{equation*}
is a characteristic equation of the matrix $\mathbb{A}$ given by
\eqref{anotherformofA}. Without forcing by detail we can make sure
\begin{eqnarray*}
\tr(\mathbb{A})-\det(\mathbb{A})=1,
\end{eqnarray*}
this means that $\lambda_1=1$ and
$\lambda_2=\det(\mathbb{A})=\frac{(\theta^2-1)t_3}{(\theta+t_3)(\theta
t_3+1)}$ are eigenvalues of the matrix $\mathbb{A}$.

(iii). The eigenvector $(x_1,y_1)$ of the matrix $\mathbb{A}$,
corresponding to $\lambda_1=1$ satisfies the following equation
\begin{eqnarray*}
(t_3-1)x_1=(t_3+1)y_1.
\end{eqnarray*}
Then, one finds
\begin{eqnarray*}\left\{\begin{array}{l}
x_1=t_3+1\\
y_1=t_3-1.
\end{array}
\right.
\end{eqnarray*}
Analogously, one can show that the eigenvector $(x_2,y_2)$ of the
matrix $\mathbb{A}$, corresponding to $\lambda_2=\det(\mathbb{A})$,
is equal to
\begin{eqnarray*}\left\{\begin{array}{l}
x_2=-(\theta-1)(t_3-1)\\
y_2= (\theta+1)(t_3+1).
\end{array}
\right.
\end{eqnarray*}
It is worth noting that $(x_2,y_2)=\left(-(\theta-1)y_1,(\theta+1)x_1\right).$

(iv). It is clear that
$$P=\left(
              \begin{array}{cc}
                x_1 & x_2 \\
                y_1 & y_2 \\
              \end{array}
            \right),
$$
where the vectors $(x_1,y_1)$ and $(x_2,y_2)$ are defined by \eqref{eigenvectorlambda1}, \eqref{eigenvectorlambda2}. We then get
\begin{eqnarray*}
P^{-1}\mathbb{A}P&=&\frac{1}{\det(P)}\left(
                              \begin{array}{cc}
                                y_2 & -x_2 \\
                                -y_1 & x_1 \\
                              \end{array}
                            \right)
\left(
       \begin{array}{cc}
         \lambda_1x_1 & \lambda_2x_2 \\
         \lambda_1y_1 & \lambda_2y_2 \\
       \end{array}
     \right)=\left(
               \begin{array}{cc}
                 \lambda_1 & 0 \\
                 0 & \lambda_2 \\
               \end{array}
             \right),
\end{eqnarray*}
where $\det(P)=(\theta+1)x_1^2+(\theta-1)y_1^2>0.$

(v). From \eqref{diagonalformofmainMatrix} it follows that
\[\mathbb{A}=P\left(
       \begin{array}{cc}
         \lambda_1 & 0 \\
         0 & \lambda_2 \\
       \end{array}
     \right)P^{-1}.
\]
Therefore, for any $n\in\bn$ we obtain
\begin{eqnarray*}
\mathbb{A}^n&=&P\left(
       \begin{array}{cc}
         \lambda_1^n & 0 \\
         0 & \lambda_2^n \\
       \end{array}
     \right)P^{-1}=\frac{1}{\det(P)}\left(
              \begin{array}{cc}
                x_1 & x_2 \\
                y_1 & y_2 \\
              \end{array}
            \right)\left(
                          \begin{array}{cc}
                             y_2\lambda_1^n & -x_2\lambda_1^n \\
                             -y_1\lambda_2^n & x_1\lambda_2^n \\
                          \end{array}
                     \right)\\
&=&\frac{1}{\det(P)}\left(
                           \begin{array}{cc}
                             x_1y_2\lambda_1^n-x_2y_1\lambda_2^n & x_1x_2(\lambda_2^n-\lambda_1^n) \\
                             y_1y_2(\lambda_1^n-\lambda_2^n) & x_1y_2\lambda_2^n-x_2y_1\lambda_1^n \\
                           \end{array}
                         \right)\\
&=&\left(
         \begin{array}{cc}
            \dfrac{(\theta+1)x_1^2+(\theta-1)y_1^2\lambda_2^n}{(\theta+1)x_1^2+(\theta-1)y_1^2} & \dfrac{x_1y_1(\theta-1)(1-\lambda_2^n)}{(\theta+1)x_1^2+(\theta-1)y_1^2} \\
            \dfrac{x_1y_1(\theta+1)(1-\lambda_2^n)}{(\theta+1)x_1^2+(\theta-1)y_1^2} & \dfrac{(\theta+1)x_1^2\lambda_2^n+(\theta-1)y_1^2}{(\theta+1)x_1^2+(\theta-1)y_1^2}\\
         \end{array}
    \right).
\end{eqnarray*}
This completes the proof.
\end{proof}

In what follows, for the sake of simplicity, let us denote
\begin{eqnarray}\label{numbersK_i}
K_0=\frac{\sqrt{\theta}+1}{2}, \quad K_3=\frac{\sqrt{\theta}-1}{2},
\end{eqnarray}
here as before $\theta=\exp\{2\beta\}$. In these notations, the
operator $K_{<u,v>}$ given by \eqref{K<u,v>} can be written as
follows
\begin{eqnarray}\label{compactformofK<u,v>}
K_{<u,v>}=K_{0}\sigma_0^{(u)}\otimes\sigma_0^{(v)}+K_{3}\sigma_3^{(u)}\otimes\sigma_3^{(v)}.
\end{eqnarray}
\begin{remark}\label{K02+K32} In the sequel,  we will frequently use the following identities for the numbers $K_0,$ $K_3$ given by \eqref{numbersK_i}:
\begin{itemize}
  \item [(i)] $K_0^2+K_3^2=\dfrac{\theta+1}{2};$
  \item [(ii)] $2K_0K_3=\dfrac{\theta-1}{2}.$
\end{itemize}
\end{remark}

\begin{proposition}\label{Traceh^i}
Let $K_{<u,v>}$ be given by \eqref{compactformofK<u,v>}, $\overrightarrow{S(x)}=(1,2,\cdots,k),$ and $\bh^{(i)}=h_0^{(i)}\sigma_0^{(i)}+h_3^{(i)}\sigma_3^{(i)},$ where $i\in \overrightarrow{S(x)}.$ Then we have
\begin{eqnarray}
\tr_{x]}\left[\prod_{i\in \overrightarrow{S(x)}}K_{<x,i>}\prod_{i\in \overrightarrow{S(x)}}\bh^{(i)}\prod_{i\in \overleftarrow{S(x)}}K_{<x,i>}\right]=g_0^{(x)}\sigma_0^{(x)}+g_3^{(x)}\sigma_3^{(x)}
\end{eqnarray}
where
\begin{eqnarray}
\left(g_0^{(x)},g_3^{(x)}\right)&=&A_{h^{(1)}}A_{h^{(2)}}\cdots A_{h^{(k-1)}}A_{h^{(k)}}e_1,\quad e_1=(1,0),\\
\label{A_h^i}A_{h^{(i)}}&=&\left(
                \begin{array}{cc}
                  (K_0^2+K_3^2)h_0^{(i)} & 2K_0K_3h_0^{(i)} \\
                  2K_0K_3h_3^{(i)} & (K_0^2+K_3^2)h_0^{(i)} \\
                \end{array}
              \right), \quad i\in \overrightarrow{S(x)}.
\end{eqnarray}
\end{proposition}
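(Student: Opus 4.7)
The strategy is to exploit the fact that every operator appearing inside the partial trace lies in the commutative diagonal subalgebra spanned by $\{\sigma_0,\sigma_3\}$ at each site it touches, so the triple product collapses to a product of commuting factors and the partial trace factorizes site-by-site. First I would verify that all pairs drawn from $\{K_{<x,i>},h^{(j)}\}_{i,j\in\overrightarrow{S(x)}}$ commute: the only shared site is $x$, where every factor is a polynomial in the mutually commuting $\sigma_0^{(x)},\sigma_3^{(x)}$, and at each successor site either only one factor acts nontrivially or both factors lie in the diagonal subalgebra. This commutativity lets me rearrange
$$\prod_{i\in\overrightarrow{S(x)}} K_{<x,i>}\,\prod_{i\in\overrightarrow{S(x)}} h^{(i)}\,\prod_{i\in\overleftarrow{S(x)}} K_{<x,i>} \;=\; \prod_{i\in\overrightarrow{S(x)}}\bigl(K_{<x,i>}^{2}\,h^{(i)}\bigr).$$

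Next, since each factor $K_{<x,i>}^{2}h^{(i)}$ is supported on the sites $\{x,i\}$ only and the successor sites are pairwise disjoint, the normalized partial trace factorizes as $\tr_{x]}=\tr_{1}\tr_{2}\cdots\tr_{k}$; pulling each $\tr_i$ through the other factors gives
$$\tr_{x]}\Bigl[\prod_{i=1}^{k} K_{<x,i>}^{2}h^{(i)}\Bigr]=\prod_{i=1}^{k}T_i,\qquad T_i:=\tr_i\bigl[K_{<x,i>}^{2}h^{(i)}\bigr]\in\cb_x.$$
A direct computation using $(K_0\sigma_0^{(x)}\sigma_0^{(i)}+K_3\sigma_3^{(x)}\sigma_3^{(i)})^{2}=(K_0^{2}+K_3^{2})\sigma_0^{(x)}\sigma_0^{(i)}+2K_0K_3\,\sigma_3^{(x)}\sigma_3^{(i)}$ (recalling $\sigma_3^{2}=\sigma_0$), together with the normalizations $\tr_i(\sigma_0^{(i)})=1$ and $\tr_i(\sigma_3^{(i)})=0$, then yields
$$T_i=(K_0^{2}+K_3^{2})\,h_0^{(i)}\,\sigma_0^{(x)}+2K_0K_3\,h_3^{(i)}\,\sigma_3^{(x)}.$$

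Finally, in the two-dimensional commutative subalgebra $M_2^d(\mathbb{C})\subset\cb_x$ with ordered basis $(\sigma_0^{(x)},\sigma_3^{(x)})$, left multiplication by $T_i$ is represented in coordinates by precisely the matrix $A_{h^{(i)}}$ of the proposition. Iterating multiplication by $T_1,T_2,\ldots,T_k$ starting from the unit $\sigma_0^{(x)}$, whose coordinate column is $e_1=(1,0)^{\mathsf T}$, produces the coordinate column $A_{h^{(1)}}A_{h^{(2)}}\cdots A_{h^{(k)}}e_1=(g_0^{(x)},g_3^{(x)})^{\mathsf T}$ of $T_1T_2\cdots T_k$, which gives the asserted formula. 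The whole argument is a direct algebraic calculation without any serious obstacle; the only place requiring real care is the commutativity/factorization step (which hinges on the diagonality of $K_{<x,i>}$ at site $x$) and the bookkeeping of $h_0^{(i)}$ versus $h_3^{(i)}$ in the off-diagonal entries of $A_{h^{(i)}}$, which depends on the convention for realizing $M_2^d(\mathbb{C})$ as $\mathbb{C}^2$.
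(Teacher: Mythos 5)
Your proof is correct, and in substance it is the same computation as the paper's: each successor site contributes a $2\times 2$ transfer matrix acting on the $(\sigma_0^{(x)},\sigma_3^{(x)})$-coordinates at the root site. The organization differs slightly. The paper peels off the successors one at a time through nested partial traces, computing $\bg_k^{(x)}=\tr_{x]}[K_{<x,k>}\bh^{(k)}K_{<x,k>}]$ first and then $\bg_{i}^{(x)}=\tr_{x]}[K_{<x,i>}\bh^{(i)}\bg_{i+1}^{(x)}K_{<x,i>}]$ recursively, which produces the matrices $A_{h^{(i)}}$ as the linear maps $(g_{i+1,0},g_{i+1,3})\mapsto(g_{i,0},g_{i,3})$. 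You instead use the commutativity of the diagonal algebra to collapse the triple product to $\prod_i K_{<x,i>}^{2}\bh^{(i)}$ and then factorize the partial trace over the disjoint successor sites in one step; this is cleaner and makes the commutativity of the matrices $A_{h^{(i)}}$ (noted in the paper only as a separate remark) automatic, since they are multiplication operators on the commutative algebra $M_2^d(\bc)$. One caveat: the matrix you obtain for multiplication by $T_i$ is the \emph{symmetric} matrix with both off-diagonal entries equal to $2K_0K_3h_3^{(i)}$, which does \emph{not} literally coincide with the displayed $A_{h^{(i)}}$ in \eqref{A_h^i}, whose upper-right entry reads $2K_0K_3h_0^{(i)}$. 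That entry is a misprint in the statement: the paper's own proof (the recursion for $g_{k-1,0}^{(x)}$) and the subsequent Corollaries \ref{traceforalpha_0} and \ref{traceforgamma_0gamma_1} all use the symmetric matrix, so your version is the correct one; this is not a convention issue about realizing $M_2^d(\bc)$ as $\bc^2$, as you suggest, but simply a typo in the proposition.
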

\begin{proof}
It is clear that
\begin{eqnarray*}
&&\tr_{x]}\left[\prod_{i\in \overrightarrow{S(x)}}K_{<x,i>}\prod_{i\in \overrightarrow{S(x)}}\bh^{(i)}\prod_{i\in \overleftarrow{S(x)}}K_{<x,i>}\right]=\\
&&=\tr_{x]}\left[K_{<x,1>}\bh^{(1)}\tr_{x]}\left[K_{<x,2>}\bh^{(2)}\cdots\tr_{x]}
\left[K_{<x,k>}\bh^{(k)}K_{<x,k>}\right]\cdots K_{<x,2>}\right]K_{<x,1>}\right]
\end{eqnarray*}

Let us first evaluate ${\bg}_k^{(x)}:=\tr_{x]}\left[K_{<x,k>}\bh^{(k)}K_{<x,k>}\right].$ From \eqref{compactformofK<u,v>} it follows that
\begin{eqnarray*}
K_{<x,k>}\bh^{(k)}K_{<x,k>}&=&K^2_0\sigma_{0}^{(x)}\otimes
\left(h_0^{(k)}\sigma_0^{(k)}+h_3^{(k)}\sigma_3^{(k)}\right)\\
&&+K_0K_3\sigma_{3}^{(x)}\otimes
\left(h_0^{(k)}\sigma_3^{(k)}+h_3^{(k)}\sigma_0^{(k)}\right)\\
&&+K_0K_3\sigma_{3}^{(x)}\otimes
\left(h_0^{(k)}\sigma_3^{(k)}+h_3^{(k)}\sigma_0^{(k)}\right)\\
&&+K^2_3\sigma_{0}^{(x)}\otimes
\left(h_0^{(k)}\sigma_0^{(k)}+h_3^{(k)}\sigma_3^{(k)}\right)
\end{eqnarray*}
Therefore, one gets
\begin{eqnarray}\label{g3ofTrh3}
{\bg}_k^{(x)}&=&g_{k,0}^{(x)}\sigma_{0}^{(x)}+g_{k,3}^{(x)}\sigma_{3}^{(x)}
\end{eqnarray}
where
\begin{eqnarray}\label{g_1^(3)}
(g_{k,0}^{(x)}, g_{k,3}^{(x)})&=&A_{h^{(k)}}e_1,\quad e_1=(1,0).
\end{eqnarray}

Now, evaluate ${\bg}_{k-1}^{(x)}:=\tr_{x]}\left[K_{<x,k-1>}\bh^{(k-1)}{\bg}_{k}^{(x)}K_{<x,k-1>}\right].$ Using \eqref{compactformofK<u,v>} and \eqref{g3ofTrh3} we find
\begin{eqnarray*}
K_{<x,k-1>}\bh^{(k-1)}{\bg}_{k}^{(x)}K_{<x,k-1>}&=&
K^2_0\left(g_{k,0}^{(x)}\sigma_{0}^{(x)}+g_{k,3}^{(x)}\sigma_{3}^{(x)}\right)\otimes
\left(h_0^{(k)}\sigma_0^{(k)}+h_3^{(k)}\sigma_3^{(k)}\right)\\
&&+K_0K_3\left(g_{k,0}^{(x)}\sigma_{3}^{(x)}+g_{k,3}^{(x)}\sigma_{0}^{(x)}\right)\otimes
\left(h_0^{(k)}\sigma_3^{(k)}+h_3^{(k)}\sigma_0^{(k)}\right)\\
&&+K_0K_3\left(g_{k,0}^{(x)}\sigma_{3}^{(x)}+g_{k,3}^{(x)}\sigma_{0}^{(x)}\right)\otimes
\left(h_0^{(k)}\sigma_3^{(k)}+h_3^{(k)}\sigma_0^{(k)}\right)\\
&&+K^2_3\left(g_{k,0}^{(x)}\sigma_{0}^{(x)}+g_{k,3}^{(x)}\sigma_{3}^{(x)}\right)\otimes
\left(h_0^{(k)}\sigma_0^{(k)}+h_3^{(k)}\sigma_3^{(k)}\right)
\end{eqnarray*}
Hence, one has
\begin{eqnarray}
{\bg}_{k-1}^{(x)}=g_{k-1,0}^{(x)}\sigma_{0}^{(x)}+g_{k-1,3}^{(x)}\sigma_{3}^{(x)}
\end{eqnarray}
where
\begin{eqnarray}
\left(g_{k-1,0}^{(x)},g_{k-1,3}^{(x)}\right)=A_{h^{(k-1)}}\left(g_{k,0}^{(x)},g_{k,3}^{(x)}\right)=A_{h^{(k-1)}}A_{h^{(k)}}e_1
\end{eqnarray}

Similarly, one can evaluate
\begin{eqnarray}
{\bg}_i^{(x)}:=\tr_{x]}\left[K_{<x,i>}\bh^{(i)}{\bg}_{i+1}^{(x)}K_{<x,i>}\right]=g_{i,0}^{(x)}\sigma_{0}^{(x)}+g_{i,3}^{(x)}\sigma_{3}^{(x)}
\end{eqnarray}
where
\begin{eqnarray}
\left(g_{i,0}^{(x)},g_{i,3}^{(x)}\right)=A_{h^{(i)}}\left(g_{i+1,0}^{(x)},g_{i+1,3}^{(x)}\right)=\cdots
=A_{h^{(i)}}A_{h^{(i+1)}}\cdots A_{h^{(k-1)}}A_{h^{(k)}}e_1
\end{eqnarray}
Consequently, we have
\begin{eqnarray*}
\tr_{x]}\left[\prod_{i\in \overrightarrow{S(x)}}K_{<x,i>}\prod_{i\in \overrightarrow{S(x)}}\bh^{(i)}\prod_{i\in \overleftarrow{S(x)}}K_{<x,i>}\right]=g_0^{(x)}\sigma_0^{(x)}+g_3^{(x)}\sigma_3^{(x)},
\end{eqnarray*}
where
\begin{eqnarray*}
\left(g_0^{(x)},g_3^{(x)}\right)=A_{h^{(1)}}A_{h^{(2)}}\cdots A_{h^{(k-1)}}A_{h^{(k)}}e_1.
\end{eqnarray*}
This completes the proof.
\end{proof}
\begin{remark}
One can easily check that for any permutation $\pi$ of the set $\overrightarrow{S(x)}$ we have
$$A_{h^{(1)}}A_{h^{(2)}}\cdots A_{h^{(k-1)}}A_{h^{(k)}}=A_{h^{(\pi(1))}}A_{h^{(\pi(2))}}\cdots A_{h^{(\pi(k-1))}}A_{h^{(\pi(k))}},$$
in other words the matrices $A_{h^{(i)}},$ $i\in \overrightarrow{S(x)}$ commute each other.
\end{remark}

\begin{corollary}\label{traceforalpha_0}
Let $K_{<u,v>}$ be  given by \eqref{compactformofK<u,v>}, $\overrightarrow{S(x)}=(1,2,\cdots,k),$ and
\begin{eqnarray*}
\bh^{(1)}=h_3\sigma_3^{(1)}, \quad
\bh^{(i)}=\alpha_0\sigma_0^{(i)}, \quad i={\overline{2,k}},
\end{eqnarray*}
where $\a_0=\sqrt[k-1]{\Theta^{k}},$ $\Theta=\dfrac{2}{\theta+1},$ and $h_3$ is some positive number.
Then we have
\begin{eqnarray}
\tr_{x]}\left[\prod_{i\in \overrightarrow{S(x)}}K_{<x,i>}\prod_{i\in \overrightarrow{S(x)}}\bh^{(i)}\prod_{i\in \overleftarrow{S(x)}}K_{<x,i>}\right]=h_3\frac{\theta-1}{\theta+1}\sigma_3^{(x)}
\end{eqnarray}
\end{corollary}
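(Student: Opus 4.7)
The plan is to apply Proposition~\ref{Traceh^i} directly, which reduces the left-hand side to computing the vector $A_{h^{(1)}} A_{h^{(2)}}\cdots A_{h^{(k)}} e_1$. Under the given specialization we have $(h_0^{(1)}, h_3^{(1)}) = (0, h_3)$ and $(h_0^{(i)}, h_3^{(i)}) = (\alpha_0, 0)$ for $i = 2, \dots, k$, so the matrices from \eqref{A_h^i} take a particularly simple form: for every $i \geq 2$ the matrix $A_{h^{(i)}}$ is upper triangular with equal diagonal entries $(K_0^2+K_3^2)\alpha_0$, so $A_{h^{(i)}} e_1 = (K_0^2+K_3^2)\alpha_0\, e_1$, while $A_{h^{(1)}}$ has only the lower-left entry nonzero, giving $A_{h^{(1)}} e_1 = (0,\, 2K_0 K_3 h_3)^T$.

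First I would iterate: since each of $A_{h^{(2)}}, \dots, A_{h^{(k)}}$ acts on $e_1$ as multiplication by the same scalar $(K_0^2+K_3^2)\alpha_0$, a trivial induction yields
\[
A_{h^{(2)}}\cdots A_{h^{(k)}} e_1 = \bigl((K_0^2+K_3^2)\alpha_0\bigr)^{k-1}\, e_1.
\]
Applying $A_{h^{(1)}}$ to this vector then gives
\[
(g_0^{(x)}, g_3^{(x)}) = \Bigl(0,\; 2K_0 K_3 h_3 \bigl((K_0^2+K_3^2)\alpha_0\bigr)^{k-1}\Bigr),
\]
so Proposition~\ref{Traceh^i} yields that the trace in question equals $g_3^{(x)}\sigma_3^{(x)}$, with no $\sigma_0^{(x)}$ component.

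It remains only to simplify the scalar coefficient. Invoking Remark~\ref{K02+K32}, we have $K_0^2+K_3^2 = \tfrac{\theta+1}{2}$ and $2K_0 K_3 = \tfrac{\theta-1}{2}$; and from the definition of $\alpha_0$, $\alpha_0^{k-1} = \Theta^k = \bigl(\tfrac{2}{\theta+1}\bigr)^k$. These combine in a telescoping fashion:
\[
\bigl((K_0^2+K_3^2)\alpha_0\bigr)^{k-1} = \Bigl(\tfrac{\theta+1}{2}\Bigr)^{k-1}\Bigl(\tfrac{2}{\theta+1}\Bigr)^k = \tfrac{2}{\theta+1},
\]
and multiplying by $2K_0 K_3 h_3 = \tfrac{\theta-1}{2}\, h_3$ produces $h_3 \cdot \tfrac{\theta-1}{\theta+1}$, which is exactly the claimed coefficient. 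There is no substantive obstacle here: the content of the corollary is the observation that the specific value $\alpha_0 = \sqrt[k-1]{\Theta^k}$ arising from the trivial fixed point $t_1 = 1$ was chosen precisely so that the telescoping collapses to the single factor $\tfrac{2}{\theta+1}$, and inserting a $\sigma_3$ perturbation at one leaf then simply rescales the result by the ``edge factor'' $\tfrac{\theta-1}{\theta+1}$.
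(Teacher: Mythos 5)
Your proof is correct and is essentially the paper's own argument: both reduce the claim to Proposition~\ref{Traceh^i}, note that $A_{h^{(i)}}e_1=(K_0^2+K_3^2)\alpha_0\,e_1$ for $i\ge 2$ while $A_{h^{(1)}}$ sends $e_1$ to $(0,\,2K_0K_3h_3)$, and then simplify via Remark~\ref{K02+K32} together with $\alpha_0^{k-1}=\Theta^k$. The only cosmetic differences are that you act on $e_1$ one matrix at a time instead of forming the full product $A_{h^{(1)}}\cdots A_{h^{(k)}}$ first, and that you read the (misprinted) $(1,2)$ entry of \eqref{A_h^i} literally --- which is harmless here, since every vector in your iteration is a multiple of $e_1$ and only first columns of the $A_{h^{(i)}}$ ever enter the computation.
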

\begin{proof} Let us calculate the matrices $A_{h^{(i)}},$ $i\in\overrightarrow{S(x)}$ which are given by \eqref{A_h^i}.
It is clear that
\begin{eqnarray*}
A_{h^{(1)}}=\left(
  \begin{array}{cc}
    0 & 2K_0K_3h_3 \\
    2K_0K_3h_3 & 0 \\
  \end{array}
\right),\quad A_{h^{(i)}}=\left(
  \begin{array}{cc}
    (K_0^2+K_3^2)\a_0 & 0 \\
    0 & (K_0^2+K_3^2)\a_0 \\
  \end{array}
\right),
\end{eqnarray*}
where $\ i=\overline{2,k}.$ We then have
\begin{eqnarray*}
A_{h^{(1)}}A_{h^{(2)}}\cdots A_{h^{(k)}}=\left(
                               \begin{array}{cc}
                                 0 & 2\a_0^{k-1}(K_0^2+K_3^2)^{k-1}K_0K_3h_3 \\
                                 2\a_0^{k-1}(K_0^2+K_3^2)^{k-1}K_0K_3h_3 & 0 \\
                               \end{array}
                             \right).
\end{eqnarray*}
Therefore, it follows from Remark \ref{K02+K32} and Proposition \ref{Traceh^i} that
\begin{eqnarray*}
\tr_{x]}\left[\prod_{i\in \overrightarrow{S(x)}}K_{<x,i>}\prod_{i\in \overrightarrow{S(x)}}\bh^{(i)}\prod_{i\in \overleftarrow{S(x)}}K_{<x,i>}\right]&=&h_3\a_0^{k-1}\left(\frac{\theta-1}{2}\right)\left(\frac{\theta+1}{2}\right)^{k-1}\sigma_3^{(x)}\\
&=&h_3\frac{\theta-1}{\theta+1}\sigma_3^{(x)}
\end{eqnarray*}
\end{proof}
\begin{corollary}\label{traceforgamma_0gamma_1}
Let $K_{<u,v>}$ be  given by \eqref{compactformofK<u,v>}, $\overrightarrow{S(x)}=(1,2,\cdots,k),$ and
\begin{eqnarray*}
\bh^{(1)}=h_0\sigma_0^{(1)}+h_3\sigma_3^{(1)},\quad \bh^{(i)}=\gamma_0\sigma_0^{(i)}+\gamma_3\sigma_3^{(i)},\ i=\overline{2,k},
\end{eqnarray*}
where the numbers $\gamma_0,\gamma_3$ are given by \eqref{gamma03} and $h_0,h_3$ are some positive numbers.
Then we have
\begin{eqnarray}
\tr_{x]}\left[\prod_{i\in \overrightarrow{S(x)}}K_{<x,i>}\prod_{i\in \overrightarrow{S(x)}}\bh^{(i)}\prod_{i\in \overleftarrow{S(x)}}K_{<x,i>}\right]=h_{0}^{(x)}\sigma_0^{(x)}+h_{3}^{(x)}\sigma_3^{(x)},
\end{eqnarray}
where $h^{(x)}=\mathbb{A}h$ and $\mathbb{A}$ is a matrix given by
\eqref{mainformofmainMatrix}, $h^{(x)}=(h_0^{(x)},h_3^{(x)}),$
$h=(h_0,h_3)$ are vectors.
\end{corollary}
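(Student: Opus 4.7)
The plan is to invoke Proposition~\ref{Traceh^i} to rewrite the left-hand side as $g_0^{(x)}\sigma_0^{(x)}+g_3^{(x)}\sigma_3^{(x)}$, where
\[
(g_0^{(x)},g_3^{(x)})=A_{h^{(1)}}A_{h^{(2)}}\cdots A_{h^{(k)}}\,e_1,
\]
and then to identify this vector with $\mathbb{A}(h_0,h_3)^T$. Since $\bh^{(i)}=\gamma_0\sigma_0^{(i)}+\gamma_3\sigma_3^{(i)}$ for every $i\in\{2,\dots,k\}$, the matrices $A_{h^{(2)}},\dots,A_{h^{(k)}}$ from \eqref{A_h^i} all coincide with a single matrix, which I denote $A_\gamma$; the product then collapses to $A_{h^{(1)}}\cdot A_\gamma^{\,k-1}\cdot e_1$.

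Using Remark~\ref{K02+K32}, the matrix $A_\gamma$ has the $\bigl(\begin{smallmatrix}a&b\\b&a\end{smallmatrix}\bigr)$ form with $a=\tfrac{\theta+1}{2}\gamma_0$ and $b=\tfrac{\theta-1}{2}\gamma_3$; in particular $a+b=\Theta_{+}$ and $a-b=\Theta_{-}$ by \eqref{Theta+Theta-}. Applying the closed-form formula~\eqref{M^n} for powers of such a matrix yields
\[
A_\gamma^{\,k-1}\,e_1=\tfrac{1}{2}\bigl(\Theta_{+}^{k-1}+\Theta_{-}^{k-1},\ \Theta_{+}^{k-1}-\Theta_{-}^{k-1}\bigr)^T.
\]

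Finally, $A_{h^{(1)}}$ has the same shape, with $\tfrac{\theta+1}{2}h_0$ on the diagonal and $\tfrac{\theta-1}{2}h_3$ off-diagonal. Left-multiplying the above vector by $A_{h^{(1)}}$ and collecting the coefficients of $h_0$ and $h_3$ reproduces, row by row, the entries of the matrix $\mathbb{A}$ from \eqref{mainformofmainMatrix} applied to the column vector $(h_0,h_3)^T$. This gives the claim $(g_0^{(x)},g_3^{(x)})^T=\mathbb{A}(h_0,h_3)^T$. The proof is essentially bookkeeping: the only spot requiring care is verifying that the two diagonal and the two off-diagonal coefficients produced by the final matrix-vector multiplication line up correctly with the rows of $\mathbb{A}$ as displayed in \eqref{mainformofmainMatrix}.
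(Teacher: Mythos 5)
Your proof is correct and follows essentially the same route as the paper's: collapse $A_{h^{(2)}}\cdots A_{h^{(k)}}$ to $A_\gamma^{k-1}$, evaluate that power via \eqref{M^n} using $a\pm b=\Theta_{\pm}$, and left-multiply by $A_{h^{(1)}}$ to recover $\mathbb{A}h$. (You also correctly take the off-diagonal entries of the $A$-matrices to be $2K_0K_3h_3^{(i)}$, as the paper's own proof of this corollary does, rather than the misprinted $2K_0K_3h_0^{(i)}$ appearing in \eqref{A_h^i}.)
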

\begin{proof}
Let us calculate the matrices $A_{h^{(i)}},$ $i\in\overrightarrow{S(x)}$ which are given by \eqref{A_h^i}. It follows from Remark \ref{K02+K32} that
\begin{eqnarray*}
A_{h^{(1)}}=\left(
  \begin{array}{cc}
    h_0\dfrac{\theta+1}{2} & h_3\dfrac{\theta-1}{2} \\
    h_3\dfrac{\theta-1}{2} & h_0\dfrac{\theta+1}{2} \\
  \end{array}
\right),\quad A_{h^{(i)}}=\left(
  \begin{array}{cc}
    \gamma_0\dfrac{\theta+1}{2} & \gamma_3\dfrac{\theta-1}{2} \\
    \gamma_3\dfrac{\theta-1}{2} & \gamma_0\dfrac{\theta+1}{2} \\
  \end{array}
\right),
\end{eqnarray*}
where $\ i=\overline{2,k}.$ By means \eqref{M^n}  we then get
\begin{eqnarray*}
A_{h^{(2)}}\cdots A_{h^{(k)}}=A_{h^{(2)}}^{k-1}=\frac{1}{2}\left(
                               \begin{array}{cc}
                                 \Theta_{+}^{k-1}+\Theta_{-}^{k-1} & \Theta_{+}^{k-1}-\Theta_{-}^{k-1} \\
                                 \Theta_{+}^{k-1}-\Theta_{-}^{k-1} & \Theta_{+}^{k-1}+\Theta_{-}^{k-1}  \\
                               \end{array}
                             \right).
\end{eqnarray*}
here as before $\Theta_{+},\Theta_{-}$ are given by
\eqref{Theta+Theta-}. After simple algebra, it follows from
Proposition \ref{Traceh^i} that
\begin{eqnarray*}
h^{(x)}=A_{h^{(1)}}A_{h^{(2)}}\cdots A_{h^{(k)}}e_1=\mathbb{A}h
\end{eqnarray*}
where the matrix $\mathbb{A}$ is given by
\eqref{mainformofmainMatrix} and $h^{(x)}=(h_0^{(x)},h_3^{(x)}),$
$h=(h_0,h_3)$ are vectors.
\end{proof}
Let us consider the following elements:
\begin{eqnarray}
\sigma_0^{\Lambda}:=\bigotimes_{x\in \Lambda}\sigma_0^{(x)}\in\cb_{\Lambda}, \ \Lambda\subset\Lambda_n,\quad\quad
\sigma_3^{\overrightarrow{S(x)},1}:=\sigma_3^{(1)}\otimes\sigma_0^{(2)}\otimes\cdots\otimes\sigma_0^{(k)}\in \cb_{S(x)},\\
\sigma_3^{\overrightarrow{W}_{n+1},1}:=\sigma_3^{\overrightarrow{S(x_{W_n}^{(1)})},1}
\otimes\sigma_0^{\overrightarrow{W}_{n+1}\setminus\overrightarrow{S(x_{W_n}^{(1)})}}\in\cb_{W_{n+1}},\\
\label{elementa}a_{\sigma_{3}}^{\Lambda_{n+1}}:=\bigotimes_{i=0}^n\sigma_0^{\overrightarrow{W}_{i}}\otimes\sigma_3^{\overrightarrow{W}_{n+1},1}
\in\cb_{\Lambda_{n+1}}.
\end{eqnarray}

\begin{proposition}\label{evaluatestateahlpha}
Let $\ffi^{(f)}_{w_0(\a_0),{\bh(\alpha_0)}}$ be a forward QMC
corresponding to the model \eqref{compactformofK<u,v>} with boundary
conditions $\omega_0(\a_0)=\frac{1}{\a_0}\sigma_0$ and
$\bh^{(x)}=\alpha_0\sigma_0^{(x)}$ for all $x\in L,$ where
$\alpha_0=\sqrt[k-1]{\Theta^k},$ $\Theta=\frac{2}{\theta+1}.$ Let
$a_{\sigma_{3}}^{\Lambda_{N+1}}$ be an element given by
\eqref{elementa} and  $\theta>\frac{k+1}{k-1}.$ Then one has
$\ffi^{(f)}_{w_0(\a_0),{\bh(\alpha_0)}}\left(a_{\sigma_{3}}^{\Lambda_{N+1}}\right)=0,$
for any $N\in\bn.$
\end{proposition}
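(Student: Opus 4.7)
Since $a_{\sigma_3}^{\Lambda_{N+1}}\in\cb_{\Lambda_{N+1}}$ and $(w_0(\alpha_0),\bh(\alpha_0))$ satisfies \eqref{eq1}--\eqref{eq2}, the compatibility relation \eqref{compatibility} combined with Proposition \ref{state^nwithW_n} reduces the claim to
\[
\tr\bigl(\cw_{N+1]}\,a_{\sigma_3}^{\Lambda_{N+1}}\bigr)=0.
\]
The crucial observation is that every operator entering $\cw_{N+1]}$ (namely $w_0(\alpha_0)$, each $K_{<x,y>}$ from \eqref{compactformofK<u,v>}, and the boundary $\bh_{N+1}$), as well as $a_{\sigma_3}^{\Lambda_{N+1}}$, lies in the commutative diagonal subalgebra. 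This allows me to reorder factors freely and to evaluate the trace as an iterated partial trace running from the leaves of $\Lambda_{N+1}$ up to the root $x^{0}$.

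More precisely, let $y^{*}:=x^{(1)}_{W_{N+1}}$ be the unique site of $W_{N+1}$ at which $\sigma_3$ is inserted, and for $0\le n\le N+1$ let $x_n$ be the ancestor of $y^{*}$ at level $n$; in the coordinate system of Section \ref{dfcayley}, $x_n=(1,\dots,1)$ with $n$ ones. For every $x\in\Lambda_{N+1}$ I introduce an ``effective boundary'' $B(x)\in\cb_x$, defined at the leaves by $B(y^{*}):=\alpha_0\sigma_3^{(y^{*})}$ and $B(y):=\alpha_0\sigma_0^{(y)}$ for $y\in W_{N+1}\setminus\{y^{*}\}$, and recursively at interior vertices by
\[
B(x)\;:=\;\tr_{x]}\Bigl[\prod_{y\in\overrightarrow{S(x)}}K_{<x,y>}\prod_{y\in\overrightarrow{S(x)}}B(y)\prod_{y\in\overleftarrow{S(x)}}K_{<x,y>}\Bigr].
\]
Commutativity in the diagonal subalgebra guarantees that the trace in question equals $\tr\bigl(w_0(\alpha_0)\,B(x^{0})\bigr)$, so everything reduces to computing $B(x^{0})$.

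I then run two parallel inductions on depth. \emph{First}, for any vertex $x$ whose subtree contains no $\sigma_3$-insertion, $B(x)=\alpha_0\sigma_0^{(x)}$; this is immediate by depth induction, because $h_x(\alpha_0)=\alpha_0\sigma_0^{(x)}$ is a fixed point of \eqref{eq2}. \emph{Second}, at each ancestor $x_n$ ($0\le n\le N$), $B(x_n)=c_n\sigma_3^{(x_n)}$ with $c_{N+1}=\alpha_0$ and $c_n=c_{n+1}\cdot\tfrac{\theta-1}{\theta+1}$. The inductive step is exactly Corollary \ref{traceforalpha_0}: the $k-1$ siblings of $x_{n+1}$ carry the trivial boundary $\alpha_0\sigma_0$ by the first induction, while $x_{n+1}$, being the child labelled $1\in\overrightarrow{S(x_n)}$, matches the ``distinguished index'' hypothesis of the corollary and carries $c_{n+1}\sigma_3^{(x_{n+1})}$ from the previous step. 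Iterating $N+1$ times gives $B(x^{0})=c_0\sigma_3^{(x^{0})}$ with $c_0=\alpha_0\bigl(\tfrac{\theta-1}{\theta+1}\bigr)^{N+1}$, and therefore
\[
\tr\bigl(\cw_{N+1]}\,a_{\sigma_3}^{\Lambda_{N+1}}\bigr)=\tr\Bigl(\tfrac{1}{\alpha_0}\sigma_0^{(x^{0})}\cdot c_0\sigma_3^{(x^{0})}\Bigr)=\tfrac{c_0}{\alpha_0}\tr(\sigma_3)=0.
\]

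The main obstacle is essentially organisational: one must verify that the iterated partial trace really factorises branch-by-branch (which rests on diagonal-algebra commutativity and the free reordering it permits) and that the indexing of children at each level really lines up with the hypothesis of Corollary \ref{traceforalpha_0}. Both points follow from the coordinate conventions fixed in Section \ref{dfcayley} and the choice of $y^{*}=(1,\dots,1)$; once set up, the remainder is a direct iteration of \eqref{eq2} and Corollary \ref{traceforalpha_0}.
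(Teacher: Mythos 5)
Your proposal is correct and follows essentially the same route as the paper: reduce to the finite-volume trace via compatibility and Proposition \ref{state^nwithW_n}, then peel off the levels of the tree by iterated partial traces, applying Corollary \ref{traceforalpha_0} along the distinguished branch $(1,\dots,1)$ and the fixed-point equation \eqref{eq2} on all other branches, ending with $\tr(\sigma_3)=0$. Your vertex-indexed recursion $B(x)$ is just a cleaner bookkeeping of the paper's level-by-level computation of $\widetilde{\bh}_N,\widetilde{\bh}_{N-1},\dots$, and your constant $c_0=\alpha_0\bigl(\tfrac{\theta-1}{\theta+1}\bigr)^{N+1}$ is consistent with the paper's conclusion (the final answer being $0$ regardless).
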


\begin{proof}
Due to \eqref{eq2} (see Theorem \ref{compa}) the compatibility condition holds
$\ffi^{(n+1,f)}_{w_0(\a_0),\bh(\alpha_0)}\lceil_{\cb_{\L_n}}=\ffi^{(n,f)}_{w_0(\a_0),\bh(\alpha_0)}.$ Therefore,
\begin{eqnarray}
\ffi^{(f)}_{w_0(\a_0),{\bh(\alpha_0)}}\left(a_{\sigma_{3}}^{\Lambda_{N+1}}\right)=
w-\lim_{n\to\infty}\ffi^{(n,f)}_{w_0(\a_0),\bh(\alpha_0)}
\left(a_{\sigma_{3}}^{\Lambda_{N+1}}\right)=\ffi^{(N+1,f)}_{w_0(\a_0),\bh(\alpha_0)}\left(a_{\sigma_{3}}^{\Lambda_{N+1}}\right).
\end{eqnarray}
Taking into account $w_0(\a_0)=\frac{1}{\alpha_0}\sigma_0$ and due to Proposition \ref{state^nwithW_n}, it is enough to evaluate the following
\begin{eqnarray}
\ffi^{(N+1,f)}_{w_0(\a_0),\bh(\alpha_0)}\left(a_{\sigma_{3}}^{\Lambda_{N+1}}\right)
&=&\tr\left(\cw_{N+1]}\left(a_{\sigma_{3}}^{\Lambda_{N+1}}\right)\right)\nonumber\\
&=&\frac{1}{\alpha_0}\tr\left[K_{[0,1]}\cdots K_{[N,N+1]}\bh_{N+1}K^{*}_{[N,N+1]}\cdots K^{*}_{[0,1]}
a_{\sigma_{3}}^{\Lambda_{N+1}}\right]\nonumber\\
&=&\frac{1}{\alpha_0}\tr\Bigl[K_{[0,1]}\cdots K_{[N-1,N]}\Bigr.\nonumber\\
&&\left.\quad\quad\quad\quad\tr_{N]}\left[K_{[N,N+1]}\bh_{N+1}K^{*}_{[N,N+1]}\sigma_3^{\overrightarrow{W}_{N+1},1}\right]K^{*}_{[N-1,N]}\cdots K^{*}_{[0,1]}\right].\nonumber
\end{eqnarray}
Now let us calculate $\widetilde{\bh}_N:=\tr_{N]}\left[K_{[N,N+1]}\bh_{N+1}K^{*}_{[N,N+1]}\sigma_3^{\overrightarrow{W}_{n+1},1}\right].$ Since $K_{<u,v>}$ is a self-adjoint, we then get
\begin{eqnarray*}
{\widetilde{\bh}}_N&=&\tr_{\left.x^{(1)}_{W_{N}}\right]}\left[\prod_{y\in \overrightarrow{S(x^{(1)}_{W_{N}})}}K_{\left\langle x^{(1)}_{W_{N}},y\right\rangle}\prod_{y\in \overrightarrow{S(x^{(1)}_{W_{N}})}}\bh^{(y)}\prod_{y\in \overleftarrow{S(x^{(1)}_{W_{N}})}}K_{\left\langle x^{(1)}_{W_{N}},y\right\rangle}\sigma_3^{\overrightarrow{S(x_{W_N}^{(1)})},1}\right]\otimes\\
&&\bigotimes_{x\in \overrightarrow{W}_{N}\setminus x^{(1)}_{W_{N}}}\tr_{x]}\left[\prod_{y\in \overrightarrow{S(x)}}K_{<x,y>}\prod_{y\in \overrightarrow{S(x)}}\bh^{(y)}\prod_{y\in \overleftarrow{S(x)}}K_{<x,y>}\right].
\end{eqnarray*}
We know that
\begin{eqnarray}\label{hhhhhh}
\tr_{x]}\left[\prod_{y\in \overrightarrow{S(x)}}K_{<x,y>}\prod_{y\in \overrightarrow{S(x)}}\bh^{(y)}\prod_{y\in \overleftarrow{S(x)}}K_{<x,y>}\right]=\bh^{(x)},
\end{eqnarray}
for every $x\in \overrightarrow{W}_{N}\setminus x^{(1)}_{W_{N}}.$ On the other hand, since operators $K_{<u,v>}$ and $\sigma_3^{(x)}$ commute each other for any $u,v,x\in L$ it follows from Corollary \ref{traceforalpha_0} that
\begin{eqnarray}
\tr_{\left.x^{(1)}_{W_{N}}\right]}\left[\prod_{y\in \overrightarrow{S(x^{(1)}_{W_{N}})}}K_{\left\langle x^{(1)}_{W_{N}},y\right\rangle}\prod_{y\in \overrightarrow{S(x^{(1)}_{W_{N}})}}\bh^{(y)}\prod_{y\in \overleftarrow{S(x^{(1)}_{W_{N}})}}K_{\left\langle x^{(1)}_{W_{N}},y\right\rangle}\sigma_3^{\overrightarrow{S(x_{W_N}^{(1)})},1}\right]={\widetilde{\bh}}^{(x^{(1)}_{W_{N}})},
\end{eqnarray}
where
\begin{eqnarray*}
{\widetilde{\bh}}^{(x^{(1)}_{W_{N}})}=\alpha_0\frac{\theta-1}{\theta+1}\sigma_3^{(x^{(1)}_{W_{N}})}.
\end{eqnarray*}
Hence, we obtain
\begin{eqnarray*}
{\widetilde{\bh}}_N={\widetilde{\bh}}^{(x^{(1)}_{W_{N}})}\bigotimes_{x\in \overrightarrow{W}_{N}\setminus x^{(1)}_{W_{N}}}\bh^{(x)}.
\end{eqnarray*} Therefore, one finds
\begin{eqnarray*}
\ffi^{(N+1,f)}_{w_0,\bh(\alpha_0)}\left(a_{\sigma_{3}}^{\Lambda_{N+1}}\right)&=&\frac{1}{\alpha_0}\tr\Bigl[K_{[0,1]}\cdots K_{[N-2,N-1]}\Bigr.\nonumber\\
&&\left.\quad\quad\quad\quad\quad\tr_{N-1]}\left[K_{[N-1,N]}{\widetilde{\bh}}_{N}K^{*}_{[N-1,N]}\right]K^{*}_{[N-2,N-1]}\cdots K^{*}_{[0,1]}\right].
\end{eqnarray*}
So, after $N$ times applying Corollary \eqref{traceforalpha_0}, we get
\begin{eqnarray*}
\ffi^{(N+1,f)}_{w_0,\bh(\alpha_0)}\left(a_{\sigma_{3}}^{\Lambda_{N+1}}\right)
=\a_0^{N-1}\left(\frac{\theta-1}{\theta+1}\right)^{2N}\tr(\sigma_3^{(0)})=0.
\end{eqnarray*}
This completes the proof.
\end{proof}

\begin{proposition}\label{evaluatestategamma}
Let $\ffi^{(f)}_{w_0(\gamma),{\bh(\gamma)}}$ be a forward QMC
corresponding to the model \eqref{compactformofK<u,v>} with boundary
conditions $\omega_0(\gamma)=\frac{1}{\gamma_0}\sigma_0$ and
$\bh^{(x)}=\gamma_0\sigma_0^{(x)}+\gamma_3\sigma_3^{(x)}$ for all
$x\in L,$ where $\gamma_0, \gamma_3$ are given by \eqref{gamma03}.
Let $a_{\sigma_{3}}^{\Lambda_{N+1}}$ be an element given by
\eqref{elementa} and  $\theta>\frac{k+1}{k-1}.$ Then one has
\begin{eqnarray}
\ffi^{(f)}_{w_0,{\bh(\gamma)}}\left(a_{\sigma_{3}}^{\Lambda_{N+1}}\right)=
\frac{1}{\gamma_0}\Bigl\langle
\mathbb{A}^{N+1}h_{\gamma_0,\gamma_3},e\Bigr\rangle\quad \quad
\forall N\in\bn,
\end{eqnarray}
where $\mathbb{A}$ is a matrix given by \eqref{anotherformofA},
$\Bigl\langle\cdot,\cdot\Bigr\rangle$ is a inner product of vectors
and $e=(1,0),$ $h_{\gamma_0,\gamma_3}=(\gamma_3,\gamma_0)$ are
vectors.
\end{proposition}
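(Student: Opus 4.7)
The plan is to mirror the argument of Proposition~\ref{evaluatestateahlpha}, replacing the scalar recurrence along the distinguished upward path by a vector recurrence governed by the matrix $\mathbb{A}$. First I would use the compatibility relation \eqref{compatibility} together with Proposition~\ref{state^nwithW_n} to write
\[
\ffi^{(f)}_{w_0(\gamma),\bh(\gamma)}\bigl(a_{\sigma_{3}}^{\Lambda_{N+1}}\bigr)=\tr\bigl(\cw_{N+1]}\,a_{\sigma_{3}}^{\Lambda_{N+1}}\bigr),
\]
pull out the prefactor $\frac{1}{\gamma_0}$ from $w_0(\gamma)=\frac{1}{\gamma_0}\sigma_0$, and use that each $K^{*}_{[i,i+1]}$ with $i<N$ commutes with the single-site factor $\sigma_{3}^{(x^{(1)}_{W_{N+1}})}$ to reduce the expression to
\[
\frac{1}{\gamma_0}\tr\bigl[K_{[0,1]}\cdots K_{[N-1,N]}\,\widetilde{\bh}_N\,K^{*}_{[N-1,N]}\cdots K^{*}_{[0,1]}\bigr],
\]
where $\widetilde{\bh}_N:=\tr_{N]}\bigl[K_{[N,N+1]}\bh_{N+1}K^{*}_{[N,N+1]}\sigma_{3}^{\overrightarrow{W}_{N+1},1}\bigr]$, exactly as in the proof of Proposition~\ref{evaluatestateahlpha}.

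Next I would set $\widetilde{\bh}_{N+1}:=\bh_{N+1}\sigma_{3}^{\overrightarrow{W}_{N+1},1}$ and, for $m=N,N-1,\dots,0$, define iteratively $\widetilde{\bh}_m:=\tr_{m]}\bigl[K_{[m,m+1]}\widetilde{\bh}_{m+1}K^{*}_{[m,m+1]}\bigr]$. The key step is to prove by downward induction on $m$ the following factorization: at every $y\in W_m$ with $y\neq x^{(1)}_{W_m}$ the tensor factor of $\widetilde{\bh}_m$ equals the fixed-point value $\gamma_0\sigma_0^{(y)}+\gamma_3\sigma_3^{(y)}$, whereas the distinguished site $x^{(1)}_{W_m}$ carries $a_0\sigma_0+a_3\sigma_3$ with $(a_0,a_3)=\mathbb{A}^{N+1-m}h_{\gamma_0,\gamma_3}$. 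The base case $m=N+1$ is immediate because $\sigma_3(\gamma_0\sigma_0+\gamma_3\sigma_3)=\gamma_3\sigma_0+\gamma_0\sigma_3$, i.e.\ the initial vector at the distinguished site is $(\gamma_3,\gamma_0)=h_{\gamma_0,\gamma_3}$. In the induction step the partial trace factorizes over the $k$ children of each $y\in W_m$: at non-distinguished $y$ all children carry the fixed-point vector, so \eqref{eq2} reproduces $\gamma_0\sigma_0^{(y)}+\gamma_3\sigma_3^{(y)}$; at $y=x^{(1)}_{W_m}$ only the leftmost child $x^{(1)}_{W_{m+1}}$ carries the modified vector $\mathbb{A}^{N-m}h_{\gamma_0,\gamma_3}$, the remaining $k-1$ children carry the standard value, and Corollary~\ref{traceforgamma_0gamma_1} applies, producing one further factor of $\mathbb{A}$.

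Once this factorization is established, at $m=0$ it collapses to a single-site element $\widetilde{\bh}_0=g_0\sigma_0^{(x^0)}+g_3\sigma_3^{(x^0)}$ with $(g_0,g_3)=\mathbb{A}^{N+1}h_{\gamma_0,\gamma_3}$. Since the normalized trace satisfies $\tr(\sigma_0)=1$ and $\tr(\sigma_3)=0$, telescoping the partial traces will give
\[
\tr\bigl(\cw_{N+1]}\,a_{\sigma_{3}}^{\Lambda_{N+1}}\bigr)=\frac{1}{\gamma_0}\tr(\widetilde{\bh}_0)=\frac{g_0}{\gamma_0}=\frac{1}{\gamma_0}\bigl\langle\mathbb{A}^{N+1}h_{\gamma_0,\gamma_3},\,e\bigr\rangle,
\]
which is the claimed identity.

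The main obstacle I anticipate is the locality assertion inside the induction step: one must verify that after each downward partial-trace step the deviation from the fixed-point configuration stays strictly confined to the single site $x^{(1)}_{W_m}$ on the upward path, so that on the complementary part of $W_m$ the recurrence really does collapse via~\eqref{eq2}. Once this locality is secured, the iteration reduces to successive applications of Corollary~\ref{traceforgamma_0gamma_1} along a single path, and the matrix $\mathbb{A}$ enters exactly $N+1$ times, once per level from $W_{N+1}$ down to $W_0$.
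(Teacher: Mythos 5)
Your proposal is correct and follows essentially the same route as the paper's proof: reduce via compatibility and Proposition \ref{state^nwithW_n}, absorb the $\sigma_3$ into the boundary term at the distinguished leaf to get the initial vector $(\gamma_3,\gamma_0)$, and then apply Corollary \ref{traceforgamma_0gamma_1} once per level down to the root while \eqref{eq2} keeps all off-path sites at the fixed-point value. The only difference is presentational: you phrase the $N+1$ successive applications of $\mathbb{A}$ as an explicit downward induction with a stated locality invariant, whereas the paper carries out the first step in detail and then asserts the remaining $N$ iterations.
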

\begin{proof}
Again the compatibility condition yields that
\begin{eqnarray}
\ffi^{(f)}_{w_0,{\bh(\gamma)}}\left(a_{\sigma_{3}}^{\Lambda_{N+1}}\right)=w-\lim_{n\to\infty}\ffi^{(n,f)}_{w_0,\bh(\gamma)}
\left(a_{\sigma_{3}}^{\Lambda_{N+1}}\right)=\ffi^{(N+1,f)}_{w_0,\bh(\gamma)}\left(a_{\sigma_{3}}^{\Lambda_{N+1}}\right).
\end{eqnarray}
Due to Proposition \ref{state^nwithW_n}, it is enough to evaluate the following
\begin{eqnarray}
\ffi^{(N+1,f)}_{w_0,\bh(\gamma)}\left(a_{\sigma_{3}}^{\Lambda_{N+1}}\right)&=&\frac{1}{\gamma_0}\tr\Bigl[K_{[0,1]}\cdots K_{[N-1,N]}\Bigr.\nonumber\\
&&\left.\quad\quad\quad\tr_{N]}\left[K_{[N,N+1]}\bh_{N+1}K^{*}_{[N,N+1]}\sigma_3^{\overrightarrow{W}_{N+1},1}\right]K^{*}_{[N-1,N]}\cdots K^{*}_{[0,1]}\right].\nonumber
\end{eqnarray}
Let us calculate $\widetilde{\bh}_N:=\tr_{N]}\left[K_{[N,N+1]}\bh_{N+1}K^{*}_{[N,N+1]}\sigma_3^{\overrightarrow{W}_{n+1},1}\right].$ Self-adjointness of $K_{<u,v>}$ implies that
\begin{eqnarray*}
{\widetilde{\bh}}_N&=&\tr_{\left.x^{(1)}_{W_{N}}\right]}\left[\prod_{y\in \overrightarrow{S(x^{(1)}_{W_{N}})}}K_{\left\langle x^{(1)}_{W_{N}},y\right\rangle}\prod_{y\in \overrightarrow{S(x^{(1)}_{W_{N}})}}\bh^{(y)}\prod_{y\in \overleftarrow{S(x^{(1)}_{W_{N}})}}K_{\left\langle x^{(1)}_{W_{N}},y\right\rangle}\sigma_3^{\overrightarrow{S(x_{W_N}^{(1)})},1}\right]\otimes\\
&&\bigotimes_{x\in \overrightarrow{W}_{N}\setminus x^{(1)}_{W_{N}}}\tr_{x]}\left[\prod_{y\in \overrightarrow{S(x)}}K_{<x,y>}\prod_{y\in \overrightarrow{S(x)}}\bh^{(y)}\prod_{y\in \overleftarrow{S(x)}}K_{<x,y>}\right].
\end{eqnarray*}
We know that
\begin{eqnarray*}
\tr_{x]}\left[\prod_{y\in \overrightarrow{S(x)}}K_{<x,y>}\prod_{y\in \overrightarrow{S(x)}}\bh^{(y)}\prod_{y\in \overleftarrow{S(x)}}K_{<x,y>}\right]=\bh^{(x)},
\end{eqnarray*}
for every $x\in \overrightarrow{W}_{N}\setminus x^{(1)}_{W_{N}}.$
On the other hand, since operators $K_{<u,v>}$ and $\sigma_3^{(x)}$ commute each other for any $u,v,x\in L$ it follows from Corollary \ref{traceforgamma_0gamma_1} that
\begin{eqnarray*}
\tr_{\left.x^{(1)}_{W_{N}}\right]}\left[\prod_{y\in \overrightarrow{S(x^{(1)}_{W_{N}})}}K_{\left\langle x^{(1)}_{W_{N}},y\right\rangle}\prod_{y\in \overrightarrow{S(x^{(1)}_{W_{N}})}}\bh^{(y)}\prod_{y\in \overleftarrow{S(x^{(1)}_{W_{N}})}}K_{\left\langle x^{(1)}_{W_{N}},y\right\rangle}\sigma_3^{\overrightarrow{S(x_{W_N}^{(1)})},1}\right]
={\widetilde{\bh}}^{(x^{(1)}_{W_{N}})},
\end{eqnarray*}
where
\begin{eqnarray*}
{\widetilde{\bh}}^{(x^{(1)}_{W_{N}})}=h_0\sigma_0^{(x^{(1)}_{W_{N}})}+h_3\sigma_3^{(x^{(1)}_{W_{N}})}, \ (h_0,h_3)=Ah_{\gamma_0,\gamma_3}, \ h_{\gamma_0,\gamma_3}=(\gamma_3,\gamma_0).
\end{eqnarray*}
 Thus we obtain
\begin{eqnarray*}
{\widetilde{\bh}}_N={\widetilde{\bh}}^{(x^{(1)}_{W_{N}})}\bigotimes_{x\in \overrightarrow{W}_{N}\setminus x^{(1)}_{W_{N}}}\bh^{(x)}.
\end{eqnarray*} Therefore, one gets
\begin{eqnarray*}
\ffi^{(N+1,f)}_{w_0,\bh(\gamma)}\left(a_{\sigma_{3}}^{\Lambda_{N+1}}\right)&=&\frac{1}{\gamma_0}\tr\Bigl[K_{[0,1]}\cdots K_{[N-2,N-1]}\Bigr.\nonumber\\
&&\left.\quad\quad\quad\quad\quad\tr_{N-1]}\left[K_{[N-1,N]}{\widetilde{\bh}}_{N}K^{*}_{[N-1,N]}\right]K^{*}_{[N-2,N-1]}\cdots K^{*}_{[0,1]}\right].
\end{eqnarray*}
Again applying $N$ times Corollary \ref{traceforgamma_0gamma_1}, one finds
\begin{eqnarray*}
\ffi^{(N+1,f)}_{w_0,\bh(\gamma)}\left(a_{\sigma_{3}}^{\Lambda_{N+1}}\right)
&=&\frac{1}{\gamma_0}\Bigl\langle
\mathbb{A}^{N+1}h_{\gamma_0,\gamma_3}, e\Bigr\rangle.
\end{eqnarray*}
This completes the proof.
\end{proof}

To prove our main result we are going to use the following theorem
(see \cite{BR}, Corollary 2.6.11).
\begin{theorem}\label{br-q}
Let $\varphi_1,$ $\varphi_2$ be two states on a quasi-local algebra $\ga=\cup_{\Lambda}\ga_\Lambda$. The states $\varphi_1,$ $\varphi_2$ are  quasi-equivalent if and only if for any given $\varepsilon>0$ there exists a finite volume $\Lambda\subset L$ such that $\|\varphi_1(a)-\varphi_2(a)\|<\varepsilon \|a\|$ for all $a\in B_{\Lambda^{'}}$ with $\Lambda^{'}\cap\Lambda=\emptyset.$
\end{theorem}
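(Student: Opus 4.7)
This is Corollary 2.6.11 of \cite{BR}, and the paper plainly intends to invoke it as a black-box criterion rather than prove it from scratch. If I were to reconstruct its proof, I would start from the standard reformulation: $\varphi_1$ and $\varphi_2$ are quasi-equivalent iff their folia (the sets of normal states on the respective GNS von Neumann algebras $\pi_i(\ga)''$) coincide, equivalently iff there is a $*$-isomorphism $\tau\colon \pi_1(\ga)''\to\pi_2(\ga)''$ intertwining the two GNS representations. The task is then to translate this algebraic equivalence into the asymptotic metric condition of the theorem.

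For the forward direction, assume quasi-equivalence and identify $\pi_2(\ga)''$ with $\pi_1(\ga)''$ via $\tau$. The difference $\varphi_1-\varphi_2$ extends to a normal linear functional on $\pi_1(\ga)''$, which can be written as a norm-convergent series of vector functionals $\omega_{\xi,\eta}$ on the GNS Hilbert space. Then I would exploit the asymptotic commutativity built into the quasi-local structure: for any fixed finite $\Lambda$, $\pi_1(\ga_{\Lambda'})$ lies in the commutant of $\pi_1(\ga_\Lambda)$ whenever $\Lambda'\cap\Lambda=\emptyset$, and a density argument (approximating $\xi,\eta$ by vectors generated by the action of $\ga_\Lambda$ on the cyclic vector of $\pi_1$) shows each individual vector functional becomes uniformly small on such $\ga_{\Lambda'}$ once $\Lambda$ is large. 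Summing over the series and choosing $\Lambda$ large enough to control the tail produces the required $\varepsilon$-estimate.

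For the converse, assume the local $\varepsilon$-closeness. Here I would approximate $\varphi_2$ by normal perturbations of $\varphi_1$: since the two states agree up to $\varepsilon$ on elements localized away from $\Lambda$, their difference is effectively supported in the finite-dimensional matrix algebra $\ga_\Lambda$, which contributes only normal perturbations in the GNS of $\varphi_1$. Letting $\varepsilon\to 0$ along finite volumes places $\varphi_2$ in the folium of $\varphi_1$, and by symmetry the folia coincide, giving quasi-equivalence. The main obstacle in turning this sketch into a clean proof is the careful handling of folia and normal extensions (the identifications are representation-theoretic, not algebraic), which is precisely why citing the Bratteli-Robinson result directly is the efficient route adopted by the authors.
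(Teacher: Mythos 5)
The paper contains no proof of this statement at all: it is quoted as Corollary 2.6.11 of Bratteli--Robinson and used purely as a black box, so your reading of the authors' intent is exactly right, and there is no in-paper argument to compare yours against. What can be assessed is whether your reconstruction sketch would actually close, and as written it would not.

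The gap is in the forward direction. Asymptotic commutativity does not make individual vector functionals small at infinity: for $\xi=\pi_1(B)\Omega_1$, $\eta=\pi_1(C)\Omega_1$ with $B,C\in\mathfrak{A}_{\Lambda}$ and $A\in\mathfrak{A}_{\Lambda'}$, one has $\langle\eta,\pi_1(A)\xi\rangle=\varphi_1(C^*AB)$, which is of order $\|A\|$ in general and does not decay as $\Lambda$ grows. What the Bratteli--Robinson argument actually establishes is a clustering statement, $\varphi_1(C^*AB)\approx\varphi_1(C^*B)\,\varphi_1(A)$ for $A$ localized away from $\Lambda$, so that the \emph{difference} $\varphi_1-\varphi_2$ (not each vector functional separately) dies on the tail; and this clustering rests on the triviality of the algebra at infinity $\bigcap_{\Lambda}\pi(\mathfrak{A}_{\Lambda^c})''$, which in turn requires $\varphi_1,\varphi_2$ to be \emph{factor} states. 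That hypothesis is present in the genuine Corollary 2.6.11 and has been silently dropped in the paper's transcription (nor is factoriality ever verified for the two QMCs to which the criterion is subsequently applied --- an issue with the paper, not with you). Your converse is similarly underspecified: ``the difference is effectively supported in $\mathfrak{A}_{\Lambda}$'' is not an argument that survives the limit $\varepsilon\to 0$; the standard route uses the dichotomy that factor states are either quasi-equivalent or disjoint, together with the fact that disjoint states remain at norm distance $2$ on every tail algebra $\mathfrak{A}_{\Lambda^c}$. None of this affects the paper's use of the theorem as an external citation, but your sketch needs the factoriality input before it becomes a proof.
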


Now by means of the Theorem \ref{br-q} we will show that the states
$\ffi^{(f)}_{w_0(\a_0),{\bh(\alpha_0)}}$ and
$\ffi^{(f)}_{w_0(\gamma),{\bh(\gamma)}}$ are not quasi-equivalent.
Namely, we have the following

\begin{theorem}\label{alpha0gamma03}
Let $\theta>\frac{k+1}{k-1}$ and
$\ffi^{(f)}_{w_0(\a_0),{\bh(\alpha_0)}},$
$\ffi^{(f)}_{w_0(\gamma),{\bh(\gamma)}}$ be two forward QMC
corresponding to the model \eqref{compactformofK<u,v>} with two
boundary conditions $\omega_0(\a_0)=\frac{1}{\a_0}\sigma_0,$
$\bh^{(x)}=\alpha_0\sigma_0^{(x)},$  $\forall x\in L$ and
$\omega_0(\gamma)=\frac{1}{\gamma_0}\sigma_0,$
$\bh^{(x)}=\gamma_0\sigma_0^{(x)}+\gamma_3\sigma_3^{(x)},$  $\forall
x\in L,$ respectively, here as before
$\alpha_0=\sqrt[k-1]{\Theta^k},$ $\gamma_0$ and $\gamma_3$ are given
by \eqref{gamma03}. Then $\ffi^{(f)}_{w_0(\a_0),{\bh(\alpha_0)}}$
and $\ffi^{(f)}_{w_0(\gamma),{\bh(\gamma)}}$ are not
quasi-equivalent.
\end{theorem}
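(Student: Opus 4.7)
The plan is to apply Theorem \ref{br-q} in its contrapositive form: it suffices to produce a single $\varepsilon>0$ together with, for every finite $\Lambda\subset L$, an element $a\in\cb_{\Lambda'}$ with $\Lambda'\cap\Lambda=\emptyset$ such that $|\ffi^{(f)}_{w_0(\a_0),\bh(\a_0)}(a)-\ffi^{(f)}_{w_0(\gamma),\bh(\gamma)}(a)|\geq\varepsilon\|a\|$. The natural candidates are the test observables $a_{\sigma_3}^{\Lambda_{N+1}}$ already introduced in Propositions \ref{evaluatestateahlpha} and \ref{evaluatestategamma}. Since $\sigma_0=\id$, such an element is effectively supported at the single vertex $(x^{(1)}_{W_N},1)\in W_{N+1}$, so for any prescribed finite $\Lambda$ one can choose $N$ large enough that this vertex lies outside $\Lambda$; moreover $\|a_{\sigma_3}^{\Lambda_{N+1}}\|=1$.

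On these observables the two states are explicitly computable from the quoted propositions: the first one vanishes identically, while the second equals $\frac{1}{\gamma_0}\langle\mathbb{A}^{N+1}h_{\gamma_0,\gamma_3},e\rangle$. My plan is therefore to analyze the asymptotic behavior of $\mathbb{A}^{N+1}h_{\gamma_0,\gamma_3}$ using the spectral decomposition in Proposition \ref{A-N}(v). Since the second eigenvalue $\lambda_2$ of $\mathbb{A}$ lies in $(0,1)$, the matrix $\mathbb{A}^{N+1}$ converges to the rank-one projection onto the eigenspace of $\lambda_1=1$ spanned by $(x_1,y_1)=(t_3+1,t_3-1)$, and a direct substitution yields
\[
\lim_{N\to\infty}\ffi^{(f)}_{w_0(\gamma),\bh(\gamma)}(a_{\sigma_3}^{\Lambda_{N+1}})=\frac{(\theta+1)x_1^2\gamma_3+(\theta-1)x_1y_1\gamma_0}{\gamma_0\bigl((\theta+1)x_1^2+(\theta-1)y_1^2\bigr)}.
\]

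The remaining step is to verify that this limit is strictly positive. Here $x_1=t_3+1>0$ and $y_1=t_3-1>0$ because $t_3>1$ by Proposition \ref{numberoffixedpointofg}, and a short calculation from \eqref{gamma03} shows $\gamma_0>0$ and $\gamma_3=\tfrac{1}{2}(A_3-B_3)\sqrt[k-1]{B_3}>0$, since $A_3-B_3$ simplifies to $2(\sqrt[k]{t_3}-1)/(\theta-1)>0$. Hence the limit is some positive constant $L$; setting $\varepsilon:=L/2$ and taking $N$ sufficiently large then provides the required violation of the quasi-equivalence criterion. The only place where some care is needed is the localization step---one has to confirm that $a_{\sigma_3}^{\Lambda_{N+1}}$, although formally written inside $\cb_{\Lambda_{N+1}}$, genuinely represents a single-site element lying in some $\cb_{\Lambda'}$ disjoint from any prescribed $\Lambda$---but beyond that the argument reduces to routine linear algebra applied to the spectral formula already furnished by Proposition \ref{A-N}.
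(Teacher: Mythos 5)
Your proposal follows essentially the same route as the paper: the same test observables $a_{\sigma_3}^{\Lambda_{N+1}}$, the same evaluations from Propositions \ref{evaluatestateahlpha} and \ref{evaluatestategamma}, and the same spectral analysis of $\mathbb{A}^{N+1}$ via Proposition \ref{A-N}(v) to extract a uniform positive lower bound violating the criterion of Theorem \ref{br-q}. Your explicit checks that the limiting constant is strictly positive (via $t_3>1$ and $\gamma_3>0$) and that the observables can be localized outside any prescribed finite $\Lambda$ are small details the paper leaves implicit, but the argument is the same.
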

\begin{proof}
Let $a_{\sigma_{3}}^{\Lambda_{N+1}}$ be an
 element given by
\eqref{elementa}. It is clear that
$\left\|a_{\sigma_{3}}^{\Lambda_{N+1}}\right\|=1,$ for all
$N\in\bn.$

If $\theta>\frac{k+1}{k-1}$, then according to Propositions \ref{evaluatestateahlpha} and \ref{evaluatestategamma}, we have
\begin{eqnarray}
\label{valueofstatealpha}\ffi^{(f)}_{w_0(\a_0),{\bh(\alpha_0)}}\left(a_{\sigma_{3}}^{\Lambda_{N+1}}\right)&=&0,\\
\label{valueofstategamma0}
\ffi^{(f)}_{w_0(\gamma),{\bh(\gamma)}}\left(a_{\sigma_{3}}^{\Lambda_{N+1}}\right)&=&
\frac{1}{\gamma_0}\Bigl\langle
\mathbb{A}^{N+1}h_{\gamma_0,\gamma_3},e\Bigr\rangle
\end{eqnarray}
for all $N\in\bn,$ here as before $e=(1,0),$
$h_{\gamma_0,\gamma_3}=(\gamma_3,\gamma_0)$  and $\mathbb{A}$ is
given by \eqref{anotherformofA}. Then from
\eqref{valueofstategamma0} with Proposition \ref{A-N} one finds
\begin{eqnarray}\label{valueofstategamma}
\ffi^{(f)}_{w_0(\gamma),{\bh(\gamma)}}\left(a_{\sigma_{3}}^{\Lambda_{N+1}}\right)=\frac{(\theta+1)x_1^2\gamma_3+(\theta-1)x_1y_1\gamma_0}
{\gamma_0((\theta+1)x_1^2+(\theta-1)y_1^2)}+\frac{(\theta-1)(y_1^2\gamma_3-x_1y_1\gamma_0)}{\gamma_0((\theta+1)x_1^2+(\theta-1)y_1^2)}
\lambda_2^{N+1}.
\end{eqnarray}
where $\lambda_2$ is an eigenvalue of $\mathbb{A}$ and $(x_1,y_1)$
is an eigenvector of the matrix $\mathbb{A}$ corresponding to the
eigenvalue $\lambda_1=1$ (see Proposition \ref{A-N}). Since
$0<\lambda_2<1$ then there exists $N_0\in \bn$ such that
\begin{eqnarray}\label{estimateofvarepsilon}
\left|\frac{(\theta+1)x_1^2\gamma_3+(\theta-1)x_1y_1\gamma_0}
{\gamma_0((\theta+1)x_1^2+(\theta-1)y_1^2)}+\frac{(\theta-1)(y_1^2\gamma_3-x_1y_1\gamma_0)}{\gamma_0((\theta+1)x_1^2+(\theta-1)y_1^2)}
\lambda_2^{N+1}\right|\geq\nonumber\\
\ge \frac{(\theta+1)x_1^2\gamma_3+(\theta-1)x_1y_1\gamma_0}
{2\gamma_0((\theta+1)x_1^2+(\theta-1)y_1^2)}
\end{eqnarray}
for all $N>N_0.$

Now putting $\varepsilon_0=\cfrac{(\theta+1)x_1^2\gamma_3+(\theta-1)x_1y_1\gamma_0}
{2\gamma_0((\theta+1)x_1^2+(\theta-1)y_1^2)}$ and using \eqref{valueofstatealpha}, \eqref{valueofstategamma}, \eqref{estimateofvarepsilon} we obtain
\begin{eqnarray*}
\left|\ffi^{(f)}_{w_0(\a_0),{\bh(\alpha_0)}}\left(a_{\sigma_{3}}^{\Lambda_{N+1}}\right)
-\ffi^{(f)}_{w_0(\gamma),{\bh(\gamma)}}\left(a_{\sigma_{3}}^{\Lambda_{N+1}}\right)\right|\ge \varepsilon_0\left\|a_{\sigma_{3}}^{\Lambda_{N+1}}\right\|,
\end{eqnarray*}
for all $N>N_0,$ which means $\ffi^{(f)}_{w_0(\a_0),{\bh(\alpha_0)}}$ and $\ffi^{(f)}_{w_0(\gamma),{\bh(\gamma)}}$ are not quasi-equivalent. This completes the proof.
\end{proof}

Analogously, one can prove the following result.

\begin{theorem}\label{alpha0beta03}
Let $\theta>\frac{k+1}{k-1}$ and
$\ffi^{(f)}_{w_0(\a_0),{\bh(\alpha_0)}},$
$\ffi^{(f)}_{w_0(\beta),{\bh(\beta)}}$ be two forward quantum
$d-$Markov chains corresponding to the model
\eqref{compactformofK<u,v>} with two boundary conditions
$\omega_0(\a_0)=\frac{1}{\a_0}\sigma_0,$ $\bh^{(x)}=\alpha_0\sigma_0^{(x)},$  $\forall x\in L$ and
$\omega_0(\beta)=\frac{1}{\beta_0}\sigma_0,$ $\bh^{(x)}=\beta_0\sigma_0^{(x)}+\beta_3\sigma_3^{(x)},$  $\forall
x\in L,$ respectively, here as before
$\alpha_0=\sqrt[k-1]{\Theta^k},$ $\beta_0$ and
$\beta_3$ are given by \eqref{beta03}. Then $\ffi^{(f)}_{w_0(\a_0),{\bh(\alpha_0)}}$ and
$\ffi^{(f)}_{w_0(\beta),{\bh(\beta)}}$ are not quasi-equivalent.
\end{theorem}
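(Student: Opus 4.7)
The proof plan is to mirror Theorem \ref{alpha0gamma03} step by step, replacing the role of $\gamma = (\gamma_0,\gamma_3)$ and the fixed point $t_3$ by $\beta = (\beta_0,\beta_3)$ and $t_2$ respectively. Concretely, I would test quasi-equivalence on the same sequence of observables $a_{\sigma_3}^{\Lambda_{N+1}}$ (with $\|a_{\sigma_3}^{\Lambda_{N+1}}\|=1$) defined in \eqref{elementa}, and then invoke Theorem \ref{br-q}.

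First, I would introduce the matrix $\mathbb{A}_\beta$ obtained from \eqref{mainformofmainMatrix} by substituting $(\gamma_0,\gamma_3)\leadsto(\beta_0,\beta_3)$, and note that since $t_2$ is also a fixed point of $g_\theta$, the identity $t_2=((\theta t_2+1)/(\theta+t_2))^k$ holds, so the alternative presentation \eqref{anotherformofA} applies with $t_3$ replaced by $t_2$. Replaying the linear algebra of Proposition \ref{A-N}, the eigenvalues become $\lambda_1=1$ and $\lambda_2^\beta = (\theta^2-1)t_2/((\theta+t_2)(\theta t_2+1))$, and I would check that $\lambda_2^\beta\in(0,1)$ (this reduces to $\theta+\theta t_2^2 + \theta t_2 + 1>0$, trivially true). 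The corresponding eigenvectors have the same form as \eqref{eigenvectorlambda1}, \eqref{eigenvectorlambda2} with $t_3$ replaced by $t_2$. An entirely parallel computation to Proposition \ref{evaluatestategamma}, using Corollary \ref{traceforgamma_0gamma_1} (which is stated for generic scalars $\gamma_0,\gamma_3$ and hence applies verbatim to $\beta_0,\beta_3$), gives
\begin{eqnarray*}
\ffi^{(f)}_{w_0(\beta),{\bh(\beta)}}\left(a_{\sigma_{3}}^{\Lambda_{N+1}}\right)=\frac{1}{\beta_0}\Bigl\langle \mathbb{A}_\beta^{N+1}h_{\beta_0,\beta_3},e\Bigr\rangle,\qquad e=(1,0),\ h_{\beta_0,\beta_3}=(\beta_3,\beta_0).
\end{eqnarray*}

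Next, using Proposition \ref{A-N}(v) adapted to $\mathbb{A}_\beta$, this inner product decomposes as a constant plus a $(\lambda_2^\beta)^{N+1}$-correction. Since $\lambda_2^\beta\in(0,1)$, the correction vanishes as $N\to\infty$ and the limit is
$$L_\beta:=\frac{(\theta+1)x_1^2\beta_3+(\theta-1)x_1y_1\beta_0}{\beta_0\bigl((\theta+1)x_1^2+(\theta-1)y_1^2\bigr)},$$
where now $(x_1,y_1)=(t_2+1,t_2-1)$. The crucial point is $L_\beta\neq 0$: substituting and using $A_2\neq B_2$ (equivalently $t_2\neq 1$), one checks that $L_\beta$ is a nonzero explicit algebraic expression in $t_2,\theta$. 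I would verify this by direct simplification, which is the main computational step; it will be nonzero because $t_2\neq 1$ is a genuine additional fixed point forced by $\theta>(k+1)/(k-1)$.

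Once $L_\beta\neq 0$ is established, choose $N_0$ so that the $(\lambda_2^\beta)^{N+1}$-term is smaller in modulus than $|L_\beta|/2$ for all $N\geq N_0$. Combining this with $\ffi^{(f)}_{w_0(\a_0),{\bh(\alpha_0)}}(a_{\sigma_3}^{\Lambda_{N+1}})=0$ from Proposition \ref{evaluatestateahlpha} yields
\begin{eqnarray*}
\left|\ffi^{(f)}_{w_0(\a_0),{\bh(\alpha_0)}}\left(a_{\sigma_3}^{\Lambda_{N+1}}\right)-\ffi^{(f)}_{w_0(\beta),{\bh(\beta)}}\left(a_{\sigma_3}^{\Lambda_{N+1}}\right)\right|\geq \varepsilon_0\bigl\|a_{\sigma_3}^{\Lambda_{N+1}}\bigr\|
\end{eqnarray*}
with $\varepsilon_0=|L_\beta|/2$, for all $N>N_0$. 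Since $a_{\sigma_3}^{\Lambda_{N+1}}$ is supported in $\Lambda_{N+1}$ with $N$ arbitrarily large, one can push this test observable into the complement of any finite volume $\Lambda$, contradicting the criterion of Theorem \ref{br-q}. The main obstacle is not conceptual (the architecture is already in place) but computational: verifying that $L_\beta\neq 0$ when $t_2<1$ (where $\beta_3<0$), since a priori cancellation between the two summands is possible; this is handled by a direct algebraic check that the two contributions have distinct orders in $(t_2-1)$.
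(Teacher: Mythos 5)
Your proposal is correct and is essentially the paper's own proof: the paper merely states that Theorem \ref{alpha0beta03} follows ``analogously'' to Theorem \ref{alpha0gamma03}, i.e., by repeating the same computation with $(\gamma_0,\gamma_3,t_3)$ replaced by $(\beta_0,\beta_3,t_2)$. The one point you flag as delicate, $L_\beta\neq 0$, is in fact immediate without any order-of-vanishing analysis: since $t_2<1$ one has $\beta_3<0$ and $x_1y_1=t_2^2-1<0$, so both summands of the numerator $(\theta+1)x_1^2\beta_3+(\theta-1)x_1y_1\beta_0$ are strictly negative and no cancellation is possible.
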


From  Theorem \ref{alpha0gamma03} we immediately get the occurrence of
the phase transition for the model \eqref{compactformofK<u,v>} on the
Cayley tree of order $k$ in the regime $\theta>\cfrac{k+1}{k-1}$. This
completely proves our main Theorem \ref{main}.

%
%


\section*{Acknowledgement} The present study have been done within
the grant ERGS13-024-0057 of Malaysian Ministry of Higher Education.
The authors (F.M. and M.S) would like to thanks to the Abdus Salam
International Centre for Theoretical Physics, Trieste, Italy for
offering a Junior Associate Scheme fellowship.


\end{document}